\newtheorem{theorem}{Theorem}
\title{
% DRAG and BR-DRAG: 
Divergence-Based Adaptive Aggregation for Byzantine Robust Federated Learning
}
\begin{document}

\author{Bingnan Xiao,
        Feng~Zhu,
        Jingjing Zhang,~\IEEEmembership{Member,~IEEE}, 
        Wei~Ni,~\IEEEmembership{Fellow,~IEEE},
        and
        Xin~Wang,~\IEEEmembership{Fellow,~IEEE} 
\thanks{
This work was supported in part by the Innovation Program of Shanghai Municipal Science and Technology Commission under Grant 25DP1500300, and in part by the National Natural Science Foundation of China under Grant 62231010. (Corresponding author: \textit{Jingjing Zhang}, \textit{Xin Wang})
}

\thanks{B. Xiao, J. Zhang, and X. Wang are with the Key Laboratory of EMW Information (MoE), College of Future Information Technology, Fudan University, Shanghai 200433, China (e-mail: 22110720061@m.fudan.edu.cn, \{jingjingzhang, xwang11\}@fudan.edu.cn).

F. Zhu is with the Department of Electrical and Computer Engineering, North Carolina State University, Raleigh, NC, USA. (email: fzhu5@ncsu.edu).

W. Ni is with the School of Engineering, Edith Cowan University, Perth, WA 6027, Australia (e-mail: wei.ni@ieee.org).

(\textit{Bingnan Xiao} and \textit{Feng Zhu} are co-first authors.)
}

}

\maketitle

\begin{abstract}

Inherent client drifts caused by data heterogeneity, as well as vulnerability to Byzantine attacks within the system, hinder effective model training and convergence in federated learning (FL).
This paper presents two new frameworks, named DiveRgence-based Adaptive aGgregation (DRAG) and Byzantine-Resilient DRAG (BR-DRAG), to mitigate client drifts and resist attacks while expediting training. DRAG designs a reference direction and a metric named divergence of degree to quantify the deviation of local updates. Accordingly, each worker can align its local update via linear calibration without extra communication cost. 
BR-DRAG refines DRAG under Byzantine attacks by maintaining a vetted root dataset at the server to produce trusted reference directions. The workers' updates can be then calibrated to mitigate divergence caused by malicious attacks.
We analytically prove that DRAG and BR-DRAG achieve fast convergence for non-convex models under partial worker participation, data heterogeneity, and Byzantine attacks. 
Experiments validate the effectiveness of DRAG and its superior performance over state-of-the-art methods in handling client drifts, and highlight the robustness of BR-DRAG in maintaining resilience against data heterogeneity and diverse Byzantine attacks.

\end{abstract}

\begin{IEEEkeywords}
    Federated learning, client drift, Byzantine attack, convergence analysis
\end{IEEEkeywords}

\section{Introduction}
% As the complexity of machine learning tasks increasingly grows and the volume of data also multiplies at an exponential rate, distributed implementation of working nodes, i.e., federated learning (FL), has gained rising attention \cite{dean2012large,smith2017federated,lim2020federated,yang2019federated}. Parameter-server (PS) setting is one of the most widely employed paradigms in FL, where the PS broadcasts the latest global model to the workers for computation and the workers send their computed local models back to the PS for aggregation and update \cite{lian2017can,li2014scaling,gupta2016model}. By developing a novel method named divergence-based adaptive aggregation (DRAG), this paper aims to deal with the following two issues. In this work, we address the challenges posed by the growing complexity and scale of machine learning tasks, with a specific focus on FL involving heterogeneous data. Our primary aim is to overcome the following two critical challenges:

With the growing complexity of machine learning (ML) tasks and exponential increase in data volumes, federated learning (FL) has received growing attention~\cite{pmlr-v54-mcmahan17a,9069945}. 
Meanwhile, FL with local stochastic gradient descent (SGD) introduces a series of challenges different from centralized learning.

One challenge is client drift due to the lack of persistent communication between workers and the parameter server (PS)~\cite{zhao2018federated}. 
% Since the imbalanced distribution of locally generated data across workers is characterized by data heterogeneity, i.e., non-independent and identically distributed (non-IID) data, their local updates without timely synchronization cause the models to diverge toward distinct local optima
Since the local data of workers can be imbalanced and heterogeneous, i.e., non-independent and identically distributed (non-IID), untimely synchronization of their local updates can cause their models to drift apart~\cite{zhao2018federated}. 
Naively averaging the models, e.g., using Federated Averaging (FedAvg)~\cite{pmlr-v54-mcmahan17a}, fails to mitigate such drifts, thwarting convergence.
% The local SGD paradigm in federated learning (FL) introduces a set of challenges not present in centralized learning systems. A prominent issue is the phenomenon of \emph{client drift}, which arises due to the lack of frequent communication between clients and the central server. In particular, when clients maintain heterogeneous data distributions, performing multiple local updates without synchronization causes their models to diverge toward distinct local optima. Consequently, naively averaging the local models at the server fails to fully mitigate this drift, thereby hindering convergence, as observed in~\cite{zhao2018federated}.

Another challenge in FL stems from susceptibility to attacks launched by malicious workers due to its distributed nature. 
These attacks, commonly known as Byzantine attacks~\cite{so2020byzantine,10574838}, disrupt the global model training by tampering with the local data or model updates of the attacked workers, which seriously deteriorate training performance and system security.

\subsection{Related Work}

Client drift in FL was first reported in~\cite{zhao2018federated}, where data heterogeneity can cause local model updates to deviate systematically from the true global gradient direction, thereby introducing bias and impeding convergence.
Client drift was further analyzed in~\cite{li2019convergence,malinovskiy2020local}.
% % The phenomenon of client drift in FL was first identified in~\cite{zhao2018federated}, which was proved to introduce a heterogeneity-induced bias and hinder convergence, as observed in~\cite{li2019convergence,malinovskiy2020local,charles2020outsized}. 
% Recent studies have devoted emerging attention to this issue. 
Some works incorporated variance reduction techniques~\cite{defazio2014saga} into local SGD, e.g.,~\cite{liang2019variance,li2019feddane,pathak2020fedsplit,mitra2021linear}. Yet, these methods require full worker participation, limiting their practicality when only partial workers are active.

Another mainstream strategy is to leverage control variates or explicit gradient constraints for update correction. In~\cite{karimireddy2020scaffold}, SCAFFOLD was designed with both local and global control variates to mitigate drifts at the workers. 
Similarly, FedDyn~\cite{acar2021federated} adopts first-order regularization terms on both the worker and PS sides.
MIME~\cite{karimireddy2020mime} utilizes momentum-based gradients to enhance the server-level optimization. As a method based on gradient constraints, FedProx was proposed in~\cite{li2020federated}, which introduced a proximal term into the local objective function to restrict the deviation of local updates from the global model. Built on FedProx, FedDC~\cite{gao2022feddc} decouples the local and global models, and integrates both control variates and regularization terms to jointly mitigate the drift between them.
FedSAM~\cite{pmlr-v162-qu22a} deploys the SAM optimizer~\cite{NEURIPS2023_e095c0a3}, which prompts workers to approximate local flat minima to enhance model generalization.
Unfortunately, these approaches incur additional complexity and memory overhead due to the stored control variates and incorporated regularization terms.
% , increasing the computational burden and limiting their scalability and deployment in resource-constrained FL systems.
% % However, despite their effectiveness, these methods often require additional memory to store control variates and may introduce hyperparameter tuning challenges, limiting their scalability and ease of deployment in practice.
% % However, these approaches often introduce additional optimization complexity and may require careful tuning of regularization parameters to maintain stability and convergence. Furthermore, the added computational burden on clients can pose practical limitations, especially in resource-constrained environments.

Recently, strategies, such as FedExP~\cite{jhunjhunwala2023fedexp} and FedACG~\cite{Kim_2024_CVPR}, have aimed to improve training convergence by refining the aggregation process. FedExP dynamically adjusts the stpdfize at the PS side with extrapolation on pseudo-gradients. FedACG stabilizes training by broadcasting a lookahead global gradient to align the local updates.
However, these methods rely on accurate estimation of global or pseudo-gradients, which can be noisy or biased in heterogeneous settings. 
% Moreover, the use of extrapolation or lookahead mechanisms may introduce instability or additional communication overhead, making them less robust or efficient under practical constraints such as limited bandwidth or asynchronous participation.

% First discovered in \cite{zhao2018federated}, the authors identify the phenomenon of ``client-drift'' in local SGD, namely that the local models of different workers will converge to local optima under heterogeneously distributed data, and the simple direct averaging of the local models will cause terrible convergence results.

%Byzantine attack, an intractable threat in conventional distributed network, is discovered to be rather efficacious against FL as well. As FL operates in a distributed manner, it is susceptible to adversarial attacks launched by malicious clients, commonly known as Byzantine attacks \cite{so2020byzantine,cao2020fltrust,bagdasaryan2020backdoor}. It is essential to develop defense mechanisms to counter such attacks and ensure the robustness and reliability of federated learning in the presence of adversarial clients. 

% \newpage
On the other hand, FL with local SGD is vulnerable to Byzantine attacks~\cite{9947081}. Common Byzantine attacks include:
\begin{itemize}
    \item Noise Injection~\cite{9614992}: The malicious workers upload random vectors (e.g., sampled from a Gaussian distribution) instead of genuine updates, which corrupt the global model with unpredictable perturbations during training.
    
    \item Sign Flipping~\cite{li2019rsa}: For the malicious workers, the sign of their local updates is flipped, leading to gradient ascent rather than descent. This actively maximizes the loss and drives the global model away from convergence.
    
    \item Label Flipping~\cite{10054157}: This attack erodes local data by reversing each data label $l$ to $L - l - 1$, where $L$ is the total number of classes, producing semantically misleading updates while remaining stealthy and API-independent.
\end{itemize}

{\color{black}
% Some more challenging adaptive Byzantine attacks are considered:
    \begin{itemize}
    \item Min-Max~\cite{shejwalkar2021manipulating}: The malicious workers upload the same crafted update with an additional term
    $\gamma p^t$, where $\gamma$ is a carefully chosen coefficient to prevent the malicious update from being the farthest from any of the other (or benign) updates.
    This makes the attack difficult to detect through distance-based robust aggregation.

    \item Min-Sum~\cite{shejwalkar2021manipulating}: {\color{black}The malicious workers upload the crafted update consistent with Min-Max, with $\gamma$ chosen under a stricter, total-distance constraint relative to the benign workers compared to Min-Max.}
    This attack is often more effective than Min-Max against distance-based robust rules, as it remains geometrically similar to benign updates while biasing the aggregation.
    \end{itemize}
}

Recent works have designed Byzantine-robust aggregation strategies to remove outliers among the workers, i.e., comparing all local updates and suppressing the anomalies~\cite{blanchard2017machine,fang2020local,yin2018byzantine}.
With techniques, e.g., Krum~\cite{blanchard2017machine} and Trimmed Mean~\cite{fang2020local}, FL systems can withstand malicious attacks to some extent. In \cite{li2019rsa}, RSA was designed to penalize the difference between local and global model parameters to defend against Byzantine workers.
These methods lack robustness against Byzantine attacks when malicious workers become widespread. 

In \cite{cao2020fltrust}, FLTrust was proposed, where the PS leverages a root dataset to guide the global update direction by evaluating a cosine similarity-based trust score for each local update. 
Apart from FLTrust, other state-of-the-art trust score-based methods are FLEST \cite{geng2023better} and FLTG \cite{wen2025fltg}.
In \cite{9721118}, the geometric median~\cite{minsker2015geometric} was adopted for FedAvg due to its strong tolerance against a high proportion of malicious workers. In~\cite{9153949}, Byrd-SAGA was developed, which augments the SAGA method~\cite{defazio2014saga} with the geometric median to achieve Byzantine robustness. In~\cite{10100920}, BROADCAST was proposed, which integrates gradient difference compression with SAGA to mitigate compression and stochastic noise, enhancing resilience to Byzantine attacks.
However, these algorithms lack convergence guarantees in Byzantine FL systems with heterogeneous data, particularly for non-convex learning models. Recently, in \cite{zuo2024byzantine}, a convergence guarantee was established for the geometric median under data heterogeneity, albeit relying on the debatable assumption of bounded gradients.
% For instance, the Krum aggregation rule \cite{blanchard2017machine} uses the one local update that has the smallest Euclidean distance to the rest of the clients to update the global model. 

% To be specific, these two attack modes try to solve an optimization problem for particular aggregation rules (Krum or Trim) that aims to change the correct global update direction the most through poisoning the local model updates of the malicious clients. The malicious client that initiates a Scaling attack trains a backdoor-poisoned dataset and scales the byzantine local model update by multiplying it with a factor much larger than 1. The reason for doing that is to magnify the influence of the poisoned local update and prevent it from being diluted via averaging.
% One of the approaches to defending against Byzantine attacks is to remove the outliers in the clients through utilizing Byzantine-robust aggregation rules, i.e., comparing the local updates of the clients and getting rid of the anomalies \cite{blanchard2017machine,fang2020local,yin2018byzantine}. 

\vspace{-3 mm}
% \subsection{Our Contributions}
\subsection{Contribution} 
This paper proposes two new aggregation frameworks, namely DiveRgence-based Adaptive aGgregation (DRAG) and Byzantine-Resilient DRAG (BR-DRAG), to tackle data heterogeneity and malicious Byzantine attacks, and accelerate model training in FL. The key contributions include:

% To address the aforementioned challenges, we propose a novel adaptive aggregation framework termed \emph{Divergence-based Adaptive Aggregation} (DRAG), in which the server dynamically adjusts each client's update based on a \emph{reference direction}. To the best of our knowledge, this is the first method that simultaneously achieves strong performance in mitigating client drift and resisting Byzantine attacks. 
\begin{itemize}
\item 
We interpret client drift as gradient misalignment and define the \textit{degree of divergence} (DoD) to quantify the misalignment between each worker’s update and a momentum-based global reference. 
We propose DRAG to steer local updates toward the reference direction to enhance the alignment in a decentralized manner.

\item 
We strengthen the resistance of DRAG against Byzantine attacks by producing trustworthy reference directions based on a small and vetted root dataset. By aligning and scaling the local updates with the reference direction, this BR-DRAG algorithm mitigates the adverse effect of malicious gradients while preserving their meaningful contribution to aggregations.
% retaining their contributions to aggregation.

% We develop BR-DRAG, which anchors the reference direction to a small, vetted PS-side root set to preserve the alignment principle under adversarial workers. By combining the reference direction with a redesigned Byzantine-stable normalization operator, BR-DRAG effectively tolerates malicious gradients and utilizes them for global updates.

\item 
We analyze the convergence of DRAG and BR-DRAG in the non-convex case, accounting for data heterogeneity, partial worker participation, and Byzantine attacks. 
% It is revealed that under both benign and adversarial settings, the proposed algorithms can achieve the convergence rate that the state-of-the-art algorithms achieve under no attacks. We also prove that BR-DRAG maintains convergence while relaxing the existing assumption on the proportion of malicious workers.
We reveal that, under both benign and adversarial settings, our algorithms attain the convergence rate of state-of-the-art methods operating without attacks. Also, BR-DRAG sustains convergence while relaxing the conventional assumption on the proportion of malicious workers.
\end{itemize}

% \begin{itemize}
%     \item We develop a new algorithm to mitigate client drifts in FL systems, which defines a new metric named degree of divergence to quantify the angular deviation between each worker's local update and a designed global reference direction. The local updates are adaptively ``dragged'' toward the reference direction via vector interpolation to alleviate data heterogeneity.
     
%     % introduce the \textbf{DRAG algorithm}, a novel adaptive aggregation framework that defines a new metric, the \emph{degree of divergence}, which quantifies the angular deviation between each client's local update and a global reference direction. The reference direction is constructed in a momentum-like fashion as a weighted sum of historical global updates. Using this metric, each client's update is adaptively ``dragged'' toward the reference direction via vector interpolation before being aggregated at the server.
    
%     \item We expand a Byzantine-resilient variant, BR-DRAG, to handle the potential adversarial workers. The reference direction and gradient modifications are redesigned to preserve the training robustness under attack.
    
%     \item For both DRAG and BR-DRAG, we provide a rigorous convergence analysis in the non-convex case, accounting for the adverse effects from data heterogeneity, partial worker participation, and Byzantine attacks. Our analysis shows that under both benign and adversarial settings, the proposed algorithms achieve superior convergence, validating their effectiveness in client-drift mitigation and Byzantine resilience.
% \end{itemize}

We evaluate the performance of DRAG and BR-DRAG on three broadly adopted datasets: EMNIST, CIFAR-10, and CIFAR-100. 
% It is demonstrated that 
DRAG effectively mitigates client drifts and accelerates convergence, and consistently outperforms state-of-the-art baselines under heterogeneity and partial worker participation. 
Under various Byzantine attacks, BR-DRAG shows its superb resilience to severely skewed data distributions and high proportions of malicious workers, compared to existing defense methods, e.g., FLTrust and geometric median. 

The rest of this paper is organized as follows. Section II describes the system model. DRAG is developed in Section III, followed by the design of BR-DRAG in Section IV. Section V analyzes the convergence of DRAG and BR-DRAG. Numerical results are provided in Section VI, followed by the conclusions in Section VII.

\textit{Notation}: Calligraphic letters represent sets; boldface lowercase letters indicate vectors; \(\mathbb{R}\) denotes the real number field; \(\mathbb{E}[\cdot]\) denotes expectation; \(\nabla\) stands for gradient. 
$\|\cdot\|_2$ is the $\ell_2$-norm; and $\langle\cdot, \cdot\rangle$ represents the inner product operation.

\section{System Model}
In this section, we elucidate the FL system and the corresponding setup under Byzantine attacks.

\vspace{-3 mm}
\subsection{Federated Learning Framework}

% {\color{red} These datasets are drawn from a global dataset $\mathcal{D}=\{z_i\}_{i=1}^N$,i.e., we have $\mathcal{D}=\bigcup_{m\in\mathcal{M}} \mathcal{D}_m$. 
% }

We study an FL system consisting of a PS and $M$ workers collected by the set $\mathcal{M}:=\{1,\cdots, M\}$; see Fig.~\ref{fig:sys model}. Each worker $m$ maintains a local dataset $\mathcal{D}_m$ with $N_m$ data samples. Let $\mathcal{D}=\bigcup_{m\in\mathcal{M}} \mathcal{D}_m$.
The objective of FL is to 
% solve an empirical minimization problem defined as
minimize the average sum of local functions of individual workers:
\begin{align} \label{problem}
    \min_{\boldsymbol{\theta}\in \mathbb{R}^d} f(\boldsymbol{\theta})&=\frac{1}{M}\sum_{m=1}^{M} F_m(\boldsymbol{\theta}), 
    % \\ \text{with}\quad F_m(\boldsymbol{\theta})&:=\mathbb{E}\left[F_m(\boldsymbol{\theta}; z_m)\right], m\in\mathcal{M},
\end{align}
where $\boldsymbol{\theta} \in \mathbb{R}^d$ is the $d$-dimensional model parameter, and $F_m(\boldsymbol{\theta}) \buildrel \Delta \over =  \mathbb{E}_{z_m \sim \mathcal{D}_m}\left[F_m(\boldsymbol{\theta}; z_m)\right]$ is the smooth local loss function concerning the dataset $\mathcal{D}_m$ of worker $m$.

\begin{figure}[!t]
	\centering
	\includegraphics[width=0.9\linewidth]{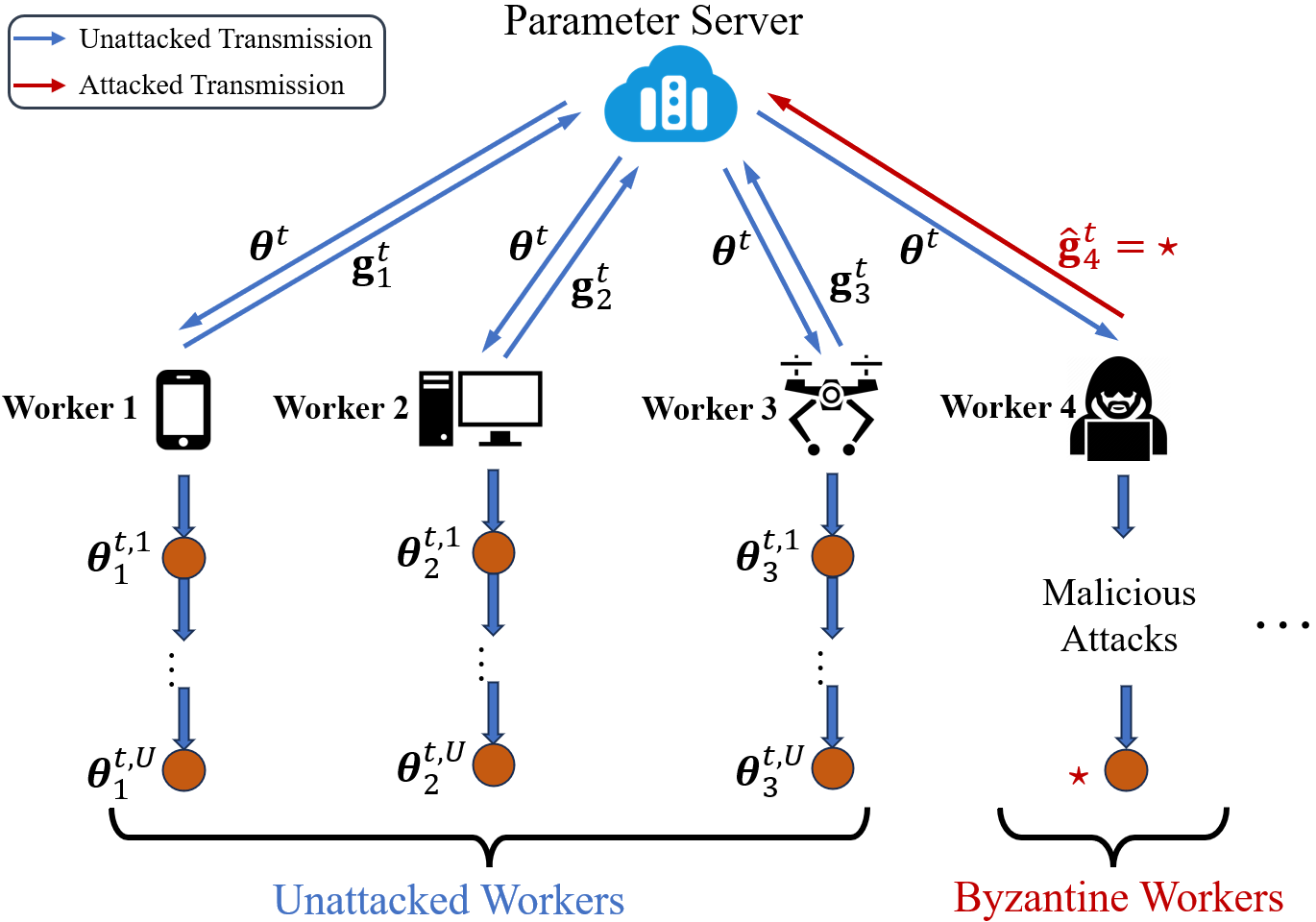}
	\caption{The architecture of a FL system with $S$ selected workers in round~$t$. For Byzantine FL systems, the attacked workers can upload arbitrary local updates to affect global model aggregation.}
	\label{fig:sys model}
\end{figure} 

% with dimension $d$ is the variable to be optimized and $\{F_m(\boldsymbol{\theta}),m\in\mathcal{M}\}$ are smooth functions, and $z_m$ is a sample randomly selected from the local dataset $\mathcal{D}_m$ of worker $m$.

With the objective function in (\ref{problem}), we attempt to solve it iteratively using local SGD. In any training round $t$, the following stpdf are executed:
\begin{itemize}
    \item Worker selection and model transmission: The PS first selects a subset $\mathcal{S}^t \subset \mathcal{M}$ with $S$ workers uniformly at random (UAR) without replacement, and broadcasts the global model $\boldsymbol{\theta}^t$ to the selected workers $m \in \mathcal{S}^t$.

    \item Local update: With $\boldsymbol{\theta}_m^{t,0}=\boldsymbol{\theta}^t$, each selected worker $m\in \mathcal{S}^t$ performs $U$ local updates via SGD updating rule:
    % \begin{align}
    % \boldsymbol{\theta}_{m}^{t,u+1}=\boldsymbol{\theta}_{m}^{t,u}-\frac{\eta}{B} \sum_{b=1}^B \nabla F(\boldsymbol{\theta}_{m}^{t,u}; z_{m,b}^{t,u}), \label{update}
    % \end{align}
    \begin{align} \label{update}
    \boldsymbol{\theta}_{m}^{t,u+1} \!\!=\! \boldsymbol{\theta}_{m}^{t,u} \!\!-\! \eta \nabla \!F_m(\boldsymbol{\theta}_{m}^{t,u}; z_{m}^{t,u}), \ u \!=\! 0,1,\!\cdots\!,U \!\!-\!\! 1,
    \end{align}
    where $\eta$ denotes the stpdfize, and $\nabla F_m(\boldsymbol{\theta}_{m}^{t,u}; z_{m}^{t,u})$ is the stochastic gradient of worker $m$ with a mini-batch $z_{m}^{t,u}$ sampled from the local dataset $\mathcal{D}_m$ with size $|z_{m}^{t,u}|=B$.
    % $B$ is the mini-batch size; $\frac{1}{B} \sum_{b=1}^B \nabla F(\boldsymbol{\theta}_{m}^{t,u}; z_{m,b}^{t,u})$ is the mini-batch gradient to be computed by worker $m$ at iteration $u$ and $z_{m,b}^{t,u}$ are drawn independently from dataset $\mathcal{D}_m$ across all workers, batches, local iterations and training rounds. 

    \item Model uploading and aggregation: After local computation, each worker $m \in \mathcal{S}^t$ sends $\mathbf{g}_m^t = \boldsymbol{\theta}_m^{t, U} - \boldsymbol{\theta}^{t}$ to the PS, which represents the discrepancy between the latest local model after $U$ local updates and the current global model received at the beginning of the $t$-th training round. 
    % Specifically, for each worker $m$ in training round $t$, we define $\mathbf{g}_m^t$ as 
    % \begin{align}
    %  \mathbf{g}_m^t = \boldsymbol{\theta}_m^{t, U} - \boldsymbol{\theta}^{t}. 
    % \end{align}
    The PS updates the global model $\boldsymbol{\theta}^{t+1}$ as 
    \begin{align} \label{aggregation}
  \boldsymbol{\theta}^{t+1}=\boldsymbol{\theta}^{t}+\frac{1}{S}\sum_{m\in\mathcal{S}^t}\mathbf{g}_m^t. 
    \end{align}    
\end{itemize}
Then, the next training round, i.e., the $(t+1)$-th round, starts. This training process repeats until convergence. 
% The setup described above is the basic procedure of FL using local SGD. The modifications to the framework of the proposed algorithm are detailed in the next section.

% The above framework is without a byzantine attack. While in the presence of byzantine attacks, In the randomly selected subset $\mathcal{S}^t$, we assume that there are $A$ malicious clients in a subset $\mathcal{A}^t$. At each training round $t$, after finishing $U$ local updates, each malicious client $m\in\mathcal{A}^t$ multiplies the local model update $\mathbf{g}_m^t$ by a scalar $p_m^t$ which can be either positive or negative and sends it to the server, i.e., reversing the update direction or scaling the module of the update. 

\subsection{Byzantine Attack}

% {\color{red} As shown in Fig.~\ref{attack}, during each training round $t$, after completing $U$ local updates, each malicious client $m\in\mathcal{A}^t$ manipulates their local model update $\boldsymbol{\theta}_m^{t, U} - \boldsymbol{\theta}^{t}$ by multiplying it with a scalar $p_m^t$, denote as $\hat{\mathbf{g}}_m^t=p_m^t(\boldsymbol{\theta}_m^{t, U} - \boldsymbol{\theta}^{t})$. Note that $p_m^t$ can be either positive or negative. } 

% Based on the FL framework, we proceed to elaborate on the scenario with Byzantine attacks that we aim to defend against in this work. 

Due to the decentralized implementation of FL, workers are vulnerable to Byzantine attacks arising from, e.g., poisoned data.
These workers, also known as Byzantine nodes, upload malicious updates to the PS in an attempt to degrade the convergence and learning performance.

Consider $A$ Byzantine nodes out of the $M$ workers. Let $\mathcal{A}$ collect all Byzantine nodes. $\mathcal{A}\subseteq \mathcal{M}$. In the $t$-th training round, $A^t$ malicious workers, collected by $\mathcal{A}^t$, are selected, with $\mathcal{A}^t \subseteq \mathcal{S}^t$ and $\mathcal{A}^t \subseteq \mathcal{A}$. Let $w^t \in [0,1]$ be the intensity level of the Byzantine attack: $A^t = w^t S$. 

To launch Byzantine attacks, any malicious worker $m \in \mathcal{A}^t$ sends 
% an arbitrary vector 
% % $\star \in \mathbb{R}^d$ 
% $\hat{\mathbf{g}}_m^t \in \mathbb{R}^d$ 
% to the PS. 
% % Let $\hat{\mathbf{g}}_m^t$ be the corresponding local update, that is, 
% % \begin{align}
% %     \hat{\mathbf{g}}_m^t = \star, \ \forall t, m \in \mathcal{A}^t.
% % \end{align}
% % % {\color{red} where $A = wS$, $w \in (0,1)$}. 
its malicious update $\hat{\mathbf{g}}_m^t \in \mathbb{R}^d$ 
% are sent
to the PS. 
% Together with the benign local updates $\mathbf{g}_m^t$, 
% reversing the update direction or scaling the magnitude of the global update: 
% The PS then aggregates the received local models as follows:
% Lastly, 
The PS aggregates the local models to update the global model:
\begin{align}
    \boldsymbol{\theta}^{t+1}=\boldsymbol{\theta}^{t}+\frac{1}{S}\big(\sum_{m\in\mathcal{A}^t}\hat{\mathbf{g}}_m^t+\sum_{m\in \mathcal{B}^t }\mathbf{g}_m^t\big), \label{byzantine aggregation}
\end{align}
where $\mathcal{B}^t = \mathcal{S}^t \setminus \mathcal{A}^t$ collects the benign workers in round~$t$.
% In this paper, we consider three typical attacks in our Byzantine FL model. ...

% {\color{red} where $\mathcal{B}^t = \mathcal{S}^t \setminus \mathcal{A}^t $.}

% Now we continue to elaborate on the byzantine attack that we aim to defend against in this work. In the randomly selected subset $\mathcal{S}^t$, we assume that there are $A$ malicious clients in a subset $\mathcal{A}^t$. At each training round $t$, after finishing $U$ local updates, each malicious client $m\in\mathcal{A}^t$ multiplies the local model update $\mathbf{g}_m^t$ by a scalar $p_m^t$ which can be either positive or negative and sends it to the server, i.e., reversing the update direction or scaling the module of the update. 

% \begin{figure}[t!]
% \centering
% \includegraphics[width=3in]{ATTACK.pdf}
% \caption{Illustration of byzantine attacks dealt with in this work, where the black color represents the normal clients and the red color represents the malicious ones.}
% \label{attack}
% \end{figure}

\section{DiveRgence-based Adaptive aGgregation}

In this section, we propose the DRAG algorithm, which overcomes client drifts (i.e., the deviation of local model updates from the true global gradient direction, as described in Section I-A) and accelerates convergence. We start with the update flow of DRAG and the overall algorithm.
% In contrast to most state-of-the-art algorithms designed to tackle client drift by employing control variates for local and global model alignment, the proposed DRAG method adopts a heuristic and intuitive manner that ``drags" each local model toward the reference direction through vector manipulation, with the extent of dragging determined by a metric we define as the ``degree of divergence". To establish the theoretical feasibility of DRAG, we conduct a rigorous convergence analysis that provides compelling evidence of its convergence properties. 

\subsection{Algorithm Overview}

% \textbf{Step 3:} The PS first calculates the reference direction $\mathbf{r}^t$ via (\ref{reference_cf}) and the degree of divergence $\lambda_m^t$ via (\ref{dod}). With these two variables, the PS then drags each $\mathbf{g}_m^t$ towards the reference direction $\mathbf{r}^t$ via \eqref{vm}, yielding the modified gradient $\mathbf{v}_m^t$. In the end, the PS aggregates the modified gradients with $\Delta^t=:\frac{1}{S}\sum_{m\in\mathcal{S}^t}\mathbf{v}_m^t$ and updates the global model with $\boldsymbol{\theta}^{t+1}=\boldsymbol{\theta}^t-\Delta^t$.

\textbf{Algorithm~\ref{Alg_DRAG}} presents the implementation details of DRAG based on a new divergence-based aggregation strategy proposed in Section III-C.
At the beginning of each training round~$t$, after the worker subset $\mathcal{S}^t$ is specified, the PS determines the \textit{reference direction} $\mathbf{r}^t$ (as proposed in Section III-B),  with \eqref{reference}, to measure the degree of local divergence and broadcasts the global model $\boldsymbol{\theta}^t$ and the reference direction $\mathbf{r}^t$ to the selected workers; see Stpdf 4--12. 

On the local side, the workers perform local updates to generate $\mathbf{g}_m^t$, and compute the DoD $\lambda_m^t$, i.e., using \eqref{dod}, to measure the deviation from the reference direction $\mathbf{r}^t$; see Stpdf 14 and 15. To cope with data heterogeneity, the workers generate the modified gradient $\mathbf{v}_m^t$ by adaptively ``dragging" their local updates toward $\mathbf{r}^t$; see Step 16. After local computation, the selected workers upload $\mathbf{v}_m^t,\,\forall m \in \mathcal{S}^t$ to the PS for global update. $\mathbf{r}^t$ and $\Delta^t$ are retained to construct the reference direction for the next round; see Step 18. 

Unlike existing FL systems tackling client drifts by employing local control variates or gradient norms~\cite{liang2019variance,li2019feddane,pathak2020fedsplit,mitra2021linear,karimireddy2020scaffold,acar2021federated,karimireddy2020mime,li2020federated,gao2022feddc,pmlr-v162-qu22a}, we design a more effective strategy that guides each local model toward the reference direction through vector manipulation, which enforces directional consistency across clients.
% Detailed analysis of the algorithmic components and their underlying insights are provided in the subsequent subsections.

\begin{algorithm}[!t]
    \caption{DRAG}
    \label{Alg_DRAG}
    \begin{algorithmic}[1]
        \STATE \textbf{Input:} $\boldsymbol{\theta}^0$, $\mathcal{M}$, $S$, $T$, $U$, $\eta$, $\alpha$, $c$;
        \FOR{$t=0,1,\dots,T-1$}
            \STATE The PS selects the worker subset $\mathcal{S}^t \subseteq \mathcal{M}$ UAR;
            \IF{$t=0$}
                \FOR{each worker $m \in \mathcal{S}^t$ in parallel}
                    \STATE Perform $U$ SGD updates by \eqref{update}, and upload $\mathbf{g}_m^t$;
                \ENDFOR
                \STATE The PS obtains the reference direction $\mathbf{r}^t$ with \eqref{reference_a};
            \ELSE
                \STATE The PS obtains $\mathbf{r}^t$ with \eqref{reference_b};
            \ENDIF
            \STATE The PS sends $\boldsymbol{\theta}^t$ and $\mathbf{r}^t$ to workers $m \in \mathcal{S}^t$
            \FOR{each worker $m \in \mathcal{S}^t$ in parallel}
                \STATE Perform $U$ local SGD updates with \eqref{update};
                \STATE Compute the DoD $\lambda_m^t$ with \eqref{dod};
                \STATE Compute the modified gradient $\mathbf{v}_m^t$ with \eqref{vm}, and upload  $\mathbf{v}_m^t$ to the PS;
            \ENDFOR
            \STATE The PS aggregates $\Delta^t$ with \eqref{drag aggregation}, retains $\mathbf{r}^t$ and $\Delta^t$, and updates $\boldsymbol{\theta}^t$ to $\boldsymbol{\theta}^{t+1}$ with \eqref{drag global update};
        \ENDFOR
    \end{algorithmic}
\end{algorithm}

% \textbf{Step 1:} At each training round $t$, the PS first broadcasts the current global parameter $\boldsymbol{\theta}^{t}$ to the workers in a random selected set $\mathcal{S}^t$. 

% \textbf{Step 2:} Each worker in $\mathcal{S}^t$ then performs $U$ local updates via (\ref{update}) and sends its the difference $\mathbf{g}_m^t= \boldsymbol{\theta}_m^{t, U} - \boldsymbol{\theta}^{t}$ back to the PS.

% \textbf{Step 3:} The PS first calculates the reference direction $\mathbf{r}^t$ via (\ref{reference_cf}) and the degree of divergence $\lambda_m^t$ via (\ref{dod}). With these two variables, the PS then drags each $\mathbf{g}_m^t$ towards the reference direction $\mathbf{r}^t$ via \eqref{vm}, yielding the modified gradient $\mathbf{v}_m^t$. In the end, the PS aggregates the modified gradients with $\Delta^t=:\frac{1}{S}\sum_{m\in\mathcal{S}^t}\mathbf{v}_m^t$ and updates the global model with $\boldsymbol{\theta}^{t+1}=\boldsymbol{\theta}^t-\Delta^t$.
\subsection{Global Reference Direction}
As mentioned in Section I, one fundamental challenge in FL is data heterogeneity, as 
% clients possess data $\mathcal{D}_m$ drawn from non-identically distributed sources. In this sense, 
the local gradients $\mathbf{g}_m^t,\, m\in \mathcal{S}^t$ tend to diverge in direction and magnitude due to discrepancies in the local objective $F_m$, leading to client drifts and global gradient deviation. 
Utilizing $\tilde{\Delta}^t=\frac{1}{S}\sum_{m\in\mathcal{S}^t}\mathbf{g}_m^t$ as the global update direction (i.e., using FedAvg, as in conventional FL algorithms) could cause unstable or biased convergence~\cite{li2019convergence}.

To address client drifts, we design a global reference direction $\mathbf{r}^t$ produced by the PS at the beginning of each training round $t$, which 
% serves as a surrogate global descent signal across rounds and adaptively aligns 
helps align the local gradients. At the beginning of training round~$t$, $\mathbf{r}^t$ is constructed as 
% that serves as an implicit surrogate for the global descent direction. Rather than directly averaging the raw local updates, our approach aligns them toward a shared historical direction constructed at the server side. This reference vector acts as a unifying anchor, guiding each client update while preserving adaptive flexibility. Its design is rooted in momentum-style aggregation and is updated progressively across training rounds.
% In contrast to most state-of-the-art algorithms designed to tackle client drift by employing control variates for local and global model alignment, the proposed DRAG method adopts a heuristic and intuitive manner that ``drags" each local model toward the reference direction through vector manipulation, with the extent of dragging determined by a metric we define as the ``degree of divergence".
% A key component of DRAG is the construction of a global reference direction $\mathbf{r}^t$, which provides a consistent update signal across rounds and helps align local gradients with the global descent trajectory. To this end, we define $\mathbf{r}^t$ at the server side using an adaptive update rule that integrates historical global gradient information.
\begin{subequations}\label{reference}
\begin{empheq}[left={\mathbf{r}^t = \left\{
\begin{aligned}}, right={\end{aligned}\right.}]{align}
& \frac{1}{S}\sum\limits_{m \in \mathcal{S}^t} \mathbf{g}_m^t, & \text{for } t = 0 ;  \label{reference_a} \\
&  (1 - \alpha)\mathbf{r}^{t-1} + \alpha\Delta^{t-1}, & \text{for } t \geq 1 , \label{reference_b}
\end{empheq}
\end{subequations}
where $\Delta^{t-1} = \frac{1}{S}\sum_{m \in \mathcal{S}^{t-1}} \mathbf{v}_m^{t-1} $ is the aggregated modified gradient in round $t-1$, and 
$\alpha \in (0,1)$ is a hyperparameter controlling the weights of historical reference directions.
% In training round $t=0$, the reference direction is defined as the average of the local SGD updates, i.e., $\mathbf{r}^0 = \frac{1}{S} \sum_{m \in \mathcal{S}^0} \mathbf{g}_m^0$, since no prior global update is available.
% In subsequent training rounds $t \geq 1$, 
$\mathbf{r}^t,\,t\geq 1$, evolves recursively as a weighted combination of the previous reference direction $\mathbf{r}^{t-1}$ and the latest global update~$\Delta^{t-1}$.

During round $t$, the selected devices generate the modified gradients $\mathbf{v}_m^t$ based on the reference direction $\mathbf{r}^t$ (see Section~III-C). $\mathbf{v}_m^t,\! \forall m \in \mathcal{S}^t$, are aggregated at the PS, given as
\begin{align} \label{drag aggregation}
    \Delta^t = \frac{1}{S}\sum_{m\in\mathcal{S}^t}\mathbf{v}_m^t.
\end{align}
The global model is updated with $ \Delta^t$, as given by
\begin{align} \label{drag global update}
    \boldsymbol{\theta}^{t+1} = \boldsymbol{\theta}^{t} + \Delta^t.
\end{align}

We further derive the closed-form expression of $\mathbf{r}^t$ by deduction, as follows: $ \forall t \geq 1$,
\begin{align} \label{reference_cf}
    \mathbf{r}^t=\frac{(1-\alpha)^t}{S}\sum_{m\in\mathcal{S}^0}\mathbf{g}_m^0+\sum_{i=0}^{t-1}\alpha(1-\alpha)^{t-i-1}\Delta^i. 
\end{align}
Here, $\mathbf{r}^t$ is an exponential moving average over all historical modified global updates~$\Delta^i,\,i=1,\cdots,t-1$, with exponentially decaying weights $(1-\alpha)^{t-i-1}$. This momentum-style construction enables $\mathbf{r}^t$ to preserve long-term memory of update history while prioritizing those more recent. As $\alpha \to 1$, $\mathbf{r}^t$ reduces to the most recent update $\Delta^{t-1}$. As $\alpha \to 0$, a smoother reference direction integrates past variations more conservatively. By tuning~$\alpha$, $\mathbf{r}^t,\,\forall t$, can flexibly balance adaptivity and stability under different data heterogeneity levels. 

\subsection{Divergence-based Local Gradient Modification}
With the reference direction $\mathbf{r}^t$, we design the adaptive modified gradients $\mathbf{v}_m^t, \forall m \in \mathcal{S}^t$ per training round $t$ to combat client drifts.
We first specify the new metric, DoD,
% To commence, we define two crucial new variables: the reference direction and the degree of divergence. These variables play a key role in the process of dragging the local gradient.
to measure the extent to which the local update $\mathbf{g}_m^t$ of each worker $m$ in round $t$ diverges from $\mathbf{r}^t$. 
The angle $\angle_{m}^t$ is introduced to capture the directional deviation of $\mathbf{g}_m^t$ from $\mathbf{r}^t$ based on the normalized cosine similarity, i.e.,  
\begin{align}
    \angle_{m}^t=\arccos{\frac{\left\langle\mathbf{g}_m^t, \mathbf{r}^t\right\rangle}{\|\mathbf{g}_m^t\|\cdot\|\mathbf{r}^t\|}},
\end{align}
where $\angle_{m}^t = 0$ indicates perfect alignment, and $\angle_{m}^t = \pi$ denotes complete opposition. 
By dividing $\angle_{m}^t$ over $\frac{\pi}{2}$ and approximating the $\arccos$ function with a linear function (e.g., $y=-\frac{\pi}{2}x+\frac{\pi}{2}$), the DoD, $\lambda_m^t$, is defined as 
% (this is one of many choices, one may also use $y=-x+\pi/2$)
\begin{align}
    \lambda_m^t=:c\big(1-\frac{\left\langle\mathbf{g}_m^t, \mathbf{r}^t\right\rangle}{\|\mathbf{g}_m^t\|\cdot \|\mathbf{r}^t\|}\big)\in[0,2c],\label{dod}
\end{align}
% where $c\in[0,1]$ is a hyper-parameter to provide the flexibility to manually adjust the metric to suit various settings. 
where $c \in [0, 1]$ is a tunable hyperparameter modulating the contribution of the divergence signal. 
By adjusting $c$, one can control the extent to which local updates are aligned with $\mathbf{r}^t$.
% , enabling adaptive calibration of the correction strength in response to varying degrees of data heterogeneity or system settings.
Moreover, $\lambda_m^t$ quantifies the geometric inconsistency between $\mathbf{g}_m^t$ and $\mathbf{r}^t$ on the unit hypersphere, offering a continuous and differentiable measure of gradient divergence, as opposed to discrete thresholds or binary decisions. 

% \noindent\textbf{Remark 1.} {\color{red}compare with contrastive learning or domain adaptation}

% It is worth noting that $\lambda_m^t$ is adaptable for each worker $m$ at each round $t$, with a larger value of $\lambda_m^t$ indicating a more significant divergence between the local gradient $\mathbf{g}_m^t$ and the reference direction $\mathbf{r}^t$.
% Notably, $\lambda_m^t = 0$ when $\mathbf{g}_m^t$ is fully aligned with $\mathbf{r}^t$, and $\lambda_m^t = 2c$ when they are exactly opposite. By translating angular misalignment into a scalar divergence measure, this design enables an interpretable and flexible mechanism to modulate the extent of local update modification in the DRAG framework.

% {\color{red}Vector Manipulation}
% \subsubsection{Vector Manipulation}
With the reference direction $\mathbf{r}^t$ and the DoD~$\lambda_m^t$, we establish the modified gradient $\mathbf{v}_m^t$, which ``drags" the local gradients $\mathbf{g}_m^t$ towards $\mathbf{r}^t$ based on $\lambda_m^t$,
% through vector manipulation, 
as given by
\begin{align}\label{vm}
    \mathbf{v}_m^t=(1-\lambda_m^t)\mathbf{g}_m^t+\lambda_m^t\frac{\|\mathbf{g}_m^t\|}{\|\mathbf{r}^t\|}\mathbf{r}^t, 
\end{align}
% where $\mathbf{v}_m^t$ is a weighted sum of the local update $\mathbf{g}_m^t$ and the normalized reference direction $\frac{\|\mathbf{g}_m^t\|}{\|\mathbf{r}^t\|}\mathbf{r}^t$, with the respective weights of $1-\lambda_m^t$ and $\lambda_m^t$.
% $0\leq \lambda_m^t \leq 2c$.
If there is significant misalignment between $\mathbf{r}^t$ and $\mathbf{g}_m^t$, the local update is dragged toward the normalized reference direction $\frac{\|\mathbf{g}_m^t\|}{\|\mathbf{r}^t\|}\mathbf{r}^t$, reducing client drifts while preserving the diversity of local information.

Fig.~\ref{vectorm} illustrates the gradient modification in DRAG based on the proposed reference direction $\mathbf{r}^t$ and DoD $\lambda_m^t$. For moderate deviation ($0<\lambda_m^t\le1$) in Fig.~\ref{vectorm}(a), the modification projects $\mathbf{g}_m^t$ toward $\mathbf{r}^t$ and enlarges the aligned component to at least $\frac{\|\mathbf{g}_m^t\|}{\|\mathbf{r}^t\|}\mathbf{r}^t$, reducing client drifts.
% while keeping the orthogonal part intact. 
For severe deviation ($1<\lambda_m^t\le2$) in Fig.~\ref{vectorm}(b), the local gradient diverges in the opposite direction to $\mathbf{r}^t$. We reverse $\mathbf{g}_m^t$ according to (\ref{vm}) to ensure adherence to the correct update direction.
Unlike conventional drift-mitigation methods~\cite{karimireddy2020scaffold,acar2021federated}, which rely on local control variates, DRAG aligns local gradients with $\mathbf{r}^t$ via \eqref{vm}, eliminating the need for extra control variables and incurring negligible memory and communication overhead.
% Moreover, the methods with local gradient-norm regularizers, e.g., the one developed in~\cite{li2020federated}, are unsuitable for handling heterogeneous data distributions, as the local gradients often diverge from the global gradient. By contrast, DRAG achieves tighter alignment with the global objective by dragging the local updates toward a trusted direction based on DoD~\eqref{dod} and accelerates convergence under severe heterogeneity.
Moreover, methods employing local gradient-norm regularizers, e.g.,\cite{li2020federated}, are ineffective under heterogeneous data distributions, since local gradients may deviate substantially from the global gradient. In contrast, DRAG attains closer alignment with the global gradient by steering local updates toward $\mathbf{r}^t$ using the DoD metric $\lambda_m^t$ in~\eqref{dod}, accelerating convergence even under severe heterogeneity.

\begin{figure}[t!]
\centering
\subfloat[$0<\lambda_m^t\leq 1$]{\includegraphics[width=1.485in]{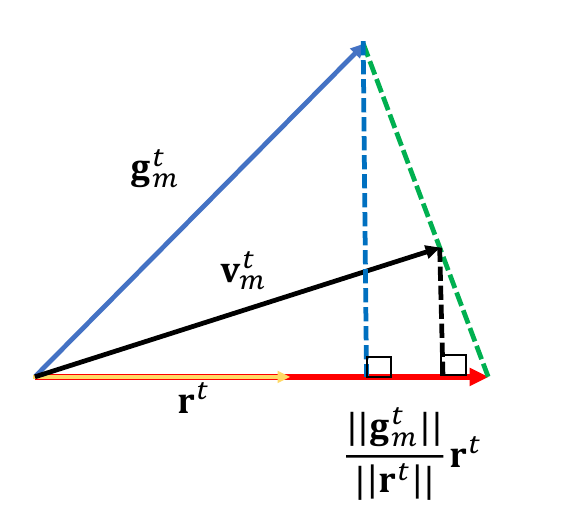}}
\subfloat[$1<\lambda_m^t$]{\includegraphics[width=1.945in]{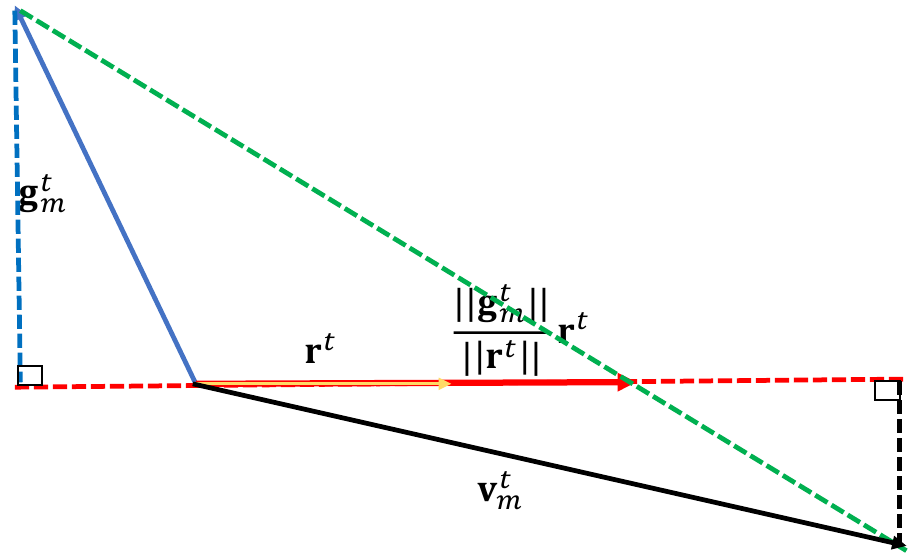}}
\caption{Illustration of vector modification of DRAG. For workers with different DoD values, the modified gradient $\mathbf{v}_m^t$ (solid black line) has a larger component on the reference direction $\mathbf{r}^{t}$ (solid yellow line) than the raw local update $\mathbf{g}_{m}^t$ (solid black line).}
\label{vectorm}
\end{figure}

\section{Byzantine-Resilient DiveRgence-based Adaptive aGgregation}
% In this subsection, we demonstrate the necessary modifications made to the DRAG algorithm to render it robust to byzantine attacks, i.e., the malicious clients can reverse the gradient direction or scale the module of it

This section enhances DRAG with resistance to Byzantine attacks, and develops BR-DRAG;
% to tackle data heterogeneity while maintaining secure model training under Byzantine attacks. 
% % We first introduce the update flow of BR-DRAG to illustrate its algorithmic structure.
see \textbf{Algorithm~\ref{Alg_BRDRAG}}.

\subsection{Algorithm Overview}

% Based on the insights of the reference direction and DoD in DRAG, t

Unlike DRAG, BR-DRAG maintains a root dataset $\mathcal{D}_{\mathrm{root}}$  for the PS to generate a trusted reference direction before model training. 
At the beginning of each training round~$t$, the worker subset $\mathcal{S}^t$ is specified. The PS computes the current reference direction $\mathbf{r}^t$ based on $\mathcal{D}_{\mathrm{root}}$ (as opposed to aggregating the local updates $\mathbf{v}_m^t,\,\forall m\in \mathcal{S}^t$, from the selected workers, as in DRAG), and sends $\mathbf{r}^t$ and the current global model $\boldsymbol{\theta}^t$ to the selected workers; see Step 4.

At the selected workers, either malicious or benign, SGD is performed to generate the local updates $\mathbf{g}_m^t,\, \forall m\in \mathcal{S}^t$; see Step~6. 
The selected workers upload $\mathbf{v}_m^t$ to the PS for global update. 
Based on the DoD, $\lambda_m^t$, in \eqref{dod new}, the PS compute the modified gradients $\mathbf{v}_m^t$ with \eqref{vm1} to counteract malicious attacks; see Step 8.
The updated global model $\boldsymbol{\theta}^{t+1}$ is used to start the next round; see Step~9.
Compared to DRAG, BR-DRAG updates both the reference direction $\mathbf{r}^t$ and the module normalization operation to tolerate malicious attacks and ensure system convergence.
% % {\color{red}Based on efficient changes to DRAG, we implement local gradient modifications to ensure a safe global model update.} 
% In the rest of this section, we provide a detailed analysis of reference direction and gradient modification in BR-DRAG under Byzantine attacks.

\begin{algorithm}[!t]
    \caption{BR-DRAG}
    \label{Alg_BRDRAG}
    \begin{algorithmic}[1]
        \STATE \textbf{Input:} $\boldsymbol{\theta}^0$, $\mathcal{M}$, $S$, $T$, $U$, $\eta$, $\alpha$, $c^t$, $\mathcal{D}_{\mathrm{root}}$;
        % \STATE{The PS obtains $\mathcal{D}_{\mathrm{root}}$ with \eqref{}}
        \FOR{$t=0,\dots,T-1$}
            \STATE The PS selects $\mathcal{S}^t \subseteq \mathcal{M}$ UAR;
            \STATE The PS obtains the reference direction $\mathbf{r}^t$ with \eqref{trust}, and sends $\boldsymbol{\theta}^t$ and $\mathbf{r}^t$ to workers $m \in \mathcal{S}^t$;
            \FOR{each worker $m \in \mathcal{S}^t$ in parallel}
                \STATE Perform $U$ local SGD updates with \eqref{update}, and upload $\mathbf{g}_m^t$ to the PS;
            \ENDFOR
            \STATE $\forall m \in \mathcal{S}^t$, the PS computes the DoD $\lambda_m^t$ and modified gradient $\mathbf{v}_m^t$ with \eqref{dod new} and \eqref{vm1};
            \STATE The PS updates $\boldsymbol{\theta}^t$ to $\boldsymbol{\theta}^{t+1}$ with \eqref{byzantine drag global update};
            % \STATE The PS aggregates $\Delta^t$ with \eqref{byzantine drag aggregation}, and updates $\boldsymbol{\theta}^t$ with \eqref{byzantine drag global update}
        \ENDFOR
    \end{algorithmic}
\end{algorithm}

\subsection{Byzantine-Resilient Global Reference Direction}

To defend against the Byzantine attacks, the key improvements of BR-DRAG over DRAG are the selection of $\mathbf{r}^t$ and the modified gradient $\mathbf{v}_m^t$.
% These attacks involve malicious clients attempting to reverse the gradient direction or scale its module. By incorporating specific modifications, DRAG is fortified to handle such adversarial behavior while maintaining the accuracy and security of the federated learning process.
Moreover, in the presence of malicious local updates (from the malicious workers), $\mathbf{r}^t$ needs to be constructed in a trustworthy way under BR-DRAG, as compared to DRAG.
% The primary improvement demanded here is the selection of the reference direction. Due to the malicious effects by the attackers, the reference direction formed by the weighted sum of all the historical global update directions will not suffice to serve as a guidance of the training process. 
{\color{black} As in FLTrust \cite{cao2020fltrust}, BR-DRAG maintains the root dataset $\mathcal{D}_{\mathrm{root}}$ to compute $\mathbf{r}^t$. Following the same construction principles as FLTrust, $\mathcal{D}_{\text{root}}$ is drawn from a vetted pool with de-duplication and outlier filtering, and its size (e.g., 3,000 samples in the default setting) is consistent with the typical choices in the literature, e.g.,~\cite{cao2020fltrust}. In this sense, BR-DRAG does not impose additional requirements on the root dataset beyond those adopted by FLTrust and other recent root-dataset-based methods, e.g., FLEST \cite{geng2023better} and      
  FLTG~\cite{wen2025fltg}. }
% As assumed in~\cite{cao2020fltrust}, $\mathcal{D}_{\text{root}}$ is trustworthy, e.g., derived from a vetted pool~\cite{northcutt2021confident} or attested endpoints operating under admission control, employing techniques such as de-duplication and outlier filtering~\cite{10.14778/3603581.3603583}.
% To hedge against residual interference like label noise, $\mathbf{r}^t$ can be computed via a robust reducer (e.g., trimmed mean~\cite{fang2020local} or geometric median~\cite{minsker2015geometric}) over per-sample gradients on $\mathcal{D}_{\text{root}}$.

{\color{black}
Notably, the root dataset $\mathcal{D}_{\mathrm{root}}$ is not a centralized collection of the clients' private local data. It is a trusted seed dataset with a small size separately maintained at the PS, as in existing Byzantine-robust FL designs, e.g., FLTrust~\cite{cao2020fltrust} and FLEST~\cite{geng2023better}.
$\mathcal{D}_{\mathrm{root}}$ can be obtained in several ways, e.g., from a small set of task-relevant samples already available at the PS~\cite{mai2023server}, institution-owned data, or a small set of samples collected with explicit consent and verified manually~\cite{cao2020fltrust}.
$\mathcal{D}_{\mathrm{root}}$ can also be periodically audited and refreshed with standard curation procedures, e.g., through manual inspection, de-duplication, suspicious sample removal, and label-quality checking~\cite{northcutt2021confidentlearning}.
Existing data-cleaning techniques can be incorporated to improve the reliability of the trusted dataset.
}

% {\color{red}In practice, the PS is likely to have access to a small set of trusted data providers, e.g., public benchmark curators. 
% The samples provided can be first collected into a vetted pool $\mathcal{D}_{\mathrm{vet}}$ and then verified with admission control, such as de-duplication and outlier filtering~\cite{10.14778/3603581.3603583}. From this vetted pool, the PS can randomly and uniformly draw part of the samples in equal proportions to construct $\mathcal{D}_{\mathrm{root}}$, and keep it fixed throughout training. 
% }

During training round $t$, based on $\mathcal{D}_{\text{root}}$, the global model $\boldsymbol{\theta}^t$ is updated for $U$ SGD iterations to obtain $\boldsymbol{\theta}^{t,U}$: 
% \begin{equation}\label{update_root}
% \boldsymbol{\theta}^{t,u \!+\! 1} \!=\! \boldsymbol{\theta}^{t,u} \!-\! \frac{\eta}{B} \sum_{b=1}^B \nabla F(\boldsymbol{\theta}^{t,u}; z_{b}^{t,u}), \ \ u \!=\! 0,...,U \!-\! 1,
% \end{equation}
\begin{equation}\label{update_root_new}
\boldsymbol{\theta}^{t,u + 1} = \boldsymbol{\theta}^{t,u} - \eta \nabla f(\boldsymbol{\theta}^{t,u}; z^{t,u}), \ \ u = 0,1,\ldots,U - 1,
\end{equation}
where $\boldsymbol{\theta}^{t,0}=\boldsymbol{\theta}^{t}$, and $z^{t,u}$ is drawn independently from $\mathcal{D}_{\text{root}}$ across all batches, local iterations and training rounds.

With $\boldsymbol{\theta}^{t,U}$ obtained from the trusted root dataset $\mathcal{D}_{\text{root}}$, 
% {\color{red}With the...} $\boldsymbol{\theta}^{t,U}$, 
the reference direction $\mathbf{r}^t$ in BR-DRAG  is designed as 
\begin{align} \label{trust}
\mathbf{r}^t = \boldsymbol{\theta}^{t,U}-\boldsymbol{\theta}^{t} = - \eta \sum_{u=0}^{U-1} \nabla f \left(\boldsymbol{\theta}^{t, u} ; z^{t, u}\right) ,
\end{align}
which, unlike \eqref{reference} in DRAG, stems directly from $\mathcal{D}_{\text{root}}$, providing a trustworthy update direction of the global objective and thus mitigating the impact of Byzantine behaviors. 

The PS aggregates the local updates of the workers as in \eqref{byzantine aggregation}. For each uploaded local update, the PS generates the modified gradient $\mathbf{v}_m^t,\,\forall m\in \mathcal{S}^t$, to resist Byzantine attacks (as will be designed in Section IV-C). The global model is updated as 
\begin{align} \label{byzantine drag global update}
    \boldsymbol{\theta}^{t+1}=\boldsymbol{\theta}^t+\Delta^t=\boldsymbol{\theta}^t+ \frac{1}{S}\big(\sum_{m\in\mathcal{A}^t}\hat{\mathbf{v}}_m^t+\sum_{m\in \mathcal{B}^t }\mathbf{v}_m^t\big),
    % = \boldsymbol{\theta}^t+\frac{1}{S} \left( \sum_{m \in \mathcal{S}^t \backslash \mathcal{A}^t} \mathbf{v}_m^t + \sum_{m \in \mathcal{A}^t} \mathbf{v}_m^t  \right).
\end{align}
% Unlike exponential moving average (EMA)-based constructions that recursively incorporate past aggregated updates—which may be contaminated by poisoned or manipulated signals—this root-data-driven approach avoids the accumulation of adversarial noise over rounds and offers stronger robustness and consistency. By leveraging freshly computed gradients from a reliable and stationary data source at each round, the server-side reference direction ensures stable convergence guidance without being distorted by untrusted client information.
where ${\mathbf{v}}_m^t$ stands for the benign modified gradient (as in DRAG), while $\hat{\mathbf{v}}_m^t$ denotes the malicious modified gradients.

% \begin{align} \label{byzantine drag aggregation}
%     \Delta^t = \frac{1}{S} \big( \sum_{m \in \mathcal{A}^t} \hat{\mathbf{g}}_m^t + \sum_{m \in \mathcal{B}^t} {\mathbf{g}}_m^t \big),    
% \end{align}

\subsection{Byzantine-Resilient Local Gradient Modification}
% \textbf{Vector manipulation:} 
% For FL systems under Byzantine attacks, the malicious workers could arbitrarily change the value of the local gradients. Therefore, the original vector modification scheme \eqref{vm} becomes inapplicable. This is because $\mathbf{v}_m^t$ is severely affected by the unbounded $\left(1-\lambda_m^t\right) \mathbf{g}_m^t$ and thus becomes unreliable.
In FL systems subjected to Byzantine attacks, malicious workers upload distorted gradients resulting from poisoned data. The vector modification in \eqref{vm} is inapplicable since $\mathbf{v}_m^t$ can be distorted by the unbounded term $\left(1-\lambda_m^t\right) \mathbf{g}_m^t$.
To alleviate the impact of malicious local updates on the global model, we design the modified gradient as \begin{align}\label{vm1}
    \mathbf{v}_m^t:=&(1-\lambda_m^t)\frac{\|\mathbf{r}^t\|}{\|\mathbf{g}_m^t\|}\mathbf{g}_m^t+\lambda_m^t\mathbf{r}^t ,\\
% \end{align}
% \begin{align} 
\label{dod new}
\text{with }  \quad  \lambda_m^t:=&c^t\big(1-\frac{\left\langle\mathbf{g}_m^t, \mathbf{r}^t\right\rangle}{\|\mathbf{g}_m^t\|\cdot \|\mathbf{r}^t\|}\big)\in[0,2c^t],
\end{align}
where $c^t \in [0,1], \forall t$ is set to be adjustable across training rounds, as described later in Section V-B.
% where $\lambda_m^t=:c^t\big(1-\frac{\left\langle\mathbf{g}_m^t, \mathbf{r}^t\right\rangle}{\|\mathbf{g}_m^t\|\cdot \|\mathbf{r}^t\|}\big)\in[0,2c^t]$, with $c^t \in [0,1], \forall t$ set to be adjustable across training rounds.

Unlike DRAG, which modifies the local updates by scaling the reference direction $\mathbf{r}^t$ to match the norm of $\mathbf{g}_m^t$, i.e.,  \eqref{vm}, 
BR-DRAG adopts a different scaling strategy that normalizes $\mathbf{g}_m^t$ to match $\| \mathbf{r}^t \|$; see \eqref{vm1}.
This modification is essential under Byzantine attacks, where the attackers may inflate the norm of their local updates in an attempt to dominate the aggregation process. By scaling $\mathbf{g}_m^t$ to $\|\mathbf{r}^t\|$, BR-DRAG mitigates the impact of malicious norm.
Moreover, with the DoD $\lambda_m^t$, BR-DRAG retains the advantages of divergence-based alignment from DRAG, providing directional correction under data heterogeneity.
With the reference direction $\mathbf{r}^t$ obtained from a trusted dataset and indicating a general update direction for the training process, as well as the module normalization operation in \eqref{vm1}, BR-DRAG can effectively combat the malicious local updates $\hat{\mathbf{g}}_m^t$.

    {\color{black}
  The state-of-the-art root-dataset-based methods, such as FLTrust \cite{cao2020fltrust}, FLEST \cite{geng2023better}, and FLTG~\cite{wen2025fltg}, all operate under a scalar weighting paradigm based on trust scores. 
  During the model aggregation, every local update received is assigned a scalar trust weight, 
  and its magnitude is then normalized to that of the PS's reference for weighted model averaging. 
With the alignment $x_m^t = \langle\mathbf{g}_m^t, \mathbf{r}^t\rangle/(\|\mathbf{g}_m^t\|\|\mathbf{r}^t\|)$ between
   the local update $\mathbf{g}_m^t$ and the trusted reference $\mathbf{r}^t$, in {FLTrust} \cite{cao2020fltrust}, the server computes the ReLU-clipped cosine similarity $\max(0, x_m^t)$ as the 
  trust score for each local update $\mathbf{g}_m^t$, normalizes $\|\mathbf{g}_m^t\|$ to $\|\mathbf{r}^t\|$, and aggregates via trust-score-weighted averaging.
      In {FLEST} \cite{geng2023better}, the server synthesizes a trust score by combining FLTrust-style cosine similarity
   with a $K$-means-based anomaly detection score via a dynamic ratio, and performs weighted averaging with the synthesized weights. 
In {FLTG}~\cite{wen2025fltg}, the server filters clients using ReLU-clipped cosine similarity, selects a dynamic   
  reference client to mitigate non-IID bias, assigns inversely angle-proportional weights, and aggregates via weighted averaging.  
Unlike these existing methods, BR-DRAG conducts divergence correction. It computes DoD
  $\lambda_m^t = c^t(1 - x_m^t)$ and corrects the local divergence before aggregation; see (15).
  % \begin{align} \tag{R.3}
  %     \mathbf{v}_m^t = (1-\lambda_m^t)\frac{\|\mathbf{r}^t\|}{\|\mathbf{g}_m^t\|}\mathbf{g}_m^t + \lambda_m^t\mathbf{r}^t.
  % \end{align}
  This allows BR-DRAG to retain the directionally aligned part of the local update while suppressing its misaligned 
  part. 

Under data heterogeneity, benign local updates can deviate from $\mathbf{r}^t$ due to non-IID data, instead of malicious  
  intent. Trust-score-based methods (e.g., FLTrust, FLEST, and FLTG) treat the deviations with mechanisms, such as  
  simply dropping the updates (e.g., $\max(0, x_m^t)=\max(0,\frac{\langle \mathbf{g}_m^t,\mathbf{r}^t}{\|\mathbf{g}_m^t\|\cdot\|\mathbf{r}^t\|})$) [30, eq.(2)], which can aggressively suppress benign but heterogeneous updates. 
  In contrast, BR-DRAG is robust to data heterogeneity, inherited from DRAG. BR-DRAG adaptively corrects the update direction through vector interpolation, preserving the useful information carried by the  
  benign update while steering it toward the global trend, and thereby achieving tighter alignment with the global objective.
    
Under Byzantine attacks with arbitrary gradient norms, the trust-score-based methods (e.g., FLTrust, FLEST, and FLTG) that normalize magnitudes independently of the reference direction $\mathbf{r}^t$ can still allow malicious directional influence to persist. By contrast, BR-DRAG is robust to Byzantine attacks, coming from the trusted reference direction $\mathbf{r}^t$ generated from $\mathcal{D}_{\mathrm{root}}$ and the attack-aware normalization/correction in (15).  BR-DRAG's vector correction, combined with the norm normalization anchored to $\|\mathbf{r}^t\|$, constrains both the direction and magnitude of each received update.     }

\section{Convergence Analysis of DRAG and BR-DRAG}
In this section, we establish the convergence rate of DRAG and BR-DRAG for  prevalent non-convex objective functions.
% , which are prevalent in extensive machine learning tasks. 
% The convergence proof of DRAG for non-convex objective functions, as in most machine learning tasks, is rigorously established in this subsection, before which we need two assumptions, as stated below.
% \subsection{Assumption}
% This starts with the following assumptions.
We start with the following standard assumptions.

\smallskip
\smallskip

\noindent\textbf{Assumption 1. (Smoothness and lower boundedness)}  \textit{Each local objective 
function $ F_m(\boldsymbol{\theta})$ is $L$-Lipschitz smooth ($L>0$):
\begin{align}
    &\left\|\nabla F_m(\boldsymbol{\theta}_1)-\nabla F_m(\boldsymbol{\theta}_2)\right\| \leq L \left\|\boldsymbol{\theta}_1-\boldsymbol{\theta}_2\right\|, \, \forall m \in \mathcal{M}.  \label{lipschitz_gradient}
\end{align}
Also, assume the objective function $ f$ is lower-bounded by $ f^*$.}
% $\forall \boldsymbol{\theta}_1, \boldsymbol{\theta}_2\in \mathbb{R}^d$. 

\smallskip
\smallskip

\noindent \textbf{Assumption 2. (Unbiasedness and bounded variance)} \textit{For each client $m$, the local gradient estimator $\nabla F_m(\boldsymbol{\theta};z)$ is unbiased, i.e.,
\begin{align}
    \mathbb{E}\left[\nabla F_m(\boldsymbol{\theta};z)\right]=\nabla F_m(\boldsymbol{\theta}), \quad \forall m \in \mathcal{M}.
\end{align}
For both the variance of the local gradient estimator and that of the local gradient from the global one, there also exist two constants $\sigma_L^2$ and $\sigma_G^2$, such that
\begin{align}
    \mathbb{E}\left[\|\nabla F_m(\boldsymbol{\theta};z)-\nabla F_m(\boldsymbol{\theta})\|^2\right]&\leq \sigma_L^2, \quad \forall m \in \mathcal{M};\\
% \end{align}
% \begin{align}
    \mathbb{E}\left[\|\nabla F_m(\boldsymbol{\theta})-\nabla f(\boldsymbol{\theta})\|^2\right]&\leq \sigma_G^2, \quad \forall m \in \mathcal{M}.
\end{align}}

\smallskip
\smallskip

\textbf{Assumptions 1} and \textbf{2} are standard assumptions considered in non-convex optimization and FL convergence analysis~\cite{liang2019variance,li2019feddane,pathak2020fedsplit,mitra2021linear,karimireddy2020scaffold,acar2021federated,karimireddy2020mime,li2020federated,gao2022feddc,pmlr-v162-qu22a}.
Here, $\sigma_G^2$ quantifies the degree of data heterogeneity, and $\sigma_G^2=0$ indicates IID local data among the workers.

\subsection{Convergence Analysis of DRAG}

Under \textbf{Assumptions 1} and \textbf{2}, we establish the upper bound for the expectation of the average squared gradient norm:
% , $\frac{1}{T}\mathbb{E}\left[\sum_{t=0}^{T-1}\left\|\nabla f(\boldsymbol{\theta}^t)\right\|^2\right]$.
\begin{theorem} \label{drag theorem}
Suppose that $ F_m, \forall m \in \mathcal{M}$ is non-convex, and the stpdfize is $\eta\leq \frac{1}{8LU}$. Given an initial model $\boldsymbol{\theta}^0$, after $T$ training rounds, DRAG guarantees
\begin{align} \label{drag convergence eq}
    \frac{1}{T}\sum_{t=0}^{T-1} \mathbb{E}\left[\|\nabla f(\boldsymbol{\theta}^t)\|^2\right] \leq \frac{f(\boldsymbol{\theta}^0)-f^*}{\gamma \eta U T} + V,
\end{align}
where $V=\frac{1}{\gamma\eta U} \big(  c \eta (2\sigma_L^2+9 \sigma_G^2)+\frac{\eta^2 U^2 L}{2}\left(\sigma_L^2+3 U \sigma_G^2\right) + (\frac{\eta^3 U^2 L^2(85 c+5)}{2}+\frac{15 \eta^4 U^3 L^3}{2})(\sigma_L^2+6 U \sigma_G^2) \big)  $, and $\gamma > 0 $.
% where $V=\frac{1}{\gamma}\Bigg[\frac{2c\sigma_L^2}{U}+\frac{\eta\sigma_L^2L}{2}+\left(\frac{3c}{U}+ \frac{\eta L}{2} \right)\big(15U^2L^2\eta^2V_1+3U\sigma_G^2)
% +\frac{5(1-c)L^2\eta^2}{2}V_1\big)\Bigg]$ with $V_1=\sigma_L^2+6U\sigma_G^2$, and $\gamma$ is a positive constant defined in \eqref{gamma}.
\end{theorem}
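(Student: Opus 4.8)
The plan is to run the standard one-round descent argument for local SGD in the non-convex regime, with the DRAG-specific correction $\mathbf{v}_m^t-\mathbf{g}_m^t$ isolated and treated as a \emph{deterministically norm-bounded} perturbation. First I would rewrite the modified gradient from \eqref{vm} as $\mathbf{v}_m^t=\mathbf{g}_m^t-\mathbf{e}_m^t$, where $\mathbf{e}_m^t:=\lambda_m^t\big(\mathbf{g}_m^t-\frac{\|\mathbf{g}_m^t\|}{\|\mathbf{r}^t\|}\mathbf{r}^t\big)$, and record the one geometric fact the whole proof hinges on: since $\frac{\|\mathbf{g}_m^t\|}{\|\mathbf{r}^t\|}\mathbf{r}^t$ has exactly the same norm as $\mathbf{g}_m^t$, the triangle inequality gives $\|\mathbf{e}_m^t\|\le 2\lambda_m^t\|\mathbf{g}_m^t\|\le 4c\|\mathbf{g}_m^t\|$ regardless of the (random, momentum-built) reference direction $\mathbf{r}^t$. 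This lets us avoid any distributional control of $\lambda_m^t$ or $\mathbf{r}^t$, at the price of a term in $V$ proportional to $c$. It also explains why $V$ does not depend on $\alpha$: the momentum structure of $\mathbf{r}^t$ is never invoked.

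Second I would establish the standard bounded-client-drift lemma: unrolling \eqref{update} under Assumptions~1--2 with $\eta\le\frac{1}{8LU}$ yields $\frac{1}{SU}\sum_{m\in\mathcal{S}^t}\sum_{u=0}^{U-1}\mathbb{E}\|\boldsymbol{\theta}_m^{t,u}-\boldsymbol{\theta}^t\|^2\le 5U\eta^2(\sigma_L^2+6U\sigma_G^2)+30U^2\eta^2\,\mathbb{E}\|\nabla f(\boldsymbol{\theta}^t)\|^2$, and, as a corollary, a bound on $\frac1S\sum_{m\in\mathcal{S}^t}\mathbb{E}\|\mathbf{g}_m^t\|^2$ of the form $O(\eta^2U^2)(\sigma_L^2+6U\sigma_G^2)+O(\eta^2U^2)\,\mathbb{E}\|\nabla f(\boldsymbol{\theta}^t)\|^2$, obtained by writing $\mathbf{g}_m^t=-\eta\sum_u\nabla F_m(\boldsymbol{\theta}_m^{t,u};z_m^{t,u})$, applying Jensen, and splitting each stochastic gradient into variance, drift-induced, heterogeneity, and $\nabla f(\boldsymbol{\theta}^t)$ components. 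This corollary is what produces the $(\sigma_L^2+6U\sigma_G^2)$ and $(2\sigma_L^2+9\sigma_G^2)$ factors appearing in $V$.

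Third comes the descent step. Apply $L$-smoothness: $\mathbb{E}[f(\boldsymbol{\theta}^{t+1})]\le\mathbb{E}[f(\boldsymbol{\theta}^t)]+\mathbb{E}\langle\nabla f(\boldsymbol{\theta}^t),\Delta^t\rangle+\frac{L}{2}\mathbb{E}\|\Delta^t\|^2$, and decompose $\Delta^t=\tilde\Delta^t-\frac1S\sum_{m\in\mathcal{S}^t}\mathbf{e}_m^t$ with $\tilde\Delta^t=\frac1S\sum_{m\in\mathcal{S}^t}\mathbf{g}_m^t$. For the inner product, the $\tilde\Delta^t$ part is the classical FedAvg term: unbiasedness, UAR sampling so $\mathbb{E}_{\mathcal{S}^t}$ collapses the client average, and the identity $2\langle a,b\rangle=\|a\|^2+\|b\|^2-\|a-b\|^2$ together with the drift lemma give roughly $-\frac{\eta U}{2}\mathbb{E}\|\nabla f(\boldsymbol{\theta}^t)\|^2$ plus an $O(\eta^3U^2L^2)(\sigma_L^2+6U\sigma_G^2)$ error; the $\mathbf{e}_m^t$ part is controlled by Cauchy--Schwarz plus Young's inequality with weight of order $\eta U$, using $\|\mathbf{e}_m^t\|\le 4c\|\mathbf{g}_m^t\|$ and the $\mathbb{E}\|\mathbf{g}_m^t\|^2$ bound, contributing a $c\eta(2\sigma_L^2+9\sigma_G^2)$-type term and a $c\,\eta U\,\mathbb{E}\|\nabla f\|^2$ term to be absorbed on the left. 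For $\mathbb{E}\|\Delta^t\|^2$ I would use $\|\Delta^t\|^2\le 2\|\tilde\Delta^t\|^2+\frac2S\sum_{m\in\mathcal{S}^t}16c^2\|\mathbf{g}_m^t\|^2$, bound $\mathbb{E}\|\tilde\Delta^t\|^2$ by the standard aggregate-variance estimate, and again invoke the $\mathbb{E}\|\mathbf{g}_m^t\|^2$ bound; this yields the $\frac{\eta^2U^2L}{2}(\sigma_L^2+3U\sigma_G^2)$ and $\frac{15\eta^4U^3L^3}{2}(\cdot)$ pieces of $V$, while the $\frac{\eta^3U^2L^2(85c+5)}{2}(\cdot)$ piece collects the leftover drift-error and $c$-proportional contributions.

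Finally I would telescope over $t=0,\dots,T-1$, divide by $T$, and choose the constant $\gamma>0$ so that the net coefficient of $\frac1T\sum_t\mathbb{E}\|\nabla f(\boldsymbol{\theta}^t)\|^2$ is exactly $\gamma\eta U$; the restriction $\eta\le\frac{1}{8LU}$ is precisely what guarantees the higher-order-in-$\eta$ negative contributions cannot overwhelm the leading $-\frac{\eta U}{2}\|\nabla f\|^2$, so $\gamma$ stays positive. Rearranging then gives \eqref{drag convergence eq} with $V$ as stated. I expect the main obstacle to be bookkeeping rather than anything conceptual: the coupling between $\lambda_m^t$ and $\mathbf{g}_m^t$, and the fact that $\mathbf{r}^t$ is a random exponential average correlated with everything through $\boldsymbol{\theta}^t$, destroy any clean unbiasedness of $\mathbf{v}_m^t$, so one must be disciplined about keeping $\mathbf{e}_m^t$ a norm-bounded nuisance term and never attempt to take its conditional expectation; a minor secondary point is handling the degenerate events $\mathbf{r}^t=\mathbf{0}$ or $\mathbf{g}_m^t=\mathbf{0}$, which are absorbed by the convention $\lambda_m^t=0$ there.
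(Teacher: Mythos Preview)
Your plan is sound and will deliver the stated bound up to constant bookkeeping, but the paper organizes the decomposition differently. You isolate the DRAG correction as $\mathbf{e}_m^t=\lambda_m^t\big(\mathbf{g}_m^t-\tfrac{\|\mathbf{g}_m^t\|}{\|\mathbf{r}^t\|}\mathbf{r}^t\big)$ and treat it as a single norm-bounded nuisance via $\|\mathbf{e}_m^t\|\le 4c\,\|\mathbf{g}_m^t\|$, so the descent step splits cleanly into ``FedAvg part $\tilde\Delta^t$'' plus ``perturbation.'' The paper instead substitutes the explicit form $\lambda_m^t=c\big(1-\tfrac{\langle\mathbf{g}_m^t,\mathbf{r}^t\rangle}{\|\mathbf{g}_m^t\|\|\mathbf{r}^t\|}\big)$ into $\mathbf{v}_m^t$ to obtain the algebraic identity
\[
\mathbf{v}_m^t=(1-c)\,\mathbf{g}_m^t+c\,\tfrac{\langle\mathbf{g}_m^t,\mathbf{r}^t\rangle}{\|\mathbf{g}_m^t\|\|\mathbf{r}^t\|}\,\mathbf{g}_m^t+c\,\tfrac{\|\mathbf{g}_m^t\|\|\mathbf{r}^t\|-\langle\mathbf{g}_m^t,\mathbf{r}^t\rangle}{\|\mathbf{r}^t\|^2}\,\mathbf{r}^t,
\]
and hence a three-term split $T_{2,1}+T_{2,2}+T_{2,3}$ of the inner product: the first is a $(1-c)$-scaled FedAvg piece (producing leading descent $-(1-c)\eta U\,\|\nabla f(\boldsymbol{\theta}^t)\|^2$), while the two $c$-weighted pieces are bounded by Cauchy--Schwarz using only that the cosine lies in $[-1,1]$. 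For $\mathbb{E}\|\Delta^t\|^2$ the paper uses $\|\mathbf{v}_m^t\|\le\|\mathbf{g}_m^t\|$ directly from the convex-combination form rather than your $2\|\tilde\Delta^t\|^2+32c^2(\cdot)$ split. Your route is more modular and makes explicit that only the \emph{norm} of $\mathbf{r}^t$ matters, never its law; the paper's route naturally ties the leading descent coefficient to $(1-c)$ and arrives at the specific numerical constants in $V$ a bit more directly. Both approaches rest on the identical client-drift lemma (your Step~2 is exactly the bound the paper imports from Reddi et al.) and the same telescoping-plus-stepsize argument to secure $\gamma>0$.
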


\begin{proof}
    % The proof is presented in Appendix C.
    See Appendix A.
\end{proof}

As revealed in \textbf{Theorem}~\textbf{\ref{drag theorem}}, DRAG achieves fast convergence for non-convex models under data heterogeneity and partial worker participation.
The non-vanishing term $V$ on the RHS of \eqref{drag convergence eq} characterizes the effects of both stochastic gradient variance ($\sigma_L^2$) and data heterogeneity ($\sigma_G^2$) across workers.
Meanwhile, the coefficients of $\sigma_L^2$ and $\sigma_G^2$  increase with the number of local updates, $U$.
To prevent these terms from growing unboundedly as $U$ increases, the stpdfize can be set to $\eta = \mathcal{O}(\frac{1}{U})$ to ensure bounded variance accumulation.
% {\color{red}which is effectively controlled by letting the stpdfize $\eta$ proportional to its inverse, i.e., $\mathcal{O}\left( \frac{1}{U} \right)$.}

% For DRAG, there exists a trade-off in selecting the hyperparameter $c$ in the DoD metric $\lambda_m^t$: values that are too small or too large can both adversely affect the performance.
In DRAG, the choice of the hyperparameter $c$ in the DoD metric $\lambda_m^t$ exhibits a trade-off, as setting $c$ too small or too large can adversely impact system performance.
% {\color{red}It is also noted that the hyperparameter $c$ in the metric of the DoD, $\lambda_m^t$, plays a non-monotonic role in $V$.}
A larger $c$ increases $\lambda_m^t$ in $\mathbf{v}_m^t$, leading to stronger alignment with the reference direction $\mathbf{r}^t$, reducing client drifts, as discussed in Section III-C.
% , and thus resulting in {\color{red}a smaller variance scaling in $V$, i.e.,} $\frac{5(1-c) \eta^3 L^2 U^2}{2}\left(\sigma_L^2+6 U \sigma_G^2\right)$.
Several terms in $V$, e.g., $c\eta(2\sigma_L^2+9\sigma_G^2)$, increase linearly with $c$, reflecting the susceptibility of $V$ to gradient variance under overly aggressive correction.

% Furthermore, the final term involving $(1 - c)$ shows that overly small $c$ also leads to excess error by weakening the alignment effect and enlarging the local update deviation.

% {\color{red}Theorem 1 clearly states that provided the objective function is $L$-smooth
% and the global and local variances of the gradients are bounded, the proposed DRAG can then finally converge at a sublinear speed. Further, by selecting a stpdfize $\eta$ with the form $\mathcal{O}(1/\sqrt{T})$, we can readily achieve a convergence rate of the level $\mathcal{O}(1/\sqrt{T})$. 
% }

% Note that the impact of the modifications of DRAG to the system setup can be detected in the expression of Theorem 1. As $c$ grows larger, the term $\frac{4c\sigma_L^2}{BU}+\frac{\eta\sigma_L^2L}{2B}+\left(\frac{5c}{U}+\frac{\eta L}{2}\right)\big(15U^2L^2\eta^2V_1+3U\sigma_G^2)$ in $V$ also increases, introducing a larger $V$; Meanwhile, as $c$ increases, the term $\frac{5(1-c)L^2\eta^2}{2}V_1$ decreases, which implies that the hyper-parameter $c$ has to be selected carefully to strike a balance.

\subsection{Convergence Analysis of BR-DRAG}
Next, we prove the convergence of BR-DRAG. The following assumptions,  characterizing the local update properties of the benign workers, are established.

\smallskip
\smallskip

\noindent\textbf{Assumption 3. (Gradient boundedness)} \textit{For the benign workers, the ratio between the norms of the global reference direction $\mathbf{r}^t$ and the local gradient $\mathbf{g}_m^t$ is bounded, i.e.,
% (Bounded Gradient-norm Similarity)
\begin{align}
    \rho_m^t = \frac{\left\|\mathbf{r}^t\right\|}{\left\|\mathbf{g}_m^t\right\|} \in [p ,q], \quad \forall m \in  \mathcal{B}^t.
\end{align}
where $0 < p < q$ are constants.}

\smallskip
\smallskip

\noindent\textbf{Assumption 4. (Bounded divergence)} \textit{For the benign workers, define the cosine similarities $ y_m^t = \frac{\langle\mathbf{g}_m^t, \mathbf{r}^t\rangle}{\|\mathbf{g}_m^t\|\|\mathbf{r}^t\|} $.
The weighted aggregations of $ y_m^t $ is bounded, $\forall t$, i.e.,
% The cosine similarity between the reference direction $\mathbf{r}^t$ and the local gradients $\mathbf{g}_m^t$ is bounded, i.e.,
% In each round $t$, the degree of divergence between the averaged updated gradients from the attacked/unattacked workers and the reference direction is bounded, i.e.,
% \begin{align} \label{assumption 4-1}
%     x^t& = \frac{1}{S}\sum_{m \in \mathcal{A}^t} x_m^t \in [-w^t,0] ;\\
% % \end{align}
% % \begin{align} 
% \label{assumption 4-2}
%     \rho^t &= \frac{1}{S}\sum_{m \in \mathcal{B}^t} y_m^t \rho_m^t \in [0, (1-w^t)q] .
%     % \beta^t = \frac{1}{S}\sum_{m \in \mathcal{B}^t} \beta_m^t \in (0, 1-w^t], \ \forall m \in  \mathcal{B}^t.
% \end{align}}
\begin{align} 
    \rho^t &= \frac{1}{S}\sum_{m \in \mathcal{B}^t} y_m^t \rho_m^t \in [0, (1-w^t)q]; \label{assumption 4-2} 
    % \\ x^t& = \frac{1}{S}\sum_{m \in \mathcal{A}^t} x_m^t \in [-w^t,0] . \label{assumption 4-1}
\end{align}
where $w^t \in [0,1]$ denotes the intensity level of the Byzantine attack defined in Section II-B.}

\smallskip
\smallskip

Unlike the typical bounded gradient assumptions, e.g.,~\cite[Asm. 4]{li2019convergence} and~\cite[Asm. 5]{zuo2024byzantine}, \textbf{Assumption~3} only requires a bounded ratio between the trusted reference direction $\mathbf{r}^t$ and the benign local gradients $\mathbf{g}_m^t, \forall m$, without restricting their individual magnitudes, relaxing constraints on modeling data heterogeneity, and accommodating diverse data distributions, compared to the typical requirement of uniform gradient norms (i.e., $\|\mathbf{g}_m^t\|\leq G$~\cite{li2019convergence,zuo2024byzantine}). This relaxation better reflects practical federated settings with non-IID data.

% {\color{red}Specifically, .... (Please explain how to benefit from the flexibility of data heterogeneity modeling).}
% In \textbf{Assumption 4}, \eqref{assumption 4-2} constrains the aggregated alignment of the benign workers’ gradients with the reference direction, ensuring that a sufficient proportion of participating workers provide updates consistent with the trusted signal.
% By contrast, \eqref{assumption 4-1} captures the adversarial nature of malicious updates by enforcing a negative or non-positive average cosine similarity with respect to the reference direction, where corrupted gradients tend to reverse or mislead the optimization direction.
In \textbf{Assumption~4}, \eqref{assumption 4-2} enforces that the weighted average of the cosine similarities between the benign workers’ gradients and trusted reference direction $\mathbf{r}^t$, scaled by their gradient-norm ratios $\rho_m^t$, remains within~$[0, (1-w^t)q]$. 
In other words, the benign updates are assumed to be positively aligned with $\mathbf{r}^t$ and have bounded relative magnitudes. The benign contribution to the aggregation maintains a consistent descent direction without being dominated by excessively large updates. 

Interestingly, we do not assume explicit bounds or range for the cosine similarities of the malicious workers $m \in \mathcal{A}^t$. The only assumption about the Byzantine attacks is the intensity level of attacks, i.e., $w^t$ in \eqref{assumption 4-2}. This substantially relaxes the reliance of our analysis and findings on assumptions of a-priori knowledge about the system.
% Conversely, \eqref{assumption 4-1} stipulates that the average cosine similarity of the malicious workers to the reference direction $\mathbf{r}^t$ lies within~$[-w^t, 0]$, explicitly capturing their adversarial tendency to produce updates that are orthogonal or oppositely oriented to $\mathbf{r}^t$. 
% Together, these bounds \eqref{assumption 4-1} and \eqref{assumption 4-2} provide a clear separation between benign and malicious gradient behaviors.

Under \textbf{Assumptions 1}–\textbf{4}, the following theorem establishes.
% \begin{theorem} \label{brdrag theorem}
% Suppose that $ F_m, \forall m \in \mathcal{M}$ is non-convex, and the stpdfize satisfies $\eta \leq \frac{(1-\tilde{w})(1-\tilde{c})p}{32U+2U L^2(p+2q^2)}$. Given an initial model $\boldsymbol{\theta}^0$, after $T$ training rounds, BR-DRAG guarantees
% \begin{align} \label{brdrag theorem convergence}
%     \frac{1}{T}\sum_{t=0}^{T-1} \mathbb{E}\left[\|\nabla f(\boldsymbol{\theta}^t)\|^2\right] \leq \frac{f(\boldsymbol{\theta}^0)-f^*}{\chi \eta U T} + W,
% \end{align}
% where $\chi>0$, $\tilde{c} = \max _t \frac{w^t}{w^t - x^t}$, $\tilde{w}=\max _t w^t$, and $W =   \frac{2\eta \sigma_L^2}{\chi}\left(\frac{\left(1+\tilde{w}\right) \tilde{c} \eta U L^2}{2}+\frac{\eta U L^2 q^2}{2\left(1-\tilde{c}\right) p}+1\right) $.
% \end{theorem}

\begin{theorem} \label{brdrag theorem}
Suppose that $ F_m, \forall m \in \mathcal{M}$ is non-convex.
Given an initial model $\boldsymbol{\theta}^0$, after $T$ training rounds,
\begin{itemize}
    \item if $c^t = \frac{w^t}{w^t - x^t} \in [\frac{1}{2},1]$ and the stpdfize satisfies $\eta \leq \frac{(1-\tilde{w})(1-\tilde{c})p}{32U+2U L^2(p+2q^2)}$, BR-DRAG guarantees
    \begin{align} \label{brdrag theorem convergence}
    \frac{1}{T}\sum_{t=0}^{T-1} \mathbb{E}\left[\|\nabla f(\boldsymbol{\theta}^t)\|^2\right] \leq \frac{f(\boldsymbol{\theta}^0)-f^*}{\chi \eta U T} + W,
    \end{align}
    where $\chi>0$, $\tilde{c} = \max _t \frac{w^t}{w^t - x^t}$, $\tilde{w}=\max _t w^t$, and $W =   \frac{2\eta \sigma_L^2}{\chi}\left(\frac{\left(1+\tilde{w}\right) \tilde{c} \eta U L^2}{2}+\frac{\eta U L^2 q^2}{2\left(1-\tilde{c}\right) p}+1\right) $.

    \item if $\eta$ and $c^t$ satisfy $\eta \leq \frac{2c^t(1-c^t)(1-w^t)(1-w^t-y^t)p}{32 p U+ c^t U L^2(p^2+q^2)}$ and $\left( (\rho^t)^2 \!-\! 7w^t \!+\! 8x^t \!-\! 1 \right)(c^t)^2 \!+\! (15w^t \!+\! 1)c^t  \geq \frac{8(w^t)^2}{w^t-x^t}, \forall t$, respectively, BR-DRAG guarantees
    \begin{align} \label{brdrag theorem convergence}
    \frac{1}{T}\sum_{t=0}^{T-1} \mathbb{E}\left[\|\nabla f(\boldsymbol{\theta}^t)\|^2\right] \leq \frac{f(\boldsymbol{\theta}^0)-f^*}{\chi \eta U T} + W,
    \end{align}
    where $\chi>0$, $\tilde{c}=\max_t c^t$, $\tilde{w}=\max _t w^t$, $x^t = \frac{1}{S}\sum_{m \in \mathcal{A}^t} x_m^t$ with $x_m^t=\frac{\langle\mathbf{g}_m^t, \mathbf{r}^t\rangle}{\|\mathbf{g}_m^t\|\|\mathbf{r}^t\|}$, $y^t=\frac{1}{S} \sum_{m \in \mathcal{B}^t } y_m^t$, and $W =   \frac{2\eta \sigma_L^2}{\chi}\left(\frac{\left(1+\tilde{w}\right) \tilde{c} \eta U L^2}{2}+\frac{\eta U L^2 q^2}{2\left(1-\tilde{c}\right) p}+H+1\right) $ with $H = \max_{t} \frac{\| w^t(1-c^t)+c^t x^t \|}{2c^t\eta U (w^t-x^t) } $.
\end{itemize}
\end{theorem}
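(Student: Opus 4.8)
The plan is to follow the standard descent-lemma route for non-convex FL, but with the aggregation $\Delta^t$ replaced by the Byzantine-resilient modified gradients of \eqref{vm1}. Starting from $L$-smoothness (\textbf{Assumption~1}) applied to $\boldsymbol{\theta}^{t+1}=\boldsymbol{\theta}^t+\Delta^t$, I would write
\begin{align*}
\mathbb{E}[f(\boldsymbol{\theta}^{t+1})]\le f(\boldsymbol{\theta}^t)+\mathbb{E}\langle\nabla f(\boldsymbol{\theta}^t),\Delta^t\rangle+\tfrac{L}{2}\mathbb{E}\|\Delta^t\|^2,
\end{align*}
and then decompose $\Delta^t=\frac{1}{S}\big(\sum_{m\in\mathcal{A}^t}\hat{\mathbf v}_m^t+\sum_{m\in\mathcal{B}^t}\mathbf v_m^t\big)$. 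The crucial observation is that each benign $\mathbf v_m^t=(1-\lambda_m^t)\rho_m^t\mathbf g_m^t+\lambda_m^t\mathbf r^t$ has both its norm controlled by $\|\mathbf r^t\|$ (via $\rho_m^t\in[p,q]$ and $\lambda_m^t\in[0,2c^t]$, \textbf{Assumption~3}) and a guaranteed positive inner product with $\mathbf r^t$, since $\langle\mathbf v_m^t,\mathbf r^t\rangle=\big((1-\lambda_m^t)y_m^t\rho_m^t+\lambda_m^t\big)\|\mathbf r^t\|^2$; summing over $\mathcal{B}^t$ and invoking \eqref{assumption 4-2} gives a clean lower bound on $\langle\Delta^t,\mathbf r^t\rangle$ that survives even after the malicious terms are bounded only through $w^t$ and $x^t$. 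For the malicious block, I would use that $\|\hat{\mathbf v}_m^t\|$ is likewise tied to $\|\mathbf r^t\|$ (the normalization in \eqref{vm1} forces $\|(1-\lambda_m^t)\rho_m^t\mathbf g_m^t\|=(1-\lambda_m^t)\|\mathbf r^t\|$ regardless of how the attacker inflates $\|\mathbf g_m^t\|$), so the adversary cannot dominate the aggregate.

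Next I would connect $\mathbf r^t$ back to the true gradient. Since $\mathbf r^t=-\eta\sum_{u=0}^{U-1}\nabla f(\boldsymbol{\theta}^{t,u};z^{t,u})$ from \eqref{trust}, taking expectations gives $\mathbb{E}[\mathbf r^t\mid\boldsymbol{\theta}^t]=-\eta\sum_u\mathbb{E}[\nabla f(\boldsymbol{\theta}^{t,u})]$, and a standard client-drift bound using $L$-smoothness and \textbf{Assumption~2} controls $\mathbb{E}\|\nabla f(\boldsymbol{\theta}^{t,u})-\nabla f(\boldsymbol{\theta}^t)\|^2$ by $\mathcal{O}(\eta^2 U^2 L^2)\|\nabla f(\boldsymbol{\theta}^t)\|^2+\mathcal{O}(\eta^2 U^2 L^2\sigma_L^2)$ — here $\sigma_G^2$ does not enter because the root iterates all use $f$, not a single $F_m$. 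This yields $\langle\nabla f(\boldsymbol{\theta}^t),\mathbb{E}[\mathbf r^t]\rangle\le -\tfrac{\eta U}{2}\|\nabla f(\boldsymbol{\theta}^t)\|^2+\text{(variance)}$ for small enough $\eta$, plus $\|\mathbf r^t\|^2$ of order $\eta^2 U^2\big(\|\nabla f(\boldsymbol{\theta}^t)\|^2+\sigma_L^2\big)$. Substituting the benign/malicious inner-product and norm bounds into the descent inequality produces, after collecting terms, a coefficient on $-\|\nabla f(\boldsymbol{\theta}^t)\|^2$ of the form $c^t$-and-$w^t$-dependent times $\eta U$; the two hypotheses on $c^t$ (namely $c^t=\frac{w^t}{w^t-x^t}$ in the first bullet, and the quadratic inequality $\big((\rho^t)^2-7w^t+8x^t-1\big)(c^t)^2+(15w^t+1)c^t\ge\frac{8(w^t)^2}{w^t-x^t}$ in the second) are precisely what is needed to keep this coefficient bounded below by a positive $\chi>0$ while cancelling the otherwise-uncontrolled malicious cross terms $w^t(1-c^t)+c^t x^t$ (the residue of which becomes the constant $H$ in $W$ for the second case). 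The stepsize conditions $\eta\le\frac{(1-\tilde w)(1-\tilde c)p}{32U+2UL^2(p+2q^2)}$ (resp. the $c^t$-dependent bound) ensure the $\tfrac{L}{2}\|\Delta^t\|^2$ and client-drift contributions are absorbed into this coefficient.

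Finally I would telescope: summing the per-round inequality over $t=0,\dots,T-1$, dividing by $\chi\eta U T$, and using $f(\boldsymbol{\theta}^T)\ge f^*$ (\textbf{Assumption~1}) gives $\frac1T\sum_t\mathbb{E}\|\nabla f(\boldsymbol{\theta}^t)\|^2\le\frac{f(\boldsymbol{\theta}^0)-f^*}{\chi\eta U T}+W$ with $W$ collecting all the $\sigma_L^2$-proportional residuals; the stated form of $W$ (with the factors $\tfrac{(1+\tilde w)\tilde c\eta U L^2}{2}$, $\tfrac{\eta U L^2 q^2}{2(1-\tilde c)p}$, the $+H$ in case two, and $+1$) corresponds exactly to the drift term, the $\|\mathbf r^t\|^2$ variance term amplified by the worst-case norm ratio $q^2/((1-\tilde c)p)$, the malicious-residue term, and the direct stochastic-gradient term, respectively. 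The main obstacle, I expect, is the bookkeeping in the malicious block: because no bound on the malicious cosine similarities is assumed beyond $w^t$ and $x^t$, one must show that the adversary's best response — maximizing $\|\hat{\mathbf v}_m^t\|$ and anti-aligning $\hat{\mathbf v}_m^t$ with $\nabla f(\boldsymbol{\theta}^t)$ — is still dominated, and extracting the exact algebraic conditions on $c^t$ that make the positive-descent coefficient survive (and identifying when the clean case $c^t=\frac{w^t}{w^t-x^t}$ suffices versus when the extra $H$ term is unavoidable) is the delicate part; everything else is routine smoothness/variance estimation.
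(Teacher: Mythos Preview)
Your proposal is essentially correct and identifies all the key ingredients the paper uses: the norm control $\|\mathbf{v}_m^t\|\le\|\mathbf{r}^t\|$, the root-dataset drift bound $\mathbb{E}\|\mathbf{r}^t\|^2\le 8\eta^2U^2\|\nabla f(\boldsymbol{\theta}^t)\|^2+2\eta^2U\sigma_L^2$ (with no $\sigma_G^2$), the algebraic role of $c^t=\tfrac{w^t}{w^t-x^t}$ in annihilating the malicious residue $w^t(1-c^t)+c^tx^t$, and the telescoping finish.

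The one genuine difference is the intermediary you propose. You plan to lower-bound $\langle\Delta^t,\mathbf{r}^t\rangle$ and then chain with $\langle\nabla f(\boldsymbol{\theta}^t),\mathbf{r}^t\rangle$; the paper instead rewrites each modified gradient as $\mathbf{v}_m^t=h_m^t\mathbf{r}^t+(1-h_m^t)\|\mathbf{r}^t\|\tfrac{\mathbf{g}_m^t}{\|\mathbf{g}_m^t\|}$ with $h_m^t=c^t(1-x_m^t)$ and bounds $\langle\nabla f(\boldsymbol{\theta}^t),\mathbf{v}_m^t\rangle$ piece-by-piece. For the $\mathbf{r}^t$-pieces it adds and subtracts $\eta U\nabla f(\boldsymbol{\theta}^t)$ (your same drift trick), but for the \emph{benign} direction-pieces $(1-l_m^t)\rho_m^t\mathbf{g}_m^t$ it does \emph{not} route through $\mathbf{r}^t$: since these are honest gradient sums, it applies $-2\langle x,y\rangle\le -\|x\|^2+\|x-y\|^2$ directly against $\nabla f(\boldsymbol{\theta}^t)$ to extract an additional negative term $-\tfrac{b^t\eta U}{2}\|\nabla f(\boldsymbol{\theta}^t)\|^2$ with $b^t=\tfrac{1}{S}\sum_{m\in\mathcal{B}^t}(1-l_m^t)\rho_m^t$. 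This is where the factor $(1-\tilde c)p$ in the denominator of $W$ and in the stepsize condition actually originates. Your $\langle\Delta^t,\mathbf{r}^t\rangle$ route can be made to work, but chaining a lower bound on $\langle\Delta^t,\mathbf{r}^t\rangle$ with a sign-indefinite (random) $\langle\nabla f(\boldsymbol{\theta}^t),\mathbf{r}^t\rangle$ requires an extra orthogonal-component argument and is unlikely to reproduce the exact $D_1^t,D_2^t$ and hence the stated stepsize and $c^t$ conditions; the paper's direct decomposition is what pins those constants down.
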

\begin{proof}
    See Appendix B.
\end{proof}

According to \textbf{Theorem}~\textbf{\ref{brdrag theorem}}, BR-DRAG converges under Byzantine attacks in the presence of heterogeneity and partial aggregation, unlike conventional methods, e.g,~\cite{li2019rsa,fang2020local,blanchard2017machine}.
The constant $W$ specifies the non-vanishing error caused by gradient variance, data heterogeneity, and adversarial attacks.
In each training round $t$, the malicious updates induce the increased errors $ \delta^t =  { c^t  (w^t-x^t) \eta U}$ and $\kappa^t = c^t (1-w^t-y^t)\eta U$ in \eqref{one round convergence new}, impeding convergence. 
Based on \eqref{vm1}, BR-DRAG projects the malicious vectors on a bounded space $\mathbf{r}^t$, i.e., the trusted reference direction, tolerating the malicious updates.

The resistance of BR-DRAG to malicious worker participation is achieved by reducing the stpdfize $\eta$ and enhancing the gradient alignment degree. As the proportion of malicious workers increases (i.e., $w^t$ grows), the hyperparameter $c^t$ needs to increase, which in turn reduces the effective stpdfize. 
% This adjustment sacrifices the convergence speed to ensure that the conditions for stability and robustness are still satisfied.
This adjustment slows convergence to maintain the conditions necessary for stability and robustness.
An excessively large $c^t$ amplifies the variance-related components, e.g., $\frac{\tilde{c}(1+\tilde{w})L^2}{\chi}\eta^2 U \sigma_L^2$ in $W$ in BR-DRAG, while an excessively small $c^t$ may degrade gradient alignment in \eqref{vm1}, heightening the vulnerability to adversarial gradients. To this end, $c^t$ should be calibrated meticulously to suppress adversarial deviation while controlling gradient variance in BR-DRAG.

Unlike a majority of the existing Byzantine-resilient FL algorithms~\cite{li2019rsa,cao2020fltrust,9721118,9153949,10100920,zuo2024byzantine}, which typically require the condition of $A^t < \frac{S}{2}, \forall t$,
% the number of malicious workers be less than half of the participating set (i.e., $A^t < S/2$),
BR-DRAG allows an arbitrary proportion of malicious workers in each round. 
This is enabled by the design of the reference direction in \eqref{trust} and gradient modification in \eqref{vm1}, suppressing the effects of malicious gradient aggregation.
Moreover, by plugging $\eta = \mathcal{O}(\frac{1}{\sqrt{T}})$ and $U = \mathcal{O}(1)$ into \eqref{brdrag theorem convergence}, BR-DRAG maintains the same convergence rate of $\mathcal{O}(\frac{1}{\sqrt{T}})$ as DRAG, which is consistent with FL algorithms without Byzantine attacks~\cite{li2020federated}.

% The algorithm ensures that even in rounds dominated by malicious clients ($w^t \rightarrow 1$), the trusted reference direction remains intact and benign updates are adaptively amplified to preserve global convergence. 

\section{Experiments and Results}

% In this section, we illustrate the advantages of DRAG by comparing its performance against state-of-the-art algorithms including SCAFFOLD \cite{karimireddy2020scaffold}, AdaBest \cite{varno2022adabest}, FedProx \cite{li2020federated}, and the vanilla FedAvg. These algorithms are tested on the EMNIST dataset and the more challenging CIFAR-10 dataset with partial and full worker participation to provide a more comprehensive perspective into the performances of these algorithms. We also on byzantine attack on these two datasets. 

% Our evaluation is carried out on both the EMNIST dataset and the CIFAR-10 dataset, with scenarios of both partial and full worker participation. Additionally, we assess the robustness of the algorithms against byzantine attacks on these two datasets. 

This section experimentally evaluates DRAG and BR-DRAG compared to state-of-the-art FL and Byzantine-resilient FL algorithms, with the following datasets and models:
\begin{itemize}
    \item EMNIST dataset~\cite{cohen2017emnist}: For the
     digit and letter classification task, we utilize the ``balanced" data split containing 131,600 samples with 47 classes. The designed Convolutional Neural Network (CNN) comprises two $5 \times 5$ convolutional layers and two fully connected layers, with a final output layer for 47-class classification.
     % There are in total six different splits in the dataset and we used the ``balanced'' data split in this work with 47 balanced classes of data. Each piece of data is a $28\times 28$ grey image and there are 131,600 characters in the 47 balanced classes.
     % The EMNIST dataset is an extended version of the famous MNIST dataset. Besides the handwritten digits from the MNIST dataset, the EMNIST dataset also includes handwritten letters. 
    
    \item CIFAR-10 dataset~\cite{krizhevsky2009learning}: 
    For the image classification task, the CIFAR-10 dataset is composed of 60,000 color images in a total of 10 classes. We employ a CNN with two $5 \times 5$ padded convolutional layers to extract hierarchical features and predict across 10 classes.
    % with 6000 images in each class. There are 50000 images for training and 10000 for testing. Since it is a color image dataset, it is supposed to be more challenging than the EMNIST dataset.

    \item CIFAR-100 dataset~\cite{krizhevsky2009learning}:
    For the fine-grained image classification task, the CIFAR-100 dataset contains 60,000 color images categorized into 100 classes.
    We adopt a deeper CNN comprising three $3 \times 3$ padded convolutional layers followed by max pooling operations, and two fully connected layers, culminating in a 100-class output layer.
    {\color{black}
    We also consider the WRN-28-4 model~\cite{zagoruyko2016wide}, which is a derivative of the standard ResNet architecture. 
        The WRN-28-4 model has 28 convolutional layers and a widening factor of 4 to expand base feature channels, while incorporating Batch Normalization within its widened residual blocks to stabilize the optimization process.
    }
\end{itemize}
Consider an FL system with $40$ workers, and each worker performs $U=5$ local updates in each training round.
To characterize data heterogeneity, we utilize the Dirichlet distribution with parameter $\beta$~\cite{pmlr-v162-qu22a,jhunjhunwala2023fedexp,Kim_2024_CVPR}. Specifically, we sample $p_k \sim \operatorname{Dir}(\beta)$ and allocate a proportion $p_{k,j}$ of image class $k$ instances to device $j$. A smaller $\beta$ represents a more imbalanced data distribution. Here, we choose $\beta=0.1$ and $0.5$ to represent strong and moderate data heterogeneity, respectively.
By default, we set the stpdfize $\eta = 0.01$ and the batch size is 10.
To simulate partial worker participation, we set the worker subset size to $S=10$ per round. The FL environment is set up in PyTorch 2.5.1 on a Windows 10 operating system with one 4090 GPU.

% As in \cite{cao2020fltrust}, we define the heterogeneity of data as follows. Any training data with label $\ell$ is assigned to client $\ell \bmod M$ with probability $q$ and to any other client with probability $\frac{1-q}{M-1}$. In dealing with client-drift, the high heterogeneous data distribution is implemented by setting $q=1$. In countering the byzantine attack, we define low data heterogeneity as $q=\frac{1}{M}$ and high data heterogeneity as $q=1$.

\subsection{Evaluation of DRAG}

We first compare DRAG with the following benchmarks, which have been successfully deployed in FL systems.
% by conducting a comprehensive performance comparison against state-of-the-art algorithms, including SCAFFOLD \cite{karimireddy2020scaffold}, AdaBest \cite{varno2022adabest}, FedProx \cite{li2020federated}, and the vanilla FedAvg \cite{pmlr-v54-mcmahan17a}. 
\begin{itemize}
    \item FedAvg~\cite{pmlr-v54-mcmahan17a}: The local and global models are updated as \eqref{update} and \eqref{aggregation}, respectively.

    % \item AdaBest~\cite{varno2022adabest}

    \item FedProx~\cite{li2020federated}: In any training round $t$, the workers perform $U$ local updates with a regularization term to approximate a $\gamma$-inexact solution: $\left\|\nabla F_m(\boldsymbol{\theta}_{m}^{t,u})+\mu(\boldsymbol{\theta}_{m}^{t,u}-\boldsymbol{\theta}^t)\right\| 
        \leq \gamma\left\|\nabla F_m(\boldsymbol{\theta}^t)\right\|$.
        
    \item SCAFFOLD~\cite{karimireddy2020scaffold}:
    In any training round $t$, the local update is employed with control variates: $\boldsymbol{\theta}_{m}^{t,u}=\boldsymbol{\theta}_{m}^{t,u}-\eta ( \nabla F_m(\boldsymbol{\theta}_{m}^{t,u}; z_{m}^{t,u})  - h_{m}^t + h^t )$, where $h_{m}^t$ and $h^t$ are updated as $h_{m}^{t+1}=\nabla F_m (\boldsymbol{\theta}^{t}; z_{m}^{t,0} ) $ and $h^{t+1}=h^{t}+\frac{1}{M}\sum_{i \in \mathcal{S}^t} (\nabla F_m (\boldsymbol{\theta}^{t}; z_{m}^{t,0} )-h_{m}^t  )$.

    \item FedExP~\cite{jhunjhunwala2023fedexp}: With $\tilde{\Delta}^t = \frac{1}{S}\sum_{i \in \mathcal{S}^t}\mathbf{g}_m^t$ and a small positive constant $\epsilon$, the global model is updated as $\boldsymbol{\theta}^{t+1}=\boldsymbol{\theta}^t+\frac{\eta_g^t}{S} \sum_{m \in \mathcal{S}^t} \mathbf{g}_m^t$, where $\eta_g^t=\max \Big\{1, \frac{\sum_{i \in \mathcal{S}^t}\left\|\mathbf{g}_m^t\right\|^2}{2 S\left(\left\|\tilde{\Delta}^t\right\|^2+\epsilon\right)}\Big\}$.   
    
    % \item FedSAM~\cite{pmlr-v162-qu22a}: SAM optimizers are utilized at the workers' side: ${\boldsymbol{\theta}}_m^{t,u+1} = {\boldsymbol{\theta}}_m^{t,u} - \eta \nabla F_m(\tilde{\boldsymbol{\theta}}_{m}^{t,u}; z_{m}^{t,u}) $, where $\tilde{\boldsymbol{\theta}}_m^{t,u}$ is obtained by $\tilde{\boldsymbol{\theta}}_m^{t,u} = \boldsymbol{\theta}_m^{t,u} + \rho \frac{\nabla F_m(\boldsymbol{\theta}_{m}^{t,u}; z_{m}^{t,u})}{\|\nabla F_m(\boldsymbol{\theta}_{m}^{t,u}; z_{m}^{t,u})\|}$.

    \item FedACG~\cite{Kim_2024_CVPR}: The local updates are performed as 
    $\boldsymbol{\theta}_m^{t,u+1} = \boldsymbol{\theta}_m^{t,u} - \eta ( \nabla F_m(\boldsymbol{\theta}_m^{t,u};z_{m}^{t,u}) + \beta \| \boldsymbol{\theta}_m^{t,u} - (\boldsymbol{\theta}^{t-1} + \lambda m^{t-1} ) \| )$ with the lookahead gradient updated as $m^t = \lambda m^{t-1} + \frac{1}{S} \sum_{m \in \mathcal{S}^t} \mathbf{g}_m^t $.

\end{itemize} 
After fine-tuning these benchmarks, we set $\mu=0.2$ for FedProx; $\epsilon=0.001$ for FedExP; $\beta=0.2$ and $\lambda=0.85$ for FedACG. For DRAG, $\alpha=0.25$, $c=0.1$ for moderate data heterogeneity, and $c=0.25$ for strong data heterogeneity.

Figs.~\ref{fig_EMNIST_DRAG} to \ref{fig_CIFAR100_DRAG} show the convergence of the algorithms. Across different datasets and data heterogeneity levels, DRAG significantly outperforms the benchmarks, validating \textbf{Theorem~1}. 
The accuracy gap between DRAG and FedAvg is larger when data heterogeneity is more severe (from $\beta=0.5$ to $\beta=0.1$), confirming the effective client-drift mitigation of DRAG.
% For example, as shown in Fig.~\ref{fig_CIFAR10_DRAG}(a) and \ref{fig_CIFAR100_DRAG}(a), DRAG achieves 3.8\% and 4.2\% accuracy improvement over FedAvg, respectively. 
Meanwhile, DRAG converges considerably faster than the benchmarks, and achieves 70\% test accuracy within 600 rounds in Fig.~\ref{fig_CIFAR10_DRAG}(b). Over 1,400 rounds are needed for other algorithms, e.g., SCAFFOLD and FedExP, verifying that the adaptive aggregation of DRAG in \eqref{vm} accelerates the decrease in the global loss function.
% When the data distributions tend to be IID, DRAG approaches FedAvg in accuracy, as the local and global updates follow nearly the same direction, making the modified gradient in \eqref{vm} degenerate into $\mathbf{g}_m^t$.

FedProx demonstrates subpar performance due to the lack of enough local computations to achieve the required $\gamma$-inexact solution. With the additional lookahead gradient, FedACG achieves a limited accuracy improvement over FedProx. FedExP is limited by the later stage of training and degrades to FedAvg.
Since the control variate $\nabla F_m(\boldsymbol{\theta}^t) - \mathbb{E}[h_m^t] + \mathbb{E}[h^t] \neq 0$ when $\boldsymbol{\theta}^t=\boldsymbol{\theta}^*$, the local models of SCAFFOLD deviate from $\boldsymbol{\theta}^*$ and necessitate additional training rounds to mitigate this imbalance of control variate.
By balancing the local gradient direction and the reference direction, DRAG avoids using control variates or regularization terms, realizing adaptive model training without additional communication overhead.

% Its success can be attributed to its heuristic and intuitive design, which avoids using control variates or regularization terms. Instead, DRAG leverages tunable hyperparameters $c$ and $\alpha$ to balance the weights of the local gradient direction and the reference direction, respectively. This adaptability enables DRAG to develop a more accurate reference direction, further contributing to its superior performance.

\begin{figure}[!t]
	\centering
	\subfloat[{\centering $\beta=0.1$.}]{\includegraphics[width=1.65in]{ 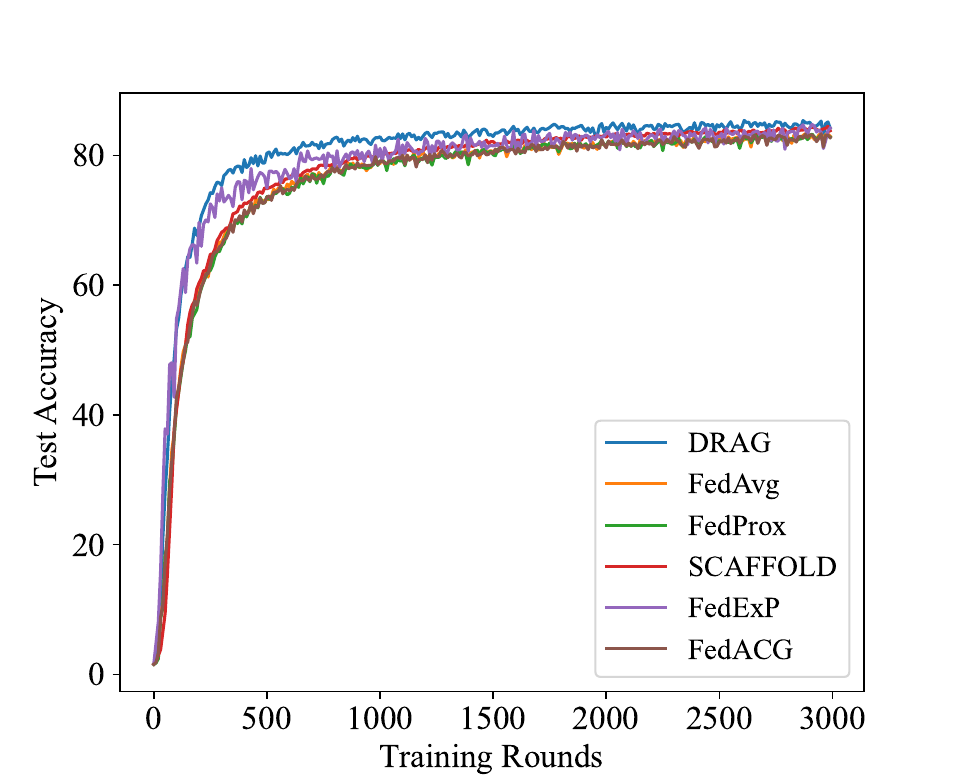}
		\label{fig_EMNIST_1}}
	\subfloat[{\centering $\beta=0.5$.}]{\includegraphics[width=1.65in]{ 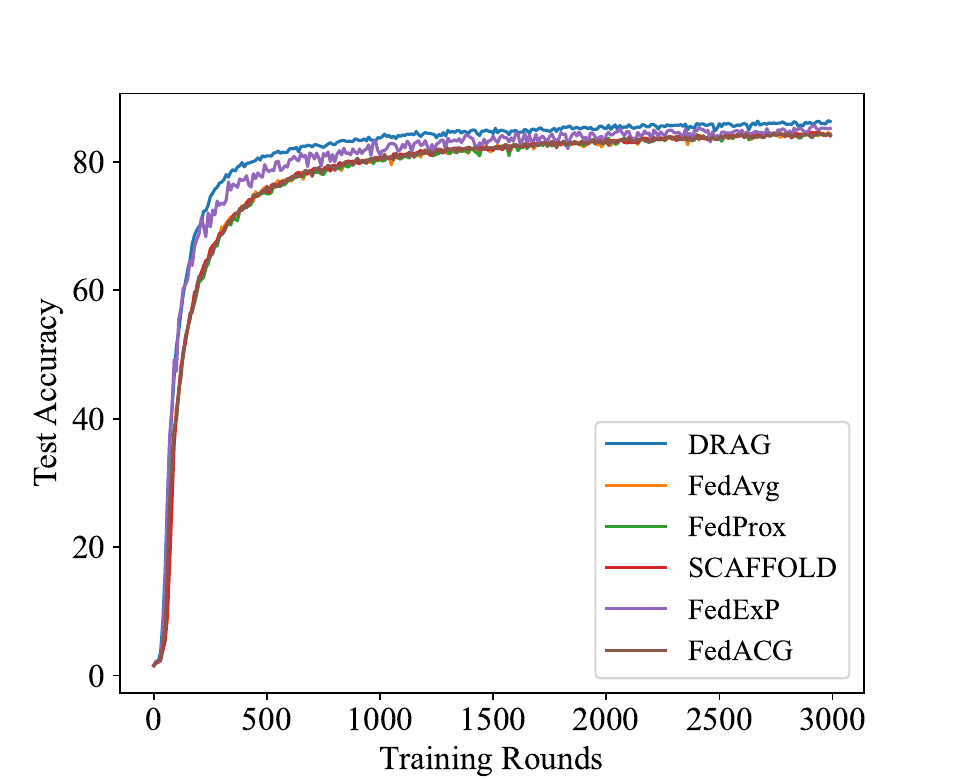 }
		\label{fig_EMNIST_2}}
	\caption{Convergence performance of different algorithms on EMNIST.}
	\label{fig_EMNIST_DRAG}
\end{figure}
\begin{figure}[!t]
	\centering
	\subfloat[{\centering $\beta=0.1$.}]{\includegraphics[width=1.65in]{ 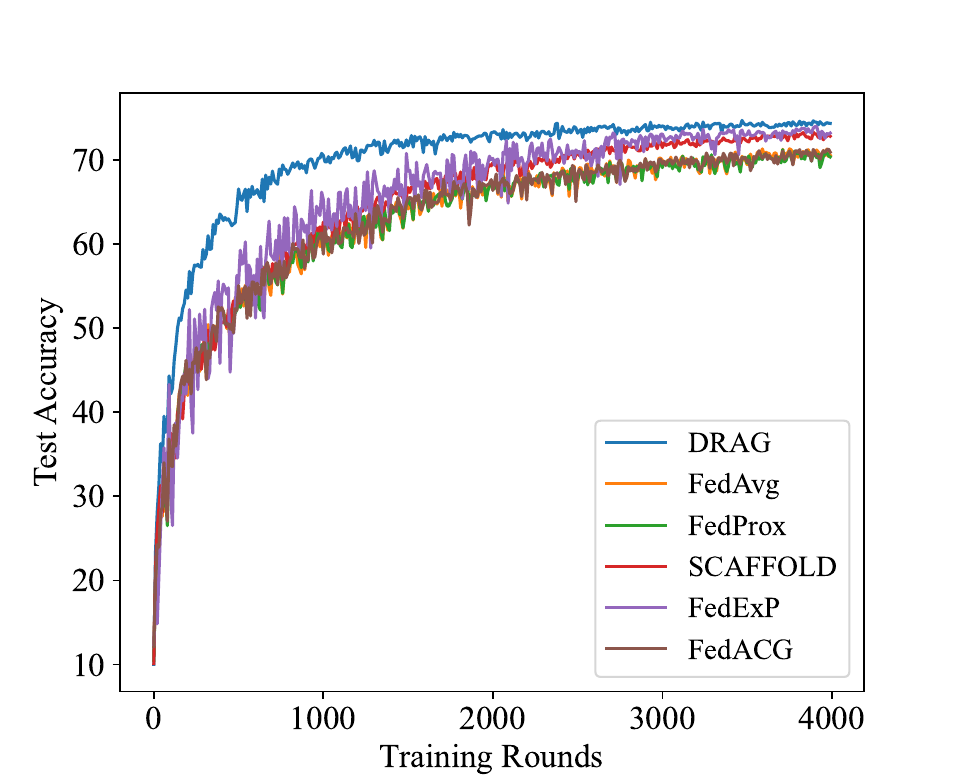}
		\label{fig_CIFAR10_1}}
	\subfloat[{\centering $\beta=0.5$.}]{\includegraphics[width=1.65in]{  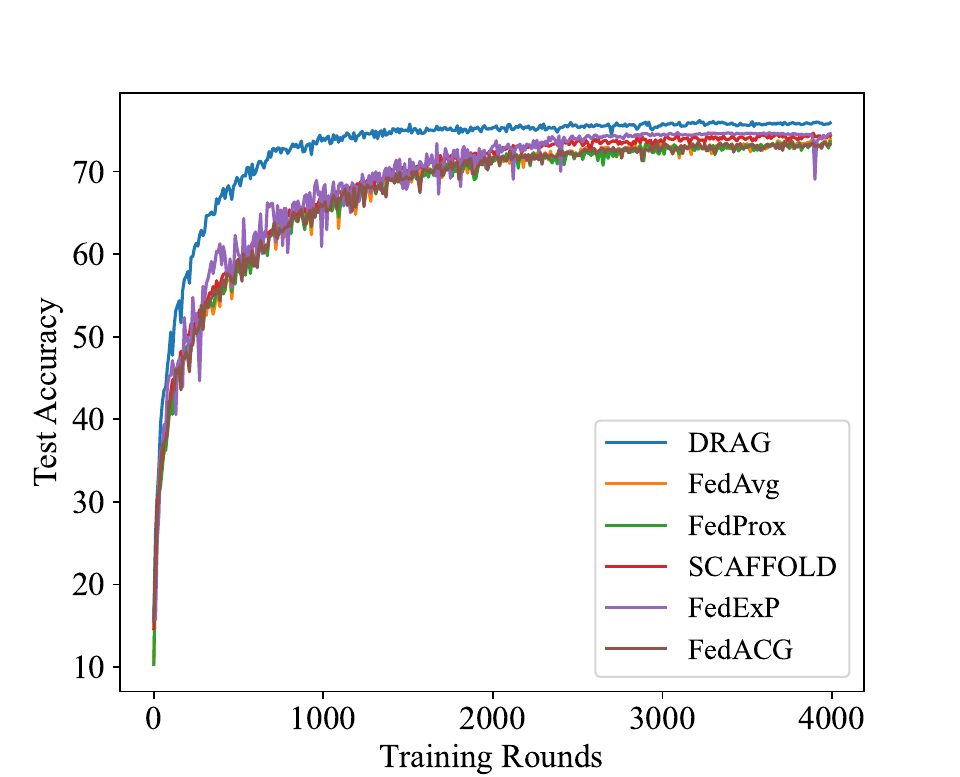}
		\label{fig_CIFAR10_2}}
	\caption{Convergence performance of different algorithms on CIFAR-10.}
	\label{fig_CIFAR10_DRAG}
\end{figure}
\begin{figure}[!t]
	\centering
	\subfloat[{\centering $\beta=0.1$.}]{\includegraphics[width=1.65in]{ 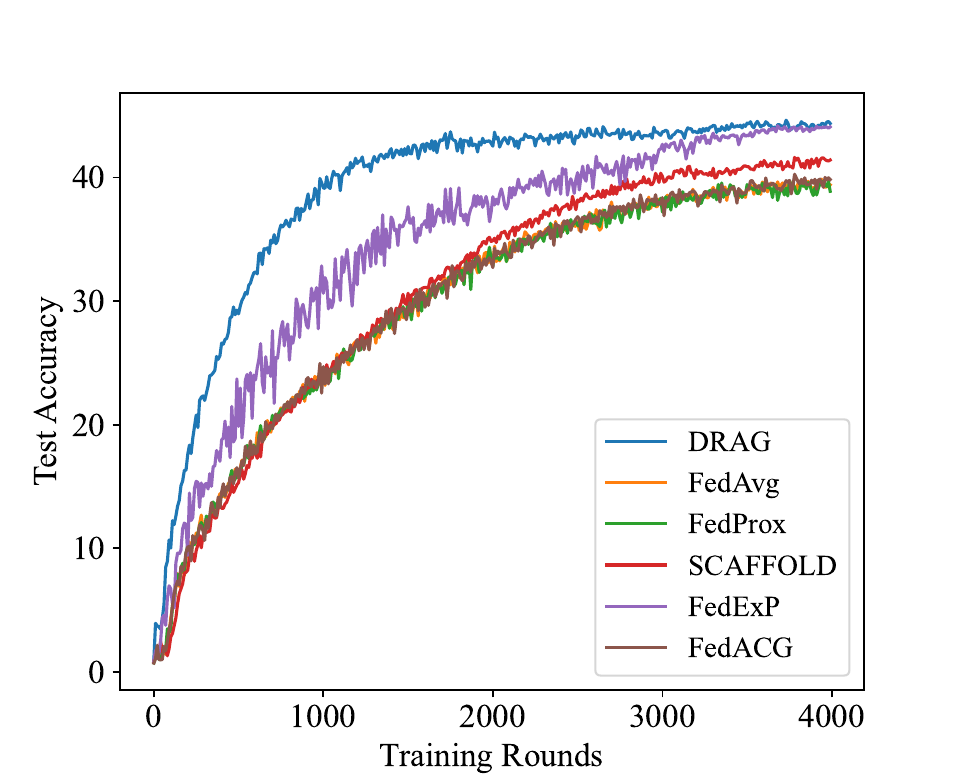}
		\label{fig_CIFAR100_1}}
	\subfloat[{\centering $\beta=0.5$.}]{\includegraphics[width=1.65in]{ 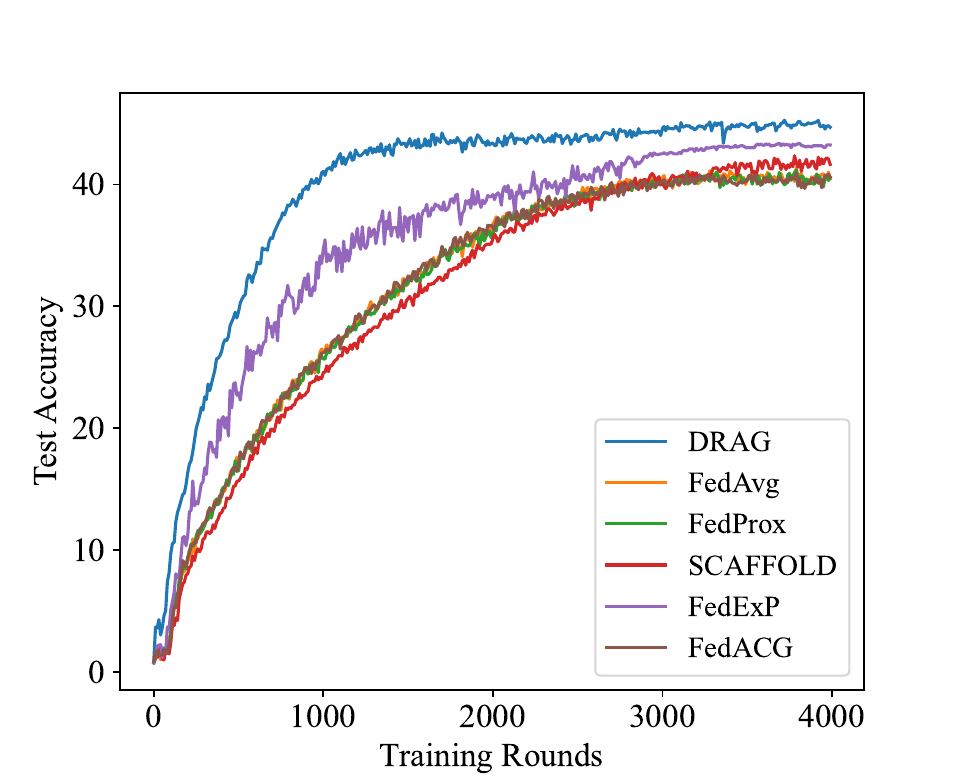}
		\label{fig_CIFAR100_2}}
	\caption{Convergence performance of different algorithms on CIFAR-100.}
	\label{fig_CIFAR100_DRAG}
\end{figure}

\begin{figure}[!t]
	\centering
	\subfloat[ $\beta=0.1$. ]{\includegraphics[width=1.65in]{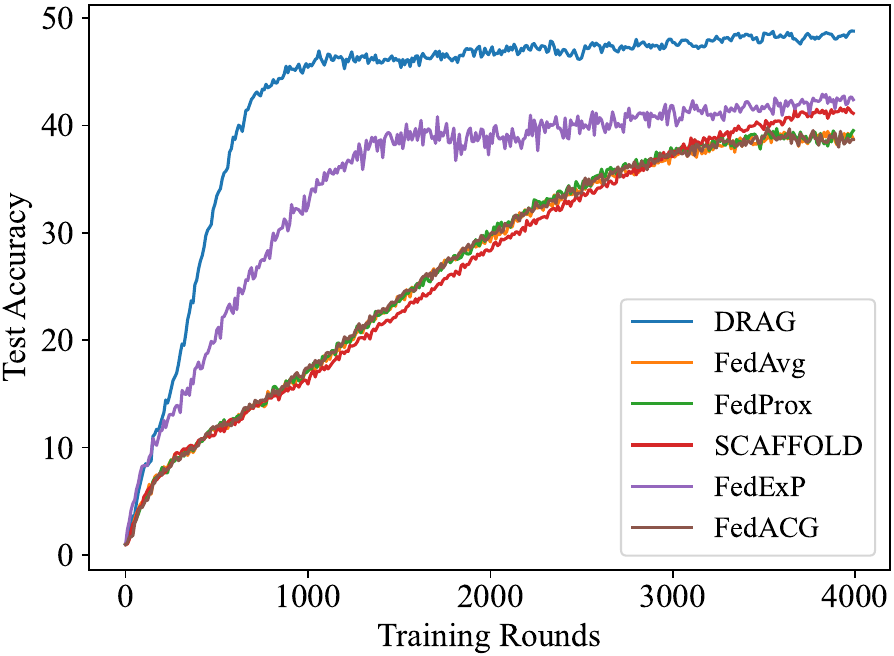}
		\label{fig_wrn1_1}}
	\hfill 
	\subfloat[$\beta=0.5$.]{\includegraphics[width=1.6in]{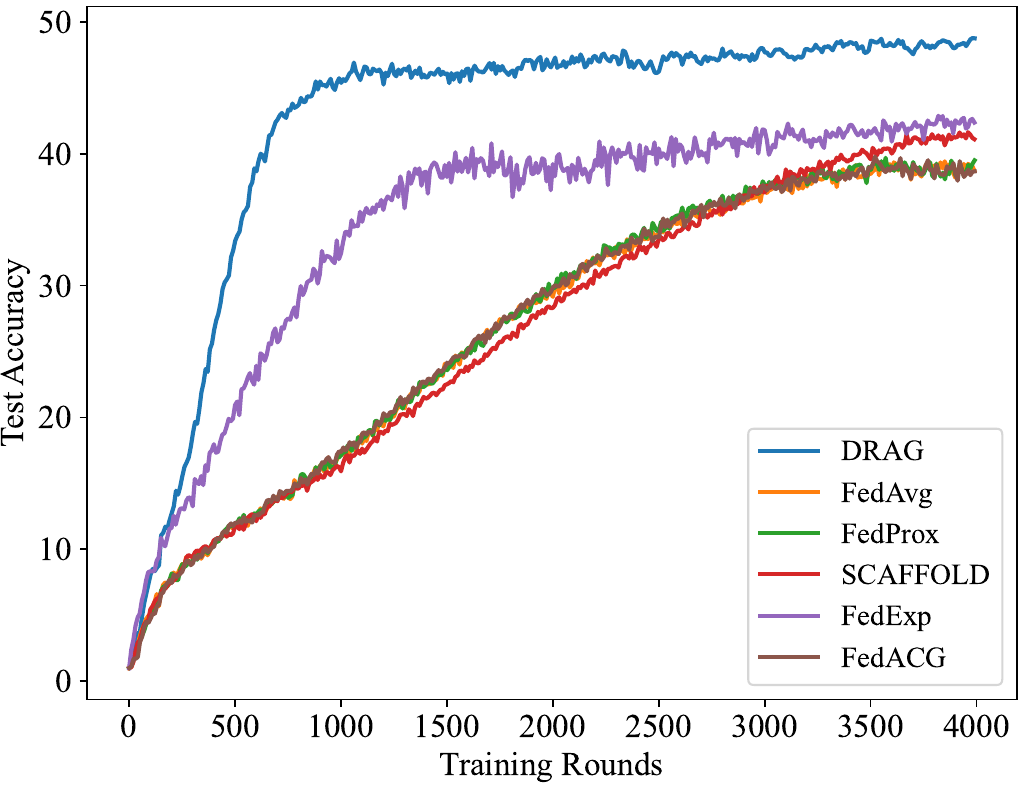}
		\label{fig_wrn1_2}}
	\caption{Convergence performance of different algorithms on CIFAR-100 with the WRN-28-4 model.}
	\label{wrn1}
    \end{figure}

{\color{black} Fig.~\ref{wrn1} plots the convergence performance of DRAG under different data heterogeneity levels with the WRN-28-4 model. It is demonstrated that DRAG maintains its convergence performance advantage over the benchmarks, validating its superior adaptability to training models with larger scales.}
Fig.~\ref{fig_diff_S_DRAG} evaluates DRAG under different numbers of participating workers, where a UAR selection of $S =5$, $15$, $25$, and $35$ out of $M=40$ workers is considered.
Clearly, more worker participation leads to faster convergence under both data distributions.
A larger number of selected workers $S$ contributes to more stable reference directions $\mathbf{r}^t, \forall t$, in \eqref{reference_cf} and helps better guide local gradient modifications.
\begin{figure}[t!]
\centering
\subfloat[\centering Accuracy on CIFAR-10 with $\beta=0.1$]{\includegraphics[width=1.7in]{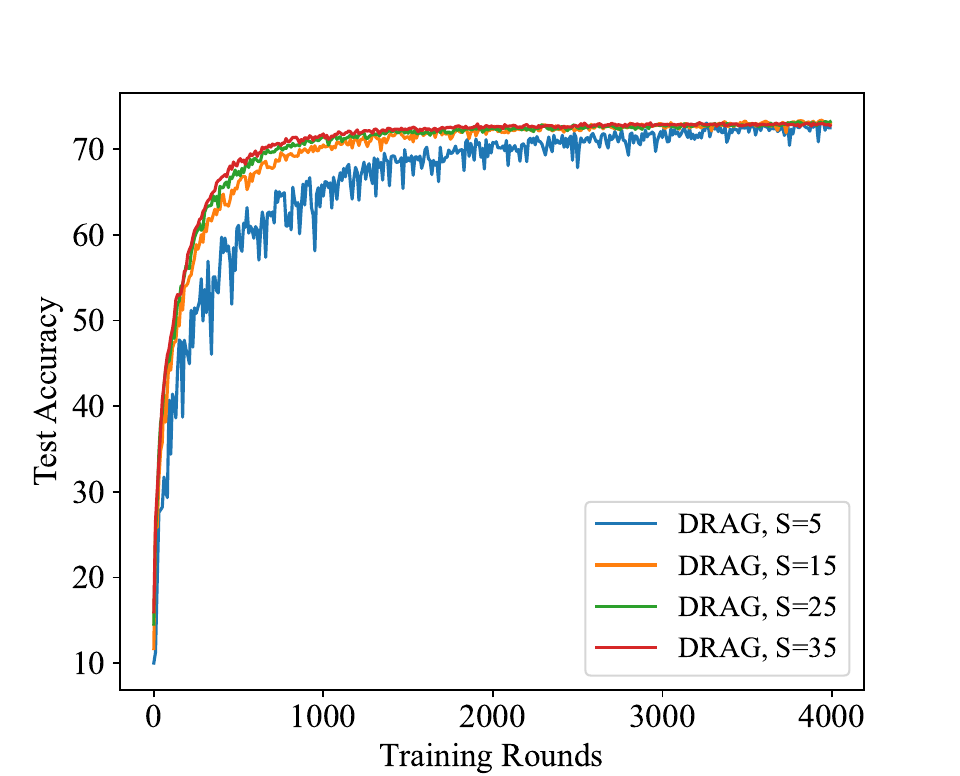}}
\subfloat[\centering Accuracy on CIFAR-10 with $\beta=0.5$]{\includegraphics[width=1.7in]{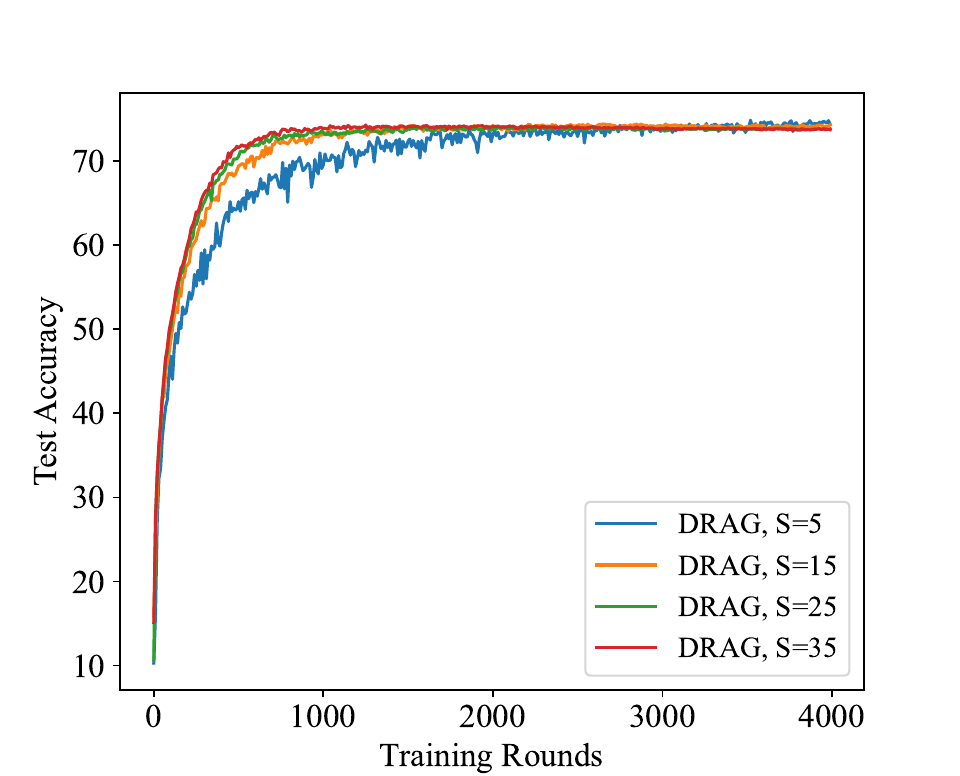}}
\caption{Convergence performance of DRAG with different numbers of participating workers on CIFAR-10.}
\label{fig_diff_S_DRAG}
\end{figure}

Figs. \ref{diff_alpha_DRAG} and \ref{diff_c_DRAG} examine the impact of the hyperparameters $\alpha$ and $c$ in \eqref{reference_cf} and \eqref{dod} on the performance of DRAG.
Fig.~\ref{diff_alpha_DRAG}(a) demonstrates an overly small $\alpha=0.01$ in \eqref{reference_cf} overuses historical gradients, where unstable gradients in the early training stage hinder convergence. When $\alpha$ is large, e.g., $\alpha > 0.25 $, the reference direction $\mathbf{r}^t$ depends excessively on $\Delta^{t-1}$, making it difficult to find a fast descent path.
In Fig.~\ref{diff_alpha_DRAG}(b), although the data heterogeneity becomes moderate, an overly large or small value of $\alpha$ slows down the convergence.
Fig.~\ref{diff_c_DRAG} shows that too large or too small $c$ significantly impairs training, since it would penalize gradient variance control or gradient alignment.
% , which leads to a moderate value setting.
This is consistent with the analysis in Section V-B, confirming the importance of careful configuration of $c$.

\begin{figure}[t!]
\centering
\subfloat[\centering $\beta=0.1$]{\includegraphics[width=1.7in]{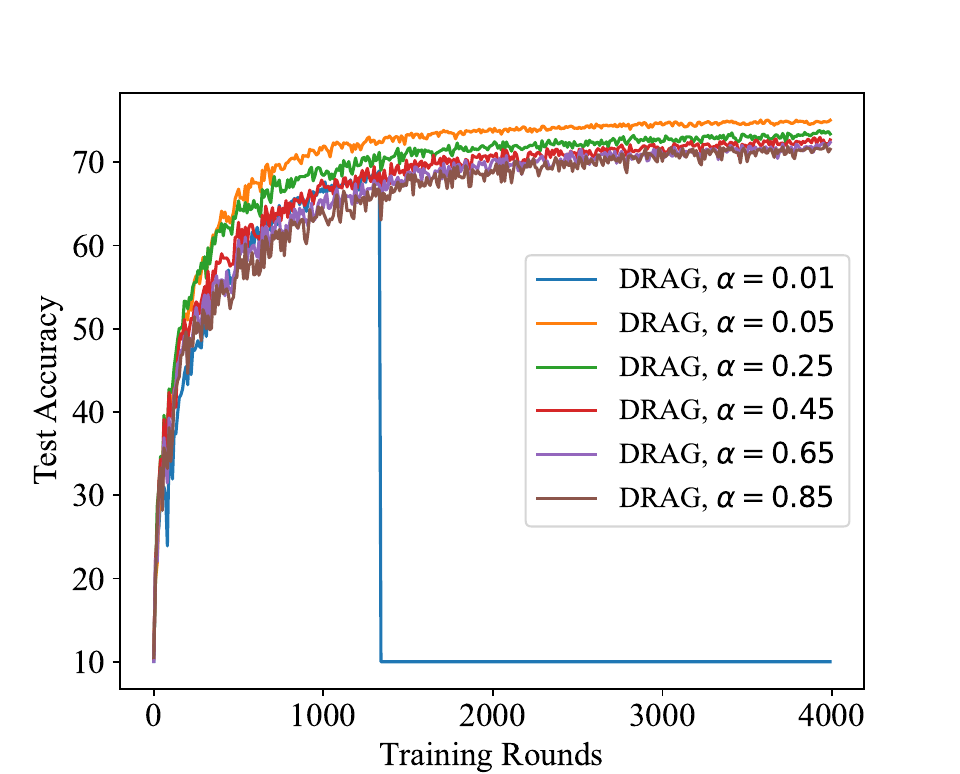}}
\subfloat[\centering $\beta=0.5$]{\includegraphics[width=1.7in]{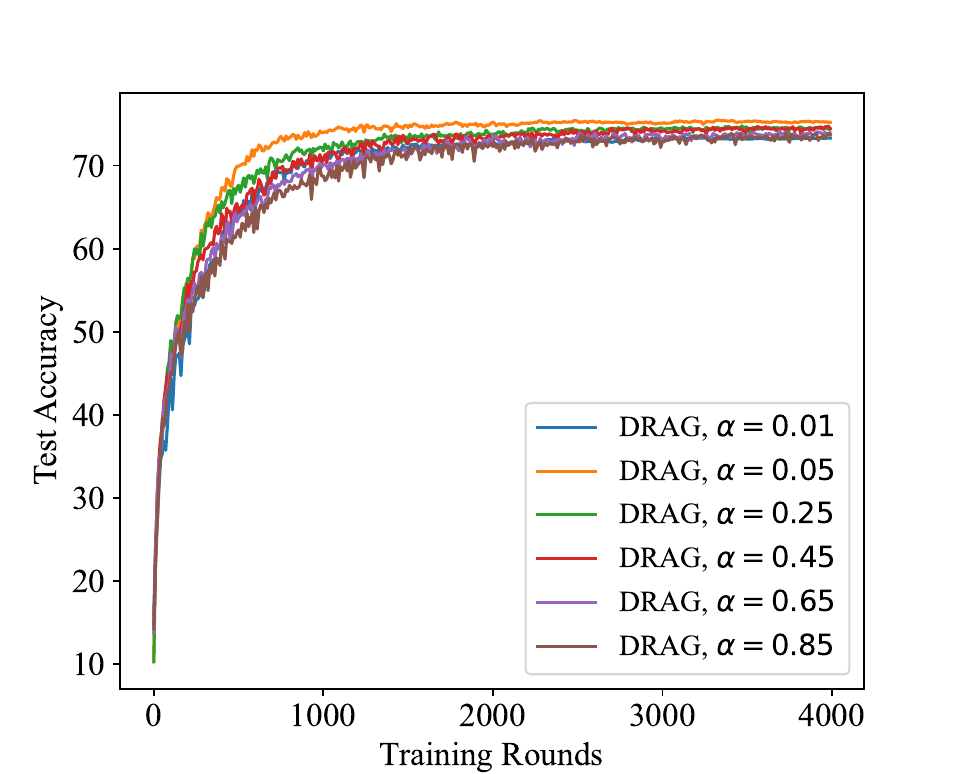}}
\caption{Convergence performance of DRAG with different values of hyperparameter $\alpha$ on CIFAR-10.}
\label{diff_alpha_DRAG}
\end{figure}

\begin{figure}[t!]
\centering
\subfloat[\centering $\beta=0.1$]{\includegraphics[width=1.7in]{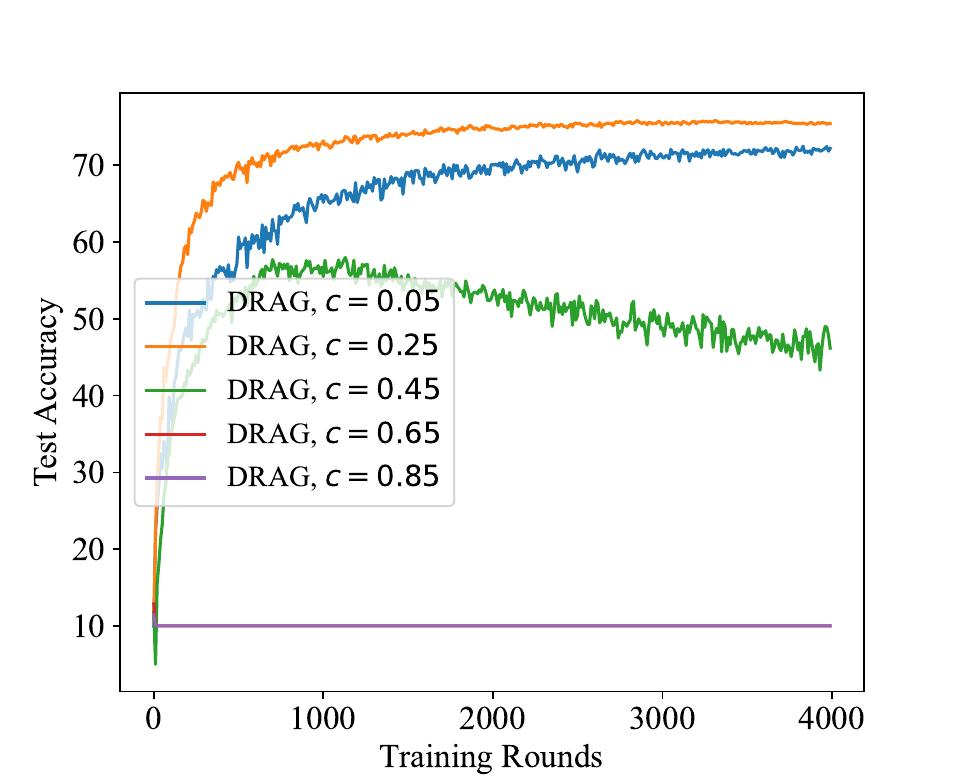}}
\subfloat[\centering $\beta=0.5$]{\includegraphics[width=1.7in]{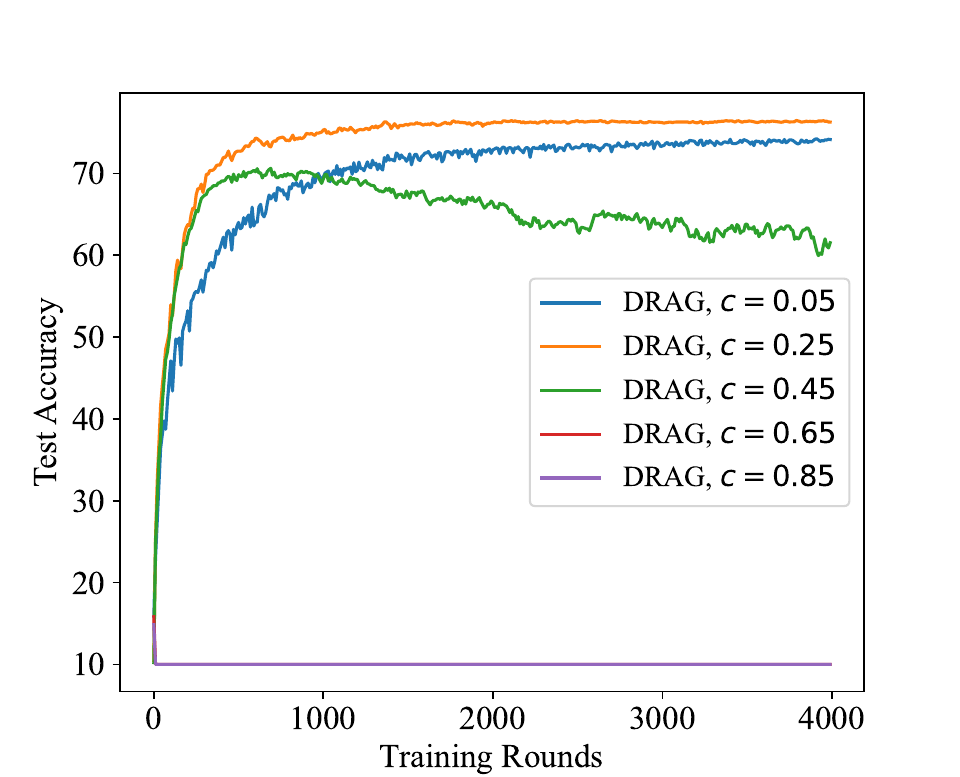}}
\caption{Convergence performance of DRAG with different values of hyperparameter $c$ on CIFAR-10.}
\label{diff_c_DRAG}
\end{figure}

% {\color{red}DRAG exhibits significant superiority, especially in the presence of full participation. This advantage is attributed to the momentum-based reference direction, which provides a more accurate training direction, and the vector manipulation operation, which contributes to effective alignment with the reference direction.}

% In this setting, AdaBest still performs relatively better with partial participation; however, SCAFFOLD and FedProx show limited improvement compared to FedAvg, as previously observed in \cite{li2021federated}. On the other hand, DRAG outperforms all other algorithms significantly. Specifically, it achieves 70\% test accuracy with only half of the training rounds compared to FedAvg with partial participation, and this proportion further reduces to 1/4 under full participation.

% To sum up, with the proposed heuristic and intuitive design of algorithm instead of using control variates or regularization terms, DRAG is consistently the best algorithm to deal with client drift under various settings. It has strong adaptability thanks to the tunable hyper-parameters $c$ and $\alpha$, which are designed to balance the weights of the direction of the local gradient and the reference direction, and to develop a better reference direction, respectively.

\subsection{Evaluation of BR-DRAG}

We proceed to evaluate the BR-DRAG framework under Byzantine attacks by considering the following benchmarks.
\begin{itemize}
    \item FedAvg~\cite{pmlr-v54-mcmahan17a}: The local and global models are updated following \eqref{update} and \eqref{byzantine aggregation}, respectively.

    \item FLTrust~\cite{cao2020fltrust}: In any training round $t$, the local gradients are modified with ReLU-clipped cosine similarity: $\tilde{\mathbf{g}}_m^t \!=\! \max \! \left(0, \cos \left(\mathbf{g}_m^t, \mathbf{r}^t\right)\right) \! \left\|\mathbf{g}_m^t\right\| \! \frac{\mathbf{r}^t}{\left\|\mathbf{r}^t\right\|}$, where $\mathbf{r}^t$ is the reference direction designed in a similar way to \eqref{trust}.
    
    \item RFA~\cite{9721118}: The global model is updated via geometric median aggregation: $\boldsymbol{\theta}^{t+1} = \operatorname{GeoMed}\left( \{ \boldsymbol{\theta}_m^{t,U} \}_{m \in \mathcal{S}^t} \right)$, where $\operatorname{GeoMed}(\cdot)$ minimizes the sum of distances to all local models and is solved with the Weiszfeld algorithm~\cite{beck2015weiszfeld}.

    \item RAGA~\cite{zuo2024byzantine}: The global gradient $\mathbf{g}^t$ is acquired with the geometric median metric: $\mathbf{g}^t \!=\! \operatorname{GeoMed}\left( \{ \mathbf{g}_m^{t} \}_{m \in \mathcal{S}^t} \right)$, and the global model is updated as $\boldsymbol{\theta}^{t+1}=\boldsymbol{\theta}^{t} + \mathbf{g}^t$.
\end{itemize}
{\color{black}For malicious workers, we consider five different attack methods, including noise injection~\cite{9614992}, sign flipping~\cite{li2019rsa}, label flipping~\cite{10054157}, Min-Max \cite{shejwalkar2021manipulating}, and Min-Sum \cite{shejwalkar2021manipulating}; see Section I-B.} For the noise injection attack, we set malicious local gradients to $p_m^t\mathbf{g}_m^t, \forall m \in \mathcal{A}^t$, where $p_m^t$ conforms to the Gaussian distribution $\mathcal{N}(0,3)$. For the sign flipping attack, the malicious workers generate $-\mathbf{g}_m^t, \forall m \in \mathcal{A}^t$ as the local gradient. For the label flipping attack, half of local data labels of the malicious workers are reversed. 
{\color{black}
For Min-Max and Min-Sum, let $\bar{\mathbf{g}}_b^t=\frac{1}{|\mathcal{B}^t|}\sum_{i\in\mathcal{B}^t}\mathbf{g}_i^t$ denote the average benign update in round $t$. Under both attacks, all malicious workers upload the same crafted update $\hat{\mathbf{g}}_m^t=\bar{\mathbf{g}}_b^t+\gamma\mathbf{p}^t, \forall m\in\mathcal{A}^t$, where $\mathbf{p}^t$ is the attack direction and $\gamma$ is chosen to be as large as possible while the malicious updates do not stand out under distance-based measurement criteria.
In the case of Min-Max, $\gamma$ is selected so that the maximum distance from $\hat{\mathbf{g}}_m^t$ to any benign update does not exceed the maximum pairwise distance among benign updates. In the case of Min-Sum, $\gamma$ is selected to keep the total distance from $\hat{\mathbf{g}}_m^t$ to all benign updates below the maximum of such total distances within the benign group.
Under BR-DRAG, we set $c^t=0.5, \forall t$. A vetted pool $\mathcal{D}_{\mathrm{vet}}$ is first constructed by applying a verification tool based on confident learning~\cite{northcutt2021confidentlearning} and outlier filtering~\cite{10.14778/3603581.3603583} to prevent data corrupted by attacks. Then, data samples are randomly and uniformly drawn from $\mathcal{D}_{\mathrm{vet}}$ in equal proportions to form $\mathcal{D}_{\mathrm{root}}$; see Section IV-B.}

% Under BR-DRAG, $c^t=0.5, \forall t$; using a designed verification tool based on confident learning \cite{northcutt2021confidentlearning} and outlier filtering \cite{10.14778/3603581.3603583} to prevent data corrupted by attacks a vetted pool $\mathcal{D}_{\mathrm{vet}}$ is constructed, from which samples are randomly and uniformly drawn in equal proportions to form $\mathcal{D}_{\mathrm{root}}$; see Section IV-B.}

{\color{black}We first experimentally validate \textbf{Assumptions 3} and \textbf{4} in our considered simulation settings. 
        Fig.~\ref{ass1} shows the variations in the values of $\min_{m \in \mathcal{B}^t} \frac{\left\|\mathbf{r}^t\right\|}{\left\|\mathbf{g}_m^t\right\|}$, $\max_{m \in \mathcal{B}^t} \frac{\left\|\mathbf{r}^t\right\|}{\left\|\mathbf{g}_m^t\right\|}$, and $\frac{1}{S(1-w^t)} \sum_{m \in \mathcal{B}^t} y_m^t \rho_m^t$ during the training process under the CIFAR-100 dataset with different data heterogeneity levels.
        It is observed that both $\min_{m \in \mathcal{B}^t} \frac{\left\|\mathbf{r}^t\right\|}{\left\|\mathbf{g}_m^t\right\|}$ and $\max_{m \in \mathcal{B}^t} \frac{\left\|\mathbf{r}^t\right\|}{\left\|\mathbf{g}_m^t\right\|}$ converge as training progresses, confirming that the ratio between the trusted reference direction $\mathbf{r}^t$ and the benign updates $\mathbf{g}_m^t$, i.e., $\rho_m^t=\frac{\|\mathbf{r}^t\|}{\|\mathbf{g}_m^t\|},\ \forall m \in \mathcal{B}^t$ is bounded; in other words, there exist $p$ and $q$, $0 < p < q$, such that $\rho_m^t$ satisfies $p \leq \rho_m^t \leq q$, as stated in (22). 
        This validates \textbf{Assumption 3}. It is also observed that the weighted average of the cosine similarities between $\mathbf{r}^t$ and $\mathbf{g}_m^t$, i.e., $\frac{\rho^t}{1-w^t} = \frac{1}{S(1-w^t)} \sum_{m \in \mathcal{B}^t} y_m^t \rho_m^t$, remains positive in each training round $t$ and is always smaller than $\max_{m \in \mathcal{B}^t} \frac{\left\|\mathbf{r}^t\right\|}{\left\|\mathbf{g}_m^t\right\|}$. Therefore, \textbf{Assumption 4} is validated. Moreover, when data heterogeneity becomes intense, $\frac{1}{S(1-w^t)} \sum_{m \in \mathcal{B}^t} y_m^t \rho_m^t \geq 0$ still holds, which further expands \textbf{Assumption 4}.
}
\begin{figure}[!t]
	\centering
	\subfloat[ $\beta=0.1$. ]{\includegraphics[width=0.48\linewidth]{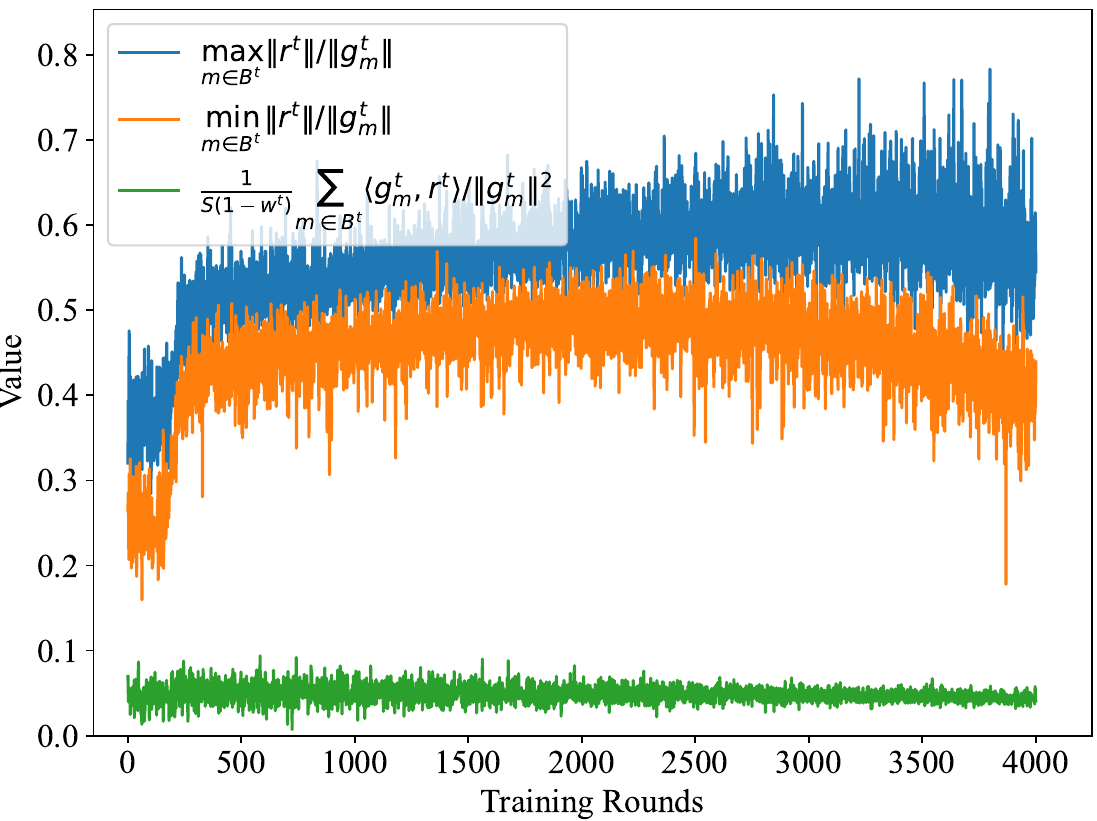}
		\label{fig_ass1_1}}
	\hfill 
	\subfloat[$\beta=0.5$.]{\includegraphics[width=0.48\linewidth]{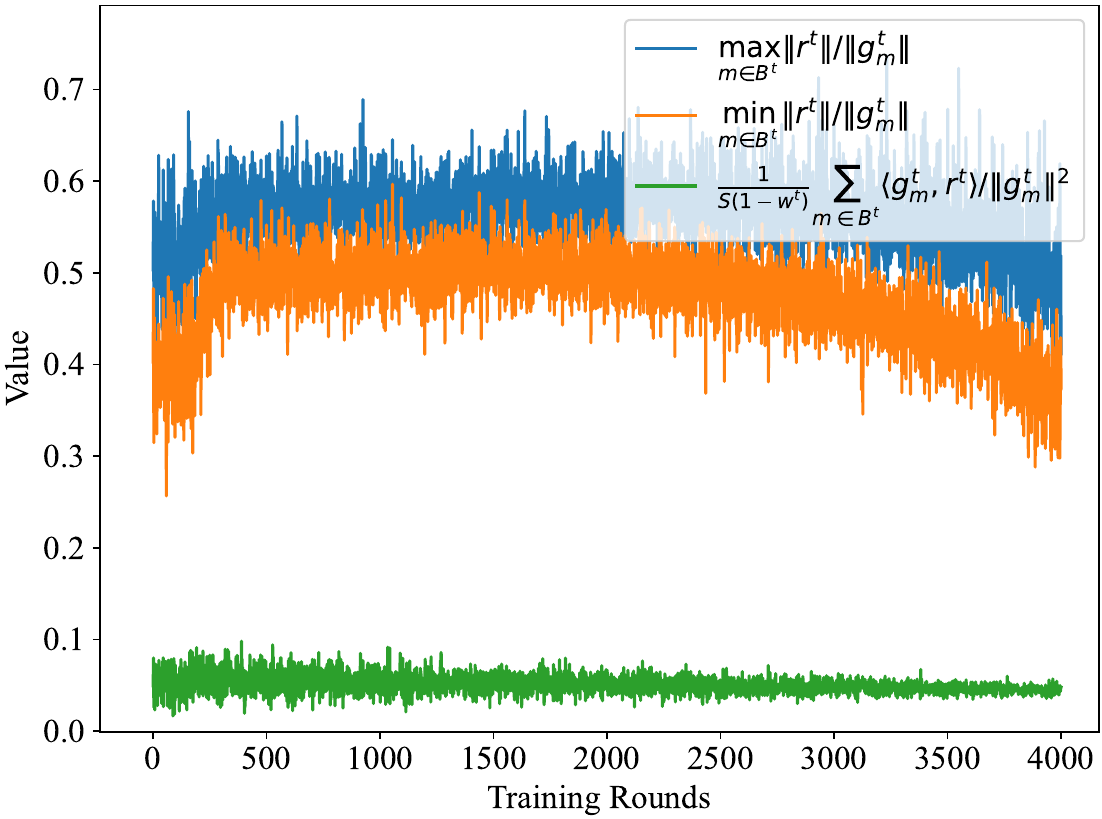}
		\label{fig_ass1_2}}
	\caption{The variation of the corresponding values in \textbf{Assumptions~3} and \textbf{4} during the training process on CIFAR-100.}
	\label{ass1}
    \end{figure}

We set the ratio of malicious workers to $ \frac{A}{M} = 30\%$ in the system.
Figs.~\ref{fig_CIFAR10_BRDRAG_NI} to \ref{fig_CIFAR100_BRDRAG_LF} evaluate the training performance of BR-DRAG under the different Byzantine attack methods, datasets, and data heterogeneity levels.
% It is observed that BR-DRAG outperforms the benchmarks significantly, validating its effectiveness.
Compared to the benchmarks, BR-DRAG demonstrates superior adaptability and robustness against severe data heterogeneity and attacks, validating \textbf{Theorem 2}. 
For example, BR-DRAG surpasses FLTrust in accuracy by 13.4\% and 9.8\% in Figs.~\ref{fig_CIFAR10_BRDRAG_SF}(a) and~\ref{fig_CIFAR10_BRDRAG_SF}(b), respectively.
Based on \eqref{vm1} and using a larger $c$ than DRAG, BR-DRAG effectively suppresses malicious gradients while amplifying the impact of benign updates. 

{\color{black}
Fig.~\ref{wrn2} plots the convergence performance of BR-DRAG and its benchmarks under different data heterogeneity levels with the WRN-28-4 model. All benchmarks achieve improvements over the CNN without residual blocks. BR-DRAG demonstrates superior convergence over the benchmarks.
}

{\color{black}
Figs.~\ref{minmax1} and \ref{minsum1} plot the performance of BR-DRAG and the benchmarks under the Min-Max and Min-Sum attacks, respectively.
BR-DRAG remains more stable than the benchmarks under these two adaptive attacks across different data heterogeneity levels. Compared to other Byzantine attacks, Min-Max and Min-Sum lead to a more severe degradation for the benchmark methods. By contrast, BR-DRAG maintains effective convergence. This is consistent with the design rationale of BR-DRAG, where the trusted reference direction $\mathbf{r}^t$ is generated from $\mathcal{D}_{\mathrm{root}}$ and the uploaded updates are corrected through the divergence-based normalization/correction rule, as opposed to relying on pairwise distances among client updates; see (13) and (15).

Among the benchmarks, FLTrust remains more robust than FedAvg but is still inferior to BR-DRAG, since its ReLU-clipped trust weighting provides only coarse discrimination under these adaptive attacks, where malicious updates are deliberately designed to blend in with benign updates under distance-based measures.
RFA and RAGA cannot perform effective training under both attacks, since Min-Max and Min-Sum are specially designed to evade defense based on distance measures, and thus cause marked degradation to these distance-based robust aggregation methods. By contrast, BR-DRAG does not depend on pairwise-distance comparison among the uploaded updates. Instead, it anchors the aggregation to the trusted reference direction $\mathbf{r}^t$ and corrects both direction and magnitude of every update received before aggregation, thereby improving its robustness under these stronger adaptive attacks.
}

On the other hand, no benchmarks maintain stability under the diverse Byzantine attacks. RFA and RAGA can approach BR-DRAG in performance since the applied Weiszfeld algorithm can resist the interference of extreme gradient values. The ReLU-clipping design in FLTrust lacks fine-grained control, as it would entirely preclude the attackers or workers with excessively biased local data. 
In contrast, a key advantage of BR-DRAG is that it retains only the directionally consistent component of each received update with respect to $\mathbf{r}^t$, and suppresses misalignment.
% As shown in Figs.~\ref{fig_CIFAR10_BRDRAG_NI_more_attacker} to \ref{fig_CIFAR10_BRDRAG_LF_more_attacker}, we increase the proportion of Byzantine workers within the system to $\frac{A}{M}=60\%$. Unlike other algorithms, which need the assumption of $\frac{A}{M} < 50\%$, BR-DRAG still maintains convergence, aligning with the statement in Section V-C. 
As shown in Figs.~\ref{fig_CIFAR10_BRDRAG_NI_more_attacker} to \ref{fig_CIFAR10_BRDRAG_LF_more_attacker}, even when the proportion of Byzantine workers increases to $\frac{A}{M}=60\%$, BR-DRAG maintains convergence. This surpasses all benchmarks, which typically require $\frac{A}{M} < 50\%$, validating the analysis in Section V-C.
% The benchmarks based on geometric median are particularly impaired, as the excessive number of malicious nodes drastically affects the determination of centroids. 
The benchmarks based on the geometric median, i.e., RFA and RAGA, suffer severe degradation, as the high proportion of malicious workers significantly distorts the centroid estimation.
% In contrast, BR-DRAG mitigates the risk of complete non-convergence by regulating both the magnitude and direction of malicious updates. This further confirms the unique advantages of the proposed algorithm under high data heterogeneity and severe Byzantine attacks.
In contrast, BR-DRAG converges by regulating the magnitude and direction of malicious updates, demonstrating its distinctive robustness under high data heterogeneity and severe Byzantine attacks.

% -------------------------------------------------------
% \begin{figure}[!t]
% 	\centering
% 	\subfloat[{\centering Accuracy on EMNIST with $\beta=0.1$.}]{\includegraphics[width=1.65in]{new_figs/br_noise_emnist_01.pdf}
% 		\label{fig_EMNIST_BRDRAG_NI1}}
% 	\subfloat[{\centering Accuracy on EMNIST with $\beta=0.5$.}]{\includegraphics[width=1.65in]{new_figs/br_noise_emnist_05.pdf}
% 		\label{fig_EMNIST_BRDRAG_NI2}}
% 	\caption{Convergence performance of different algorithms on EMNIST under the noise injection attack.}
% 	\label{fig_EMNIST_BRDRAG_NI}
% \end{figure}

\begin{figure}[!t]
	\centering
	\subfloat[{\centering $\beta=0.1$.}]{\includegraphics[width=1.65in]{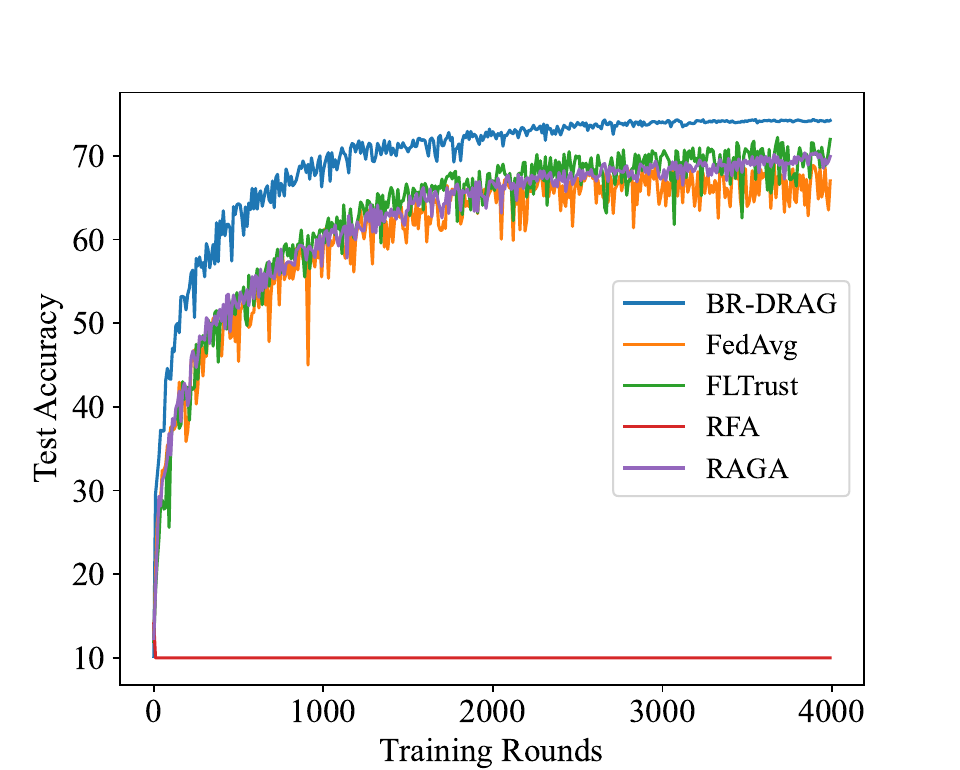}
		\label{fig_CIFAR10_BRDRAG_NI1}}
	\subfloat[{\centering $\beta=0.5$.}]{\includegraphics[width=1.65in]{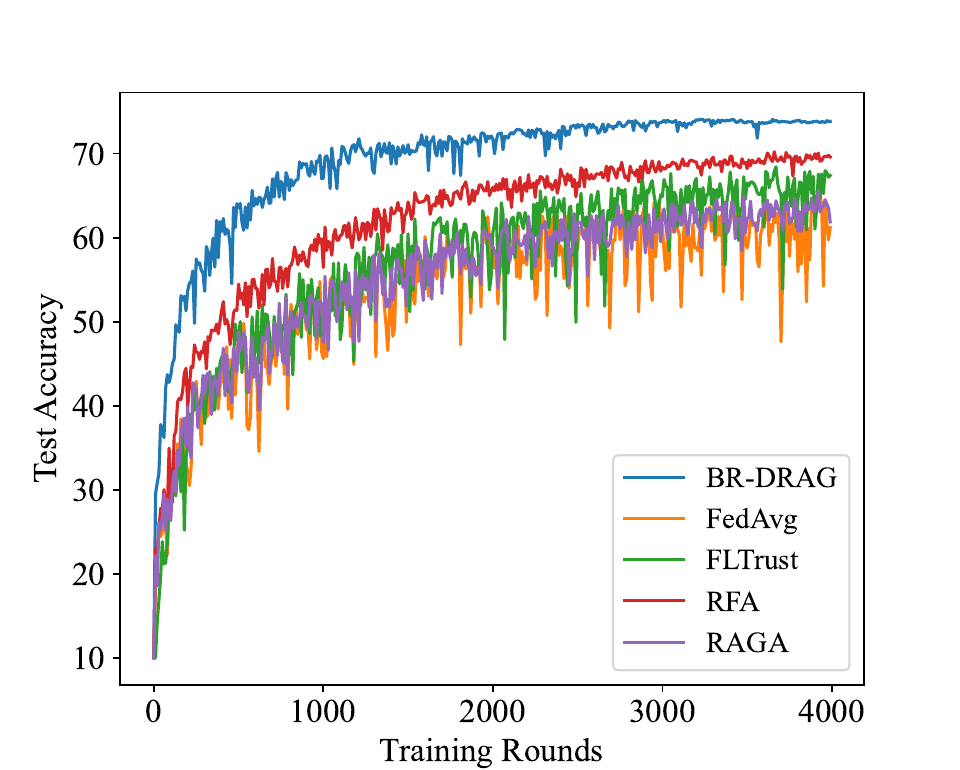}
		\label{fig_CIFAR10_BRDRAG_NI2}}
	\caption{Convergence performance of different algorithms on CIFAR-10 under the noise injection attack.}
	\label{fig_CIFAR10_BRDRAG_NI}
\end{figure}

% \begin{figure}[!t]
% 	\centering
% 	\subfloat[{\centering $\beta=0.1$.}]{\includegraphics[width=1.65in]{new_figs/br_noise_cifar100_01.pdf}
% 		\label{fig_CIFAR100_BRDRAG_NI1}}
% 	\subfloat[{\centering $\beta=0.5$.}]{\includegraphics[width=1.65in]{new_figs/br_noise_cifar100_05.pdf}
% 		\label{fig_CIFAR100_BRDRAG_NI2}}
% 	\caption{Convergence performance of different algorithms on CIFAR-100 under the noise injection attack.}
% 	\label{fig_CIFAR100_BRDRAG_NI}
% \end{figure}
% -------------------------------------------------------

% \begin{figure}[!t]
% 	\centering
% 	\subfloat[{\centering Accuracy on EMNIST with $\beta=0.1$.}]{\includegraphics[width=1.65in]{new_figs/br_sign_emnist_01.pdf}
% 		\label{fig_EMNIST_BRDRAG_SF1}}
% 	\subfloat[{\centering Accuracy on EMNIST with $\beta=0.5$.}]{\includegraphics[width=1.65in]{new_figs/br_sign_emnist_05.pdf}
% 		\label{fig_EMNIST_BRDRAG_SF2}}
% 	\caption{Convergence performance of different algorithms on EMNIST under the sign flipping attack.}
% 	\label{fig_EMNIST_BRDRAG_SF}
% \end{figure}

\begin{figure}[!t]
	\centering
	\subfloat[{\centering $\beta=0.1$.}]{\includegraphics[width=1.65in]{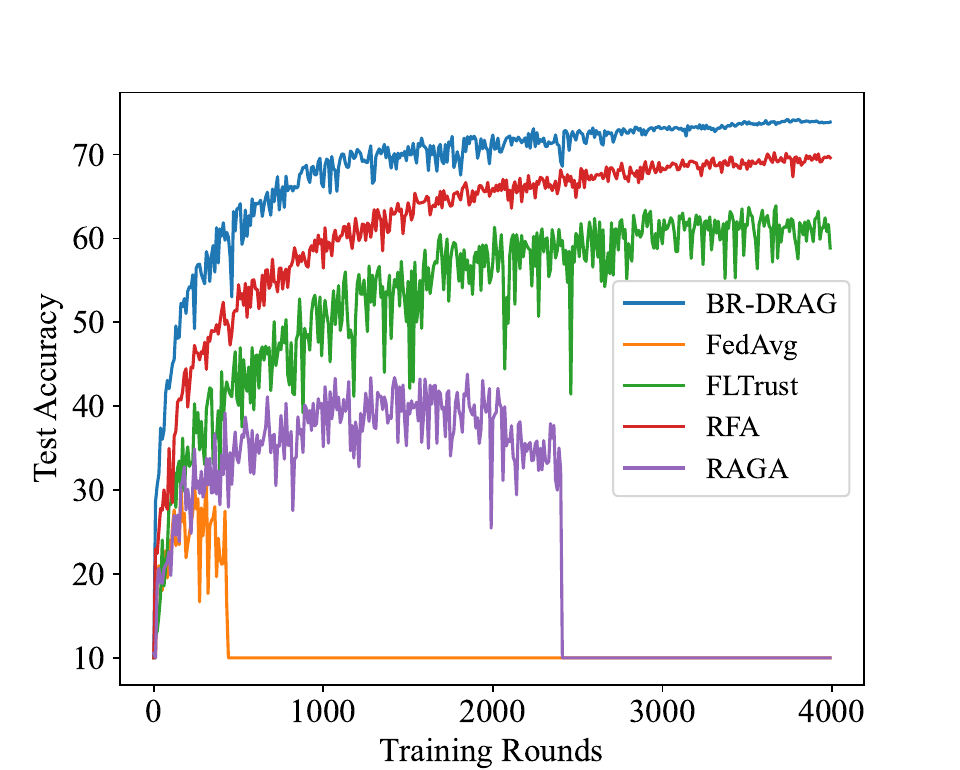}
		\label{fig_CIFAR10_BRDRAG_SF1}}
	\subfloat[{\centering $\beta=0.5$.}]{\includegraphics[width=1.65in]{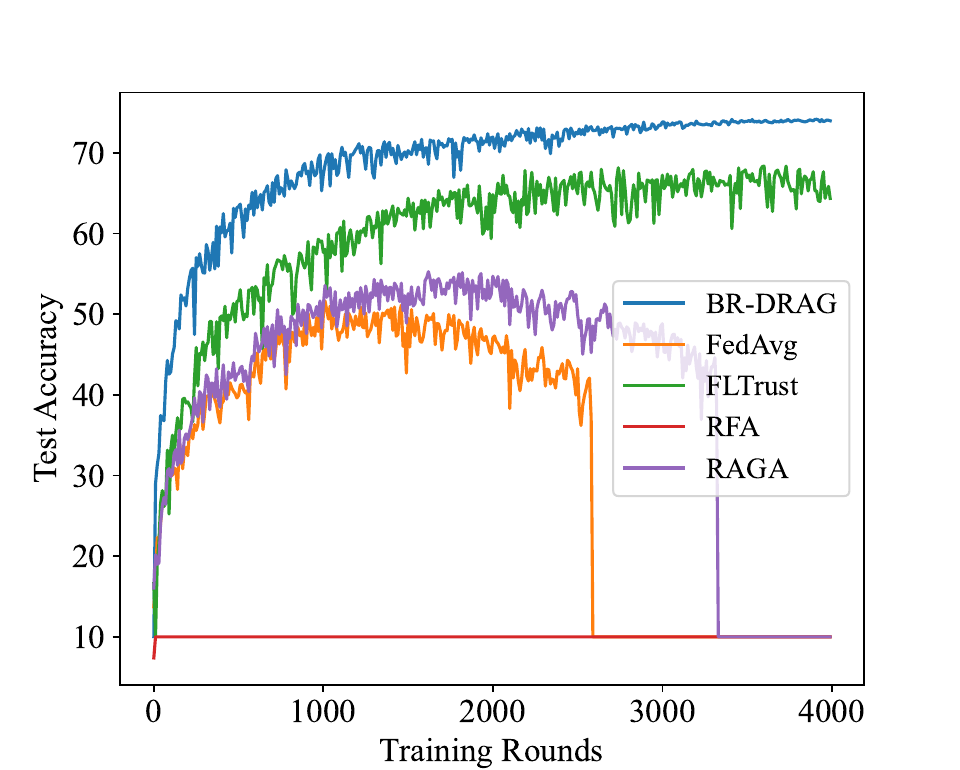}
		\label{fig_CIFAR10_BRDRAG_SF2}}
	\caption{Convergence performance of different algorithms on CIFAR-10 under the sign flipping attack.}
	\label{fig_CIFAR10_BRDRAG_SF}
\end{figure}

% \begin{figure}[!t]
% 	\centering
% 	\subfloat[{\centering $\beta=0.1$.}]{\includegraphics[width=1.65in]{new_figs/br_sign_cifar100_01.pdf}
% 		\label{fig_CIFAR100_BRDRAG_SF1}}
% 	\subfloat[{\centering $\beta=0.5$.}]{\includegraphics[width=1.65in]{new_figs/br_sign_cifar100_05.pdf}
% 		\label{fig_CIFAR100_BRDRAG_SF2}}
% 	\caption{Convergence performance of different algorithms on CIFAR-100 under the sign flipping attack.}
% 	\label{fig_CIFAR100_BRDRAG_SF}
% \end{figure}

% -------------------------------------------------------

% \begin{figure}[!t]
% 	\centering
% 	\subfloat[{\centering Accuracy on EMNIST with $\beta=0.1$.}]{\includegraphics[width=1.65in]{new_figs/br_label_emnist_01.pdf}
% 		\label{fig_EMNIST_BRDRAG_LF1}}
% 	\subfloat[{\centering Accuracy on EMNIST with $\beta=0.5$.}]{\includegraphics[width=1.65in]{new_figs/br_label_emnist_05.pdf}
% 		\label{fig_EMNIST_BRDRAG_LF2}}
% 	\caption{Convergence performance of different algorithms on EMNIST under the label flipping attack.}
% 	\label{fig_EMNIST_BRDRAG_LF}
% \end{figure}

% \begin{figure}[!t]
% 	\centering
% 	\subfloat[{\centering $\beta=0.1$.}]{\includegraphics[width=1.65in]{new_figs/br_label_cifar10_01.pdf}
% 		\label{fig_CIFAR10_BRDRAG_LF1}}
% 	\subfloat[{\centering $\beta=0.5$.}]{\includegraphics[width=1.65in]{new_figs/br_label_cifar10_05.pdf}
% 		\label{fig_CIFAR10_BRDRAG_LF2}}
% 	\caption{Convergence performance of different algorithms on CIFAR-10 under the label flipping attack.}
% 	\label{fig_CIFAR10_BRDRAG_LF}
% \end{figure}

\begin{figure}[!t]
	\centering
	\subfloat[{\centering $\beta=0.1$.}]{\includegraphics[width=1.65in]{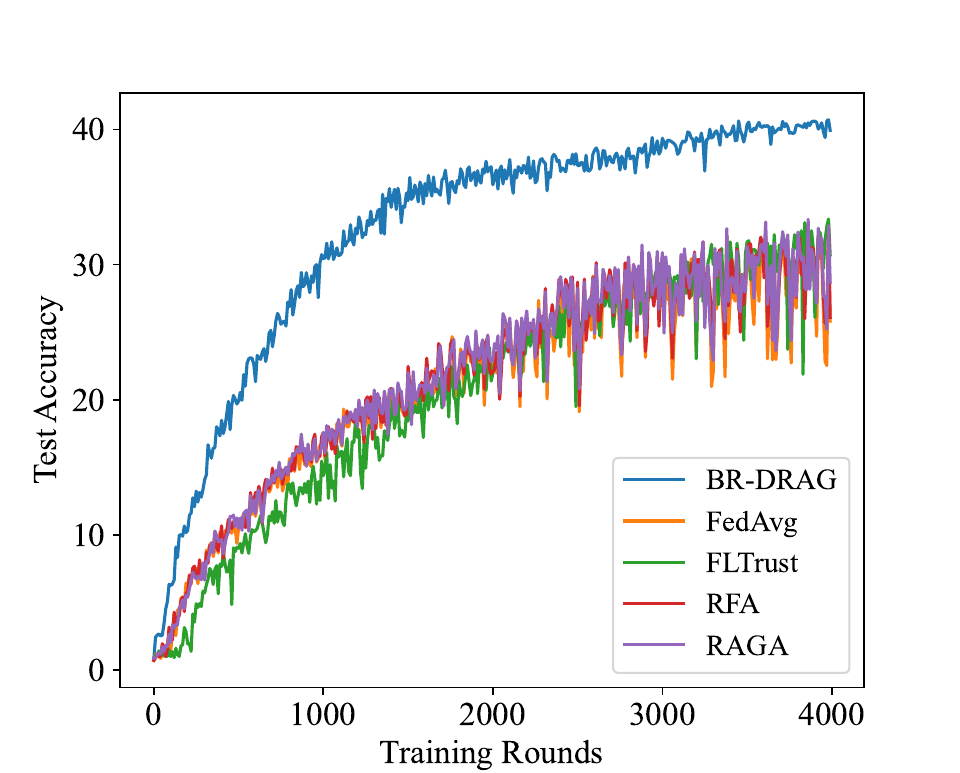}
		\label{fig_CIFAR100_BRDRAG_LF1}}
	\subfloat[{\centering $\beta=0.5$.}]{\includegraphics[width=1.65in]{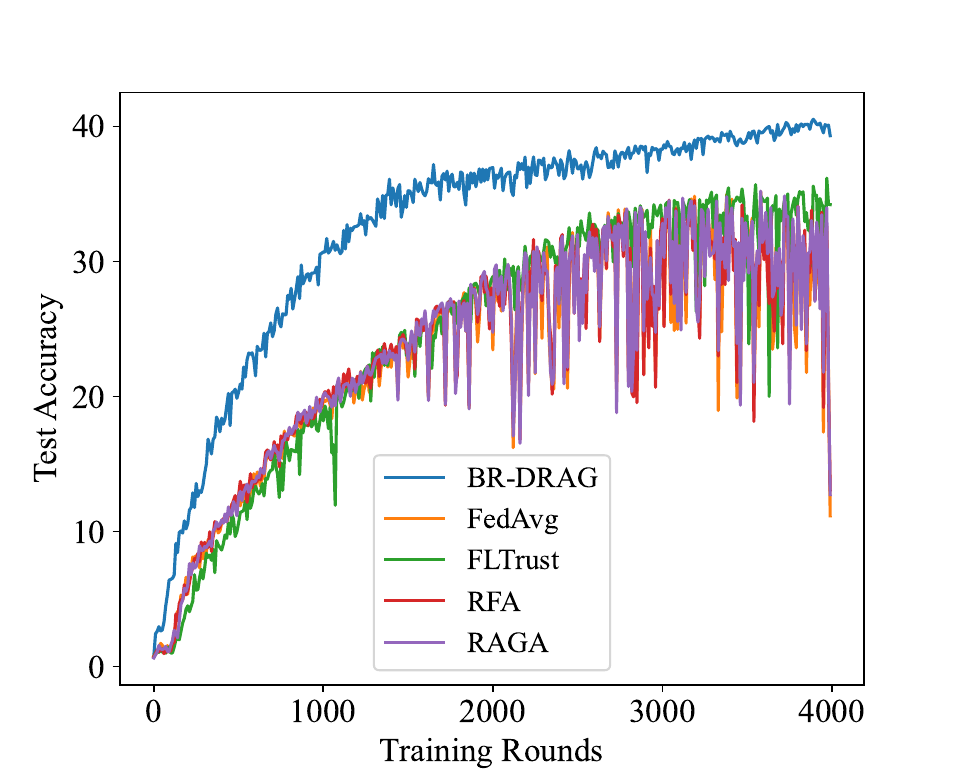}
		\label{fig_CIFAR100_BRDRAG_LF2}}
	\caption{Convergence performance of different algorithms on CIFAR-100 under the label flipping attack.}
	\label{fig_CIFAR100_BRDRAG_LF}
\end{figure}

\begin{figure}[!t]
	\centering
	\subfloat[ $\beta=0.1$. ]{\includegraphics[width=1.6in]{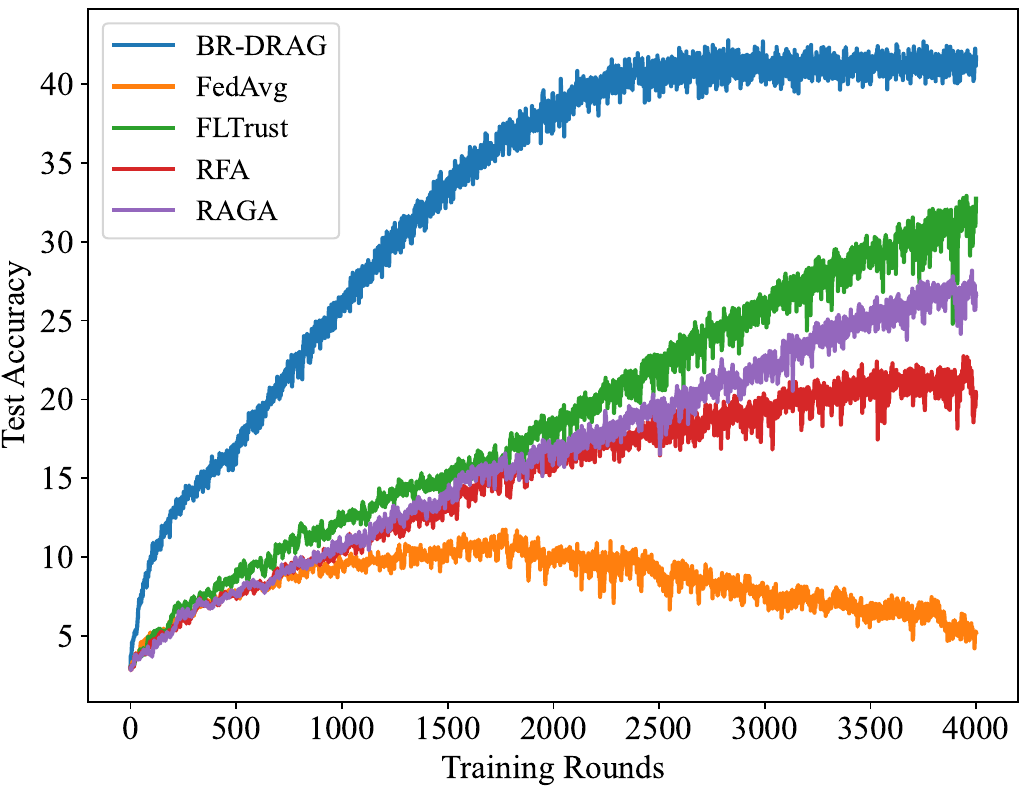}
		\label{fig_wrn2_1}}
	\hfill 
	\subfloat[$\beta=0.5$.]{\includegraphics[width=1.6in]{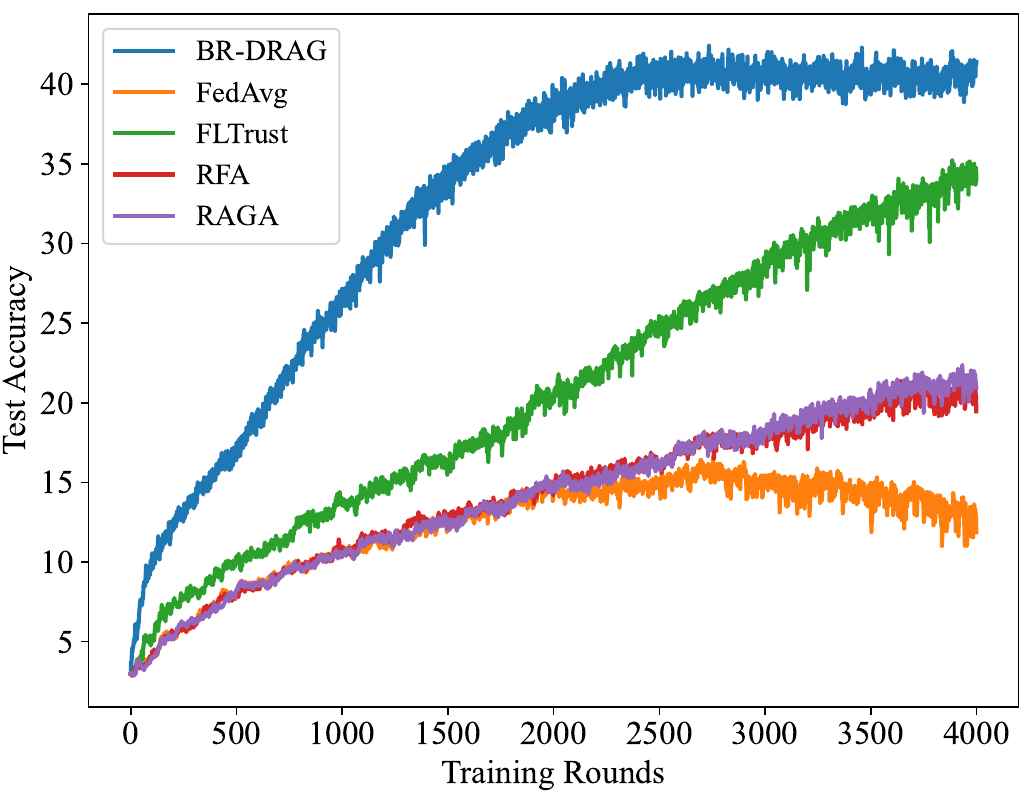}
		\label{fig_wrn2_2}}
	\caption{Convergence performance of different algorithms on CIFAR-100 with the WRN-28-4 model under the sign flipping attack.}
	\label{wrn2}
    \end{figure}

\begin{figure}[!t]
	\centering
	\subfloat[ $\beta=0.1$. ]{\includegraphics[width=1.6in]{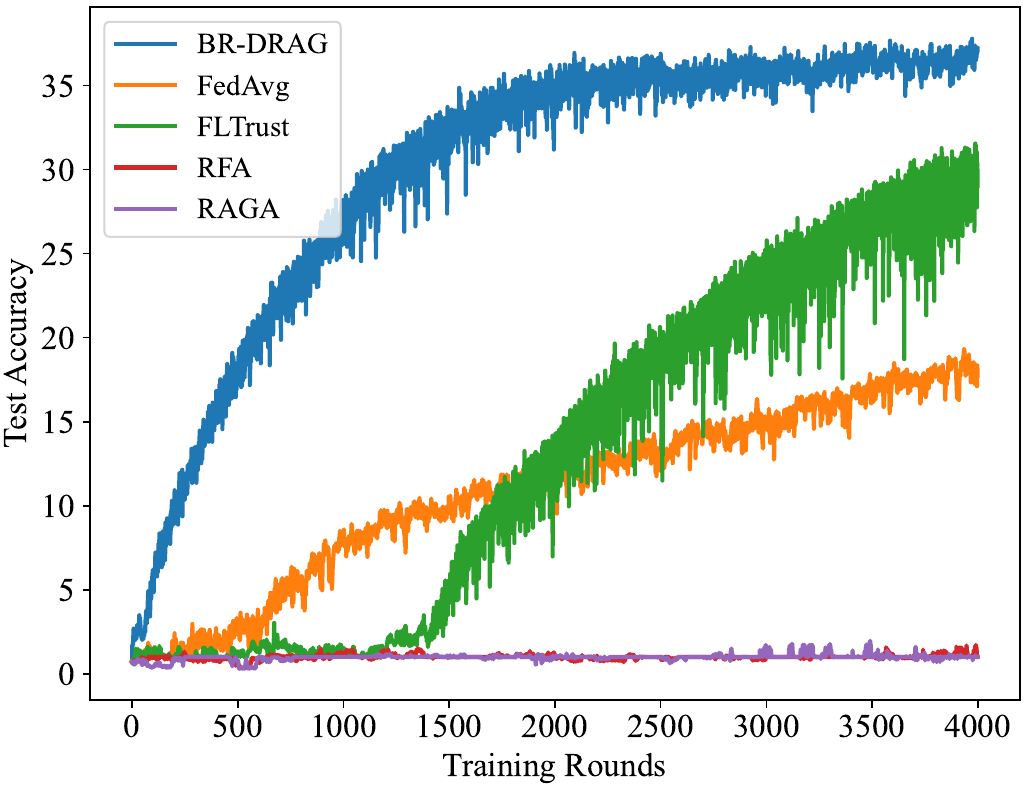}
		\label{fig_minmax1_1}}
	\hfill 
	\subfloat[$\beta=0.5$.]{\includegraphics[width=1.6in]{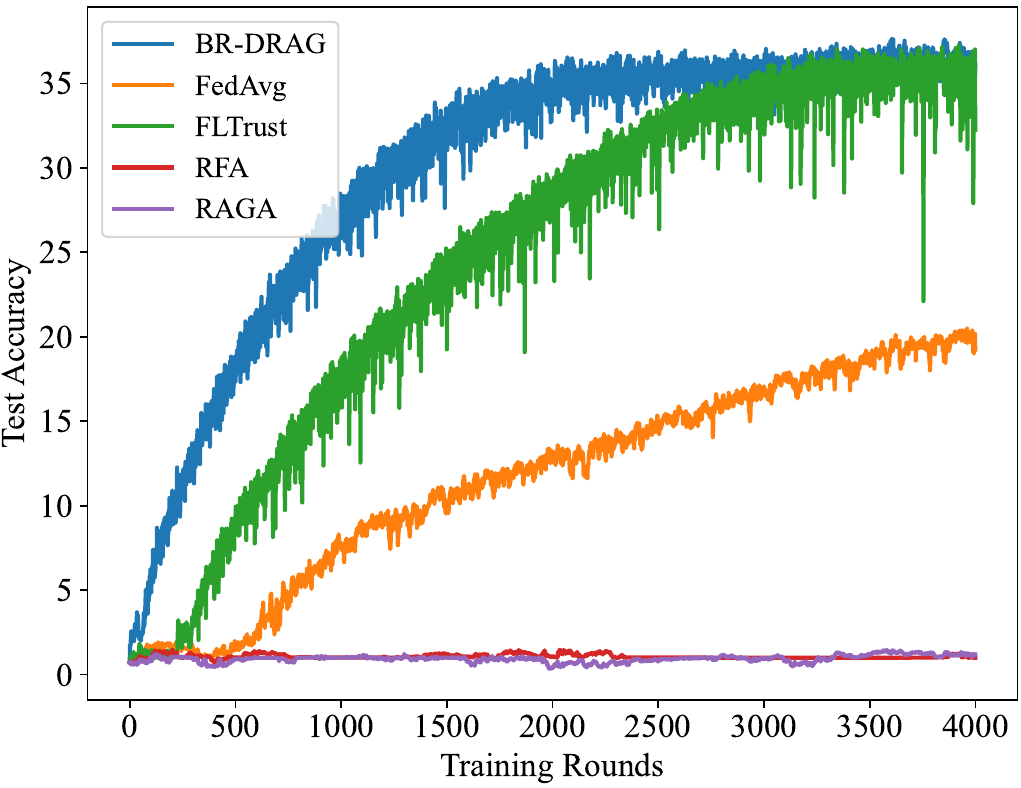}
		\label{fig_minmax1_2}}
	\caption{Convergence performance of different algorithms on CIFAR-100 under
the Min-Max attack.}
	\label{minmax1}
    \end{figure}

        \begin{figure}[!t]
	\centering
	\subfloat[ $\beta=0.1$. ]{\includegraphics[width=1.6in]{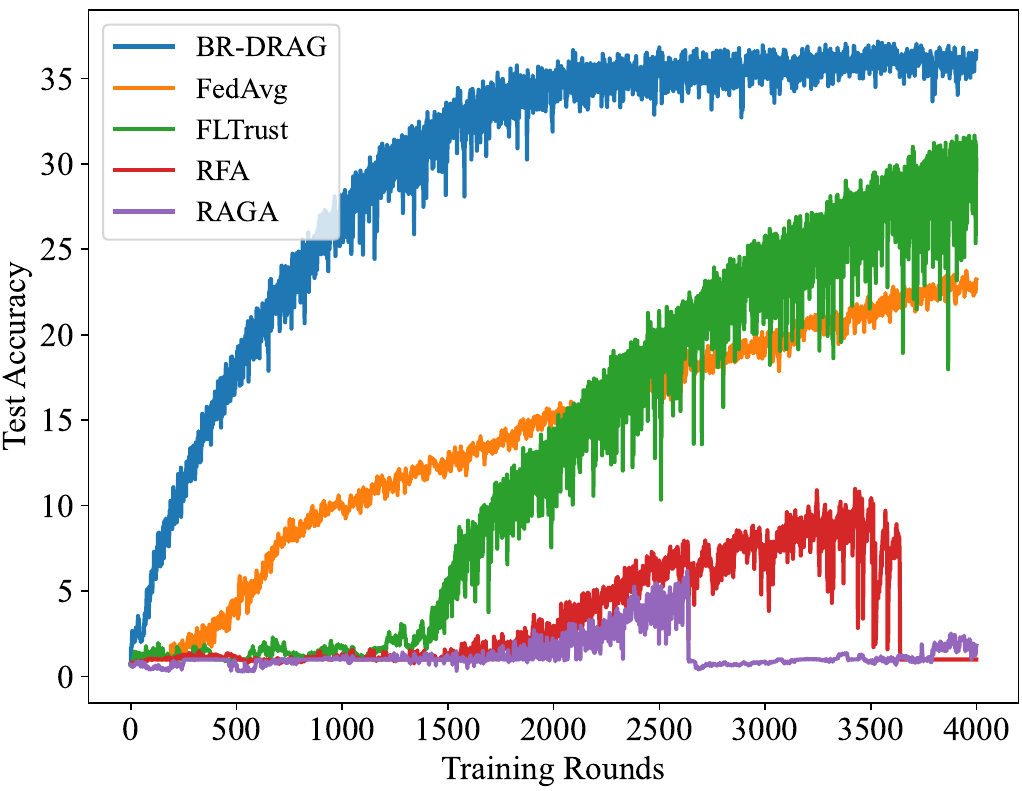}
		\label{fig_minsum1_1}}
	\hfill 
	\subfloat[$\beta=0.5$.]{\includegraphics[width=1.6in]{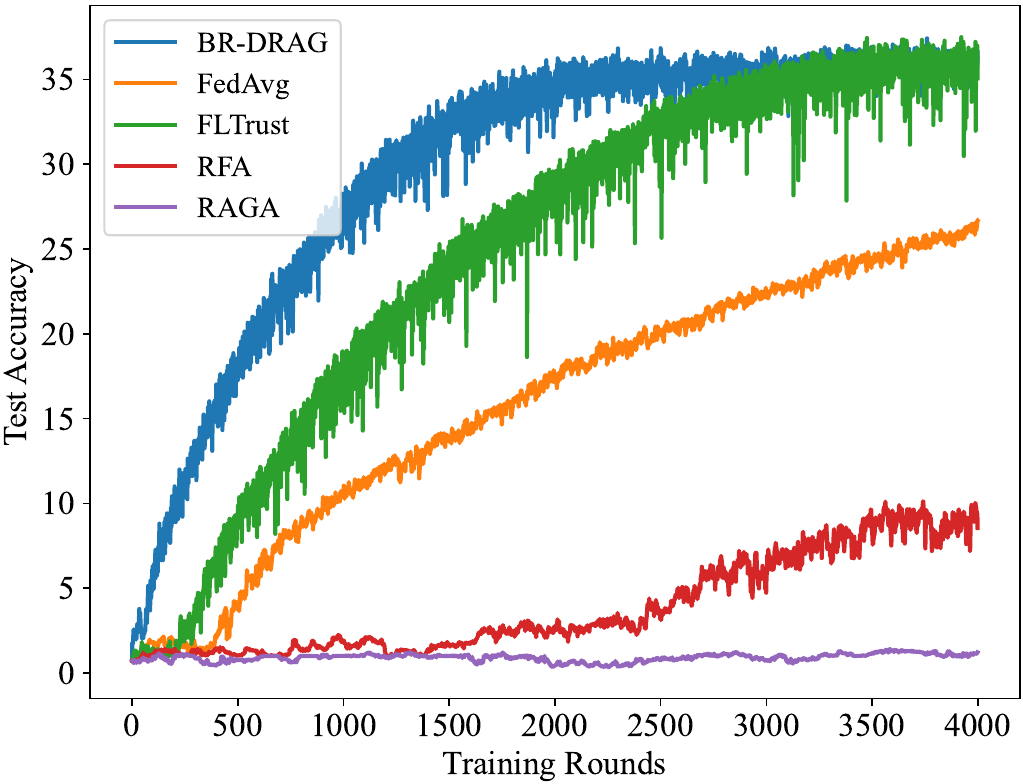}
		\label{fig_minsum1_2}}
	\caption{Convergence performance of different algorithms on CIFAR-100 under
the Min-Sum attack.}
	\label{minsum1}
    \end{figure}
% -------------------------------------------------------

\begin{figure}[!t]
	\centering
	\subfloat[{\centering $\beta=0.1$.}]{\includegraphics[width=1.65in]{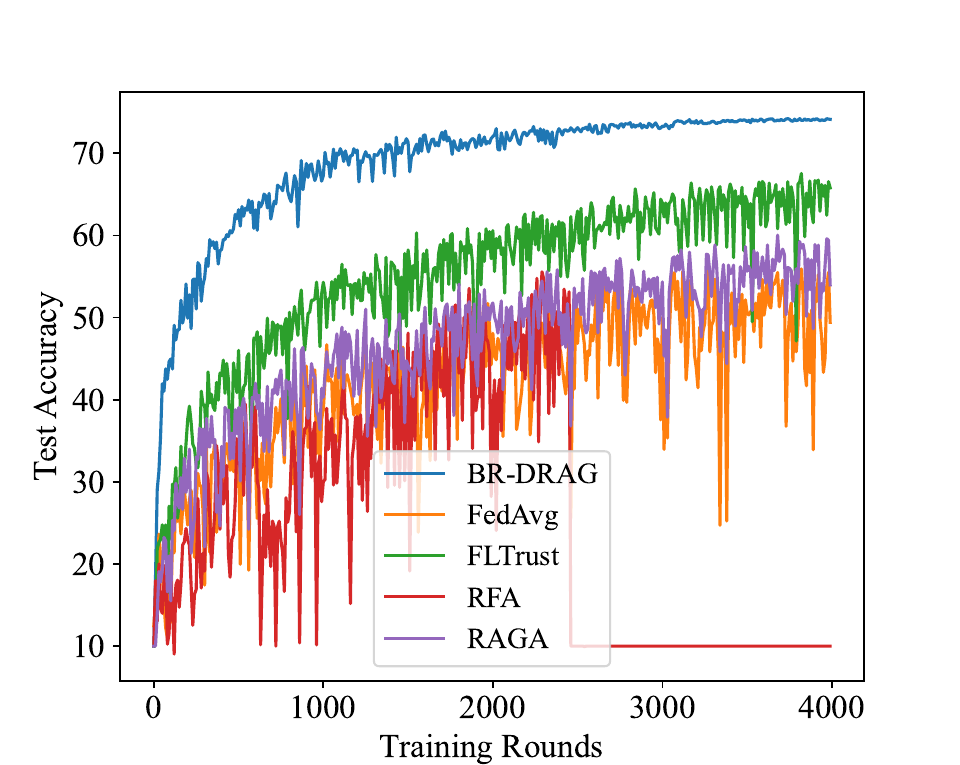}
		\label{fig_CIFAR10_BRDRAG_NI1_more_attacker}}
	\subfloat[{\centering $\beta=0.5$.}]{\includegraphics[width=1.65in]{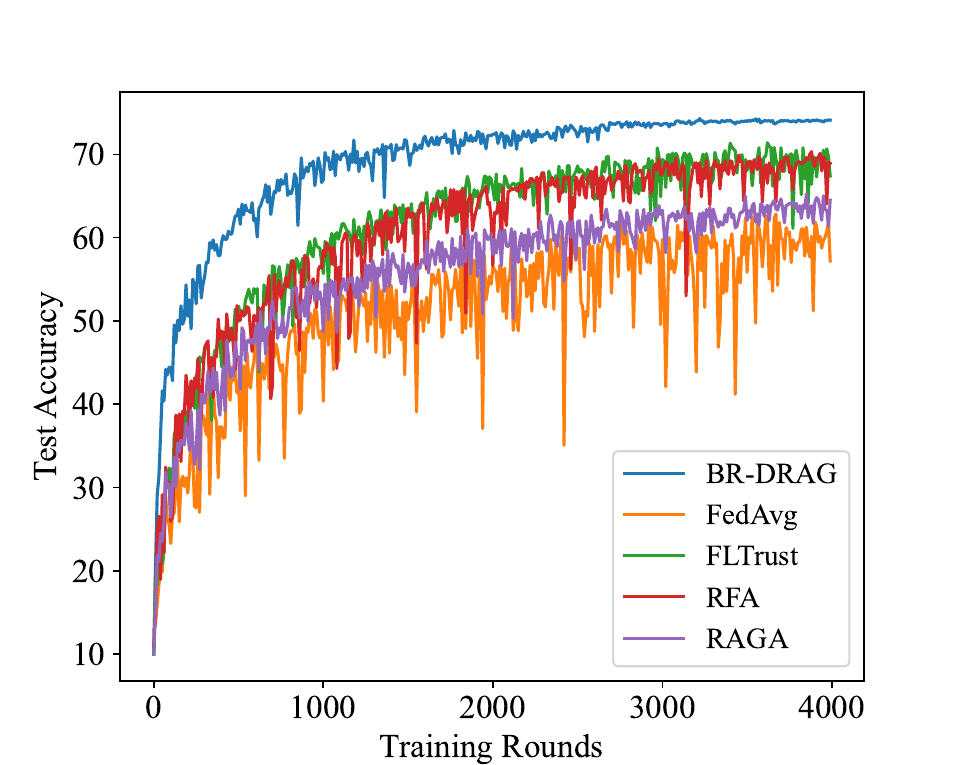}
		\label{fig_CIFAR10_BRDRAG_NI2_more_attacker}}
	\caption{Convergence performance of different algorithms on CIFAR-10 under the noise injection attack, where 60\% of the workers are attackers.}
	\label{fig_CIFAR10_BRDRAG_NI_more_attacker}
\end{figure}

\begin{figure}[!t]
	\centering
	\subfloat[{\centering $\beta=0.1$.}]{\includegraphics[width=1.65in]{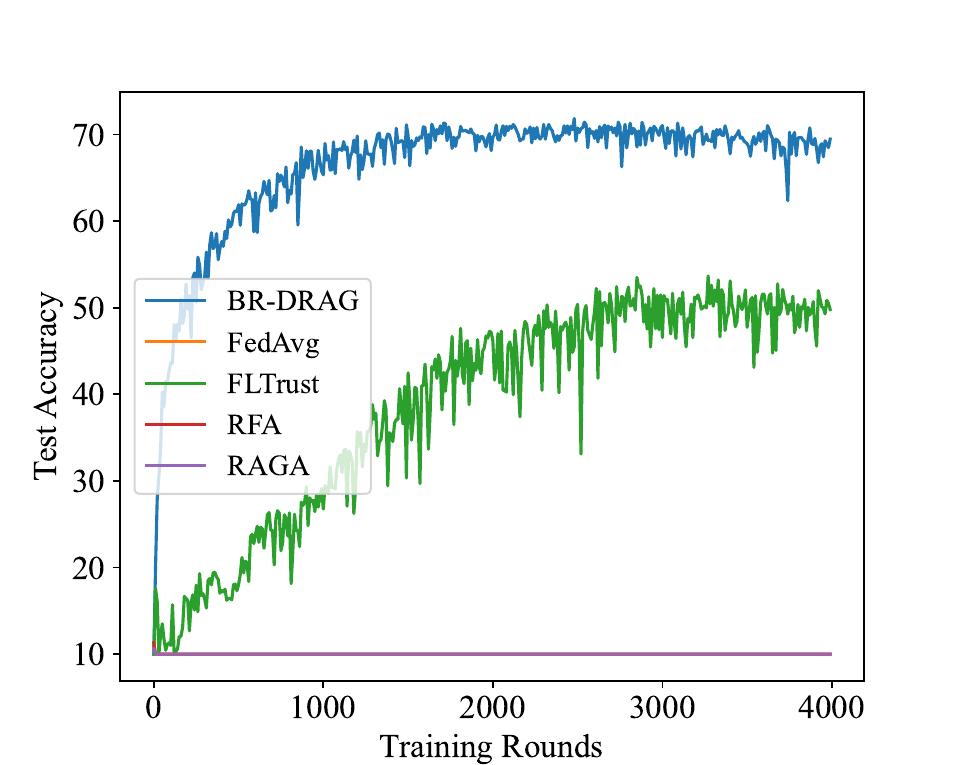}
		\label{fig_CIFAR10_BRDRAG_SF1_more_attacker}}
	\subfloat[{\centering $\beta=0.5$.}]{\includegraphics[width=1.65in]{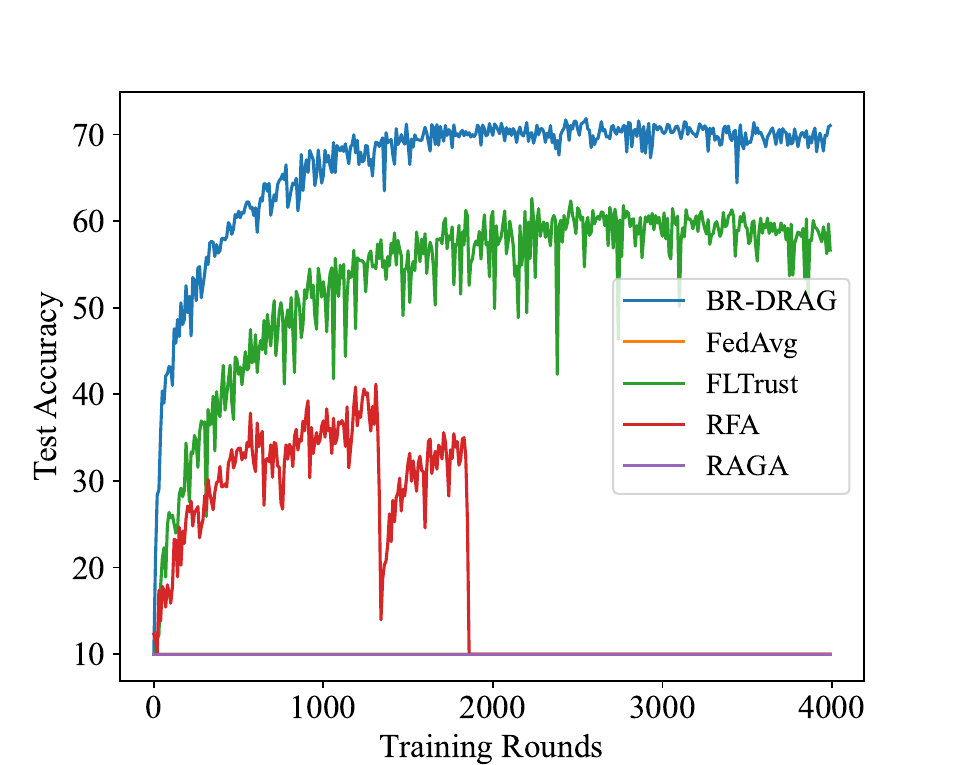}
		\label{fig_CIFAR10_BRDRAG_SF2_more_attacker}}
	\caption{Convergence performance of different algorithms on CIFAR-10 under the sign flipping attack, where 60\% of the workers in the system are attackers.}
	\label{fig_CIFAR10_BRDRAG_SF_more_attacker}
\end{figure}

\begin{figure}[!t]
	\centering
	\subfloat[{\centering $\beta=0.1$.}]{\includegraphics[width=1.65in]{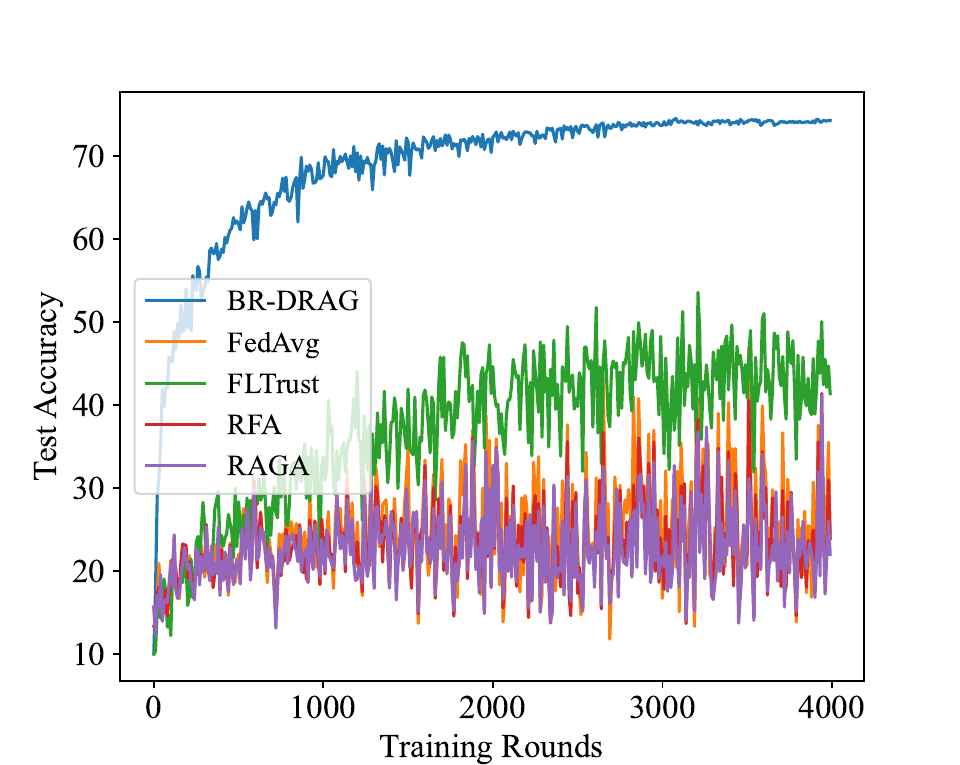}
		\label{fig_CIFAR10_BRDRAG_LF1_more_attacker}}
	\subfloat[{\centering $\beta=0.5$.}]{\includegraphics[width=1.65in]{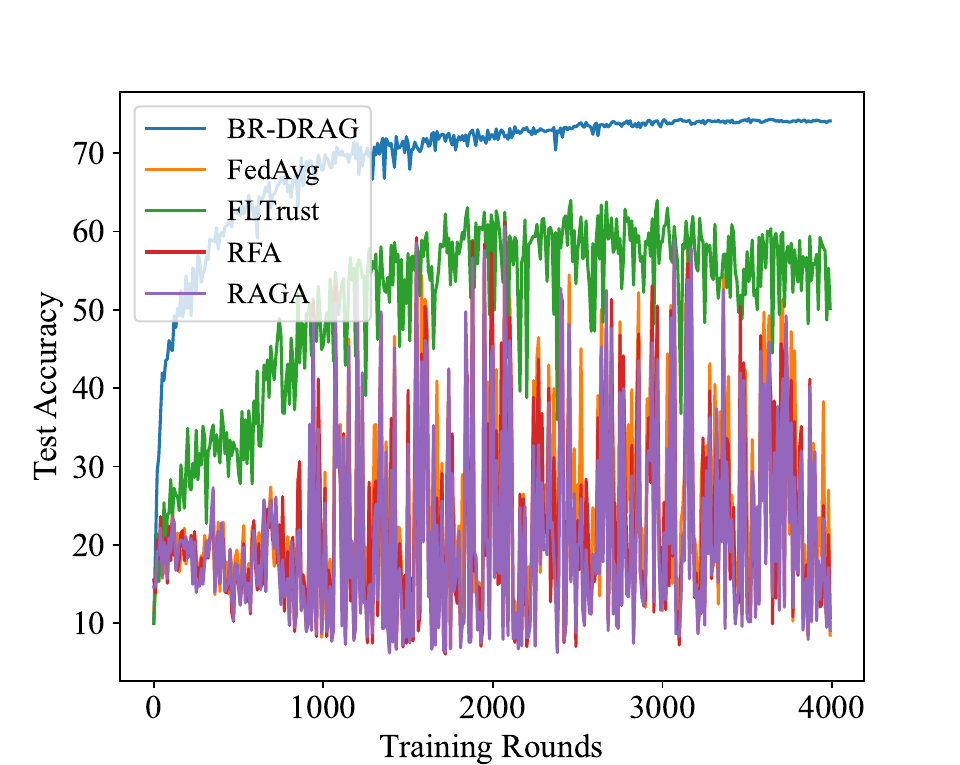}
		\label{fig_CIFAR10_BRDRAG_LF2_more_attacker}}
	\caption{Convergence performance of different algorithms on CIFAR-10 under the label flipping attack, where 60\% of the workers in the system are attackers.}
	\label{fig_CIFAR10_BRDRAG_LF_more_attacker}
\end{figure}

{\color{black}
We further experimentally corroborate that the computational cost of the PS is only slightly above that of a local client in each training round.
    Under the setting that the local clients and the PS employ computing hardware with the same capabilities, we record the computing time of the PS, and the average computing time of the selected devices per training round under different settings.
    Fig.~\ref{time} shows the ratio of the additional PS-side computation time, including the generation of the trusted reference direction $\mathbf{r}^t$ and the computation of the modified gradients $\mathbf{v}_m^t$, to the average local training time of the selected devices per round. 
    For BR-DRAG, it is noted that the time required to compute $\mathbf{r}^t$ and $\mathbf{v}_m^t$ at the PS is, on average, 139\% and 153\% of that required to compute local updates at the devices for CIFAR-10 and CIFAR-100, respectively, when the global and local computation capabilities are consistent. In practice, since the PS is typically equipped with much more powerful computing hardware than the devices, the time required to compute $\mathbf{r}^t$ would be comparatively negligible.
}
 \begin{figure}[!t]
	\centering
	\subfloat[ CIFAR-10. ]{\includegraphics[width=1.6in]{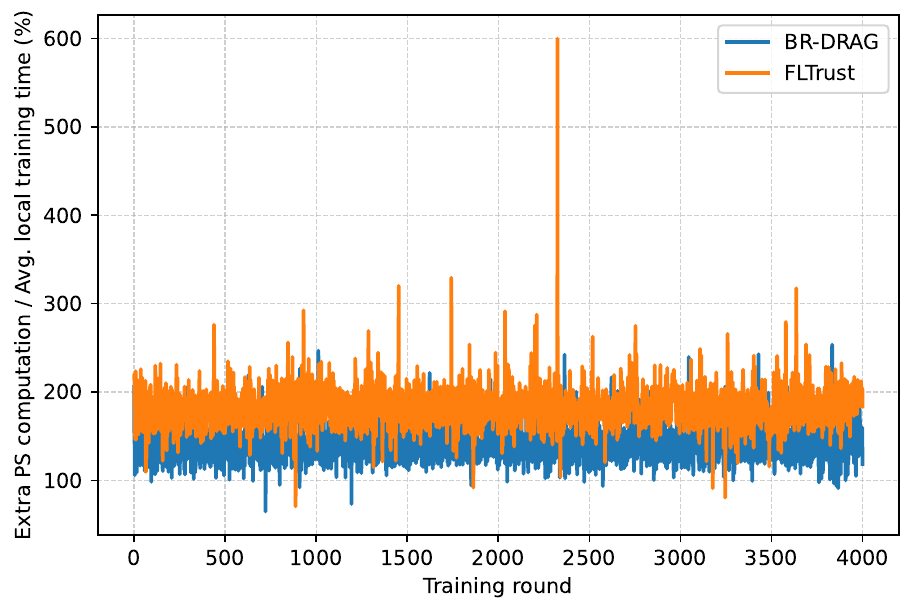}
		\label{fig_time_1}}
	\hfill 
	\subfloat[CIFAR-100.]{\includegraphics[width=1.6in]{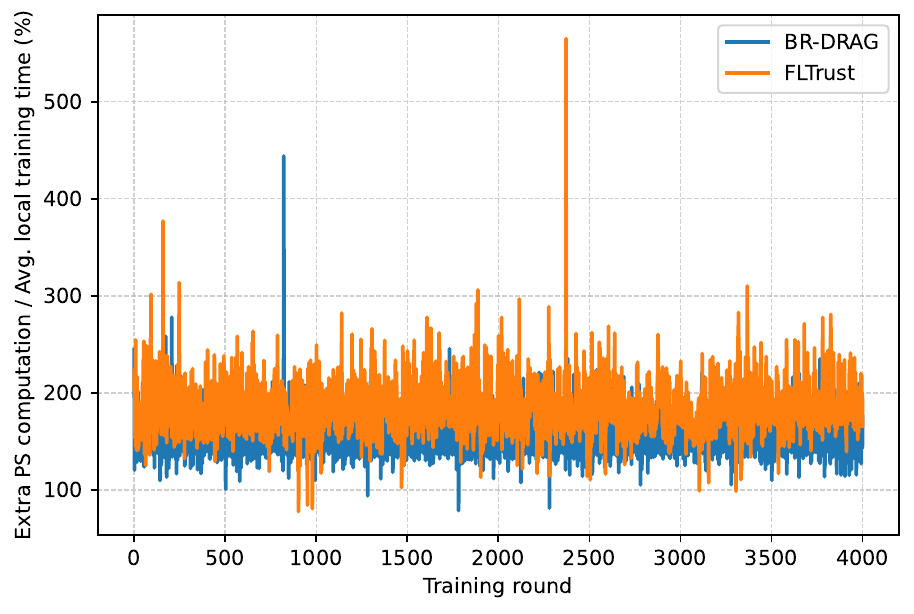}
		\label{fig_time_2}}
	\caption{The ratio of the computation time in (14)--(16) to the average local computation time of the selected devices of BR-DRAG under the noise injection attack in each training round.}
	\label{time}
    \end{figure}

To test the sensitivity of BR-DRAG to the root dataset $\mathcal{D}_{\mathrm{root}}$, for the CIFAR-100 dataset, we consider the size of $|\mathcal{D}_{\mathrm{root}}| = 2000, 3000, 4000, 5000, 6000 $. For $\mathcal{D}_{\mathrm{root}}$ with $|\mathcal{D}_{\mathrm{root}}| = 3000$ data samples, we mix portions of the data into the dataset subjected to a noise injection attack at ratios of \{0, 2.5\%, 5\%, 7.5\%, 10\%\} to further test the impact of data quality on training performance. To simulate distribution inconsistency, we also induce a non-IID data distribution by removing all samples corresponding to classes 1, 2, and 3 from the dataset.
    Fig.~\ref{sensitivity} shows the sensitivity of BR-DRAG to the size, quality, class coverage, and distribution mismatch of $\mathcal{D}_{\mathrm{root}}$.
    It is observed in Fig. \ref{sensitivity}(a) that when the size of $\mathcal{D}_{\mathrm{root}}$ decreases, the consistency between the constructed reference direction $\mathbf{r}^t$ and the true global gradient is reduced, which, in turn, penalizes the generalization ability of BR-DRAG and leads to decreased accuracy.
    
    {\color{black}
Regarding data quality, contaminated samples in $\mathcal{D}_{\mathrm{root}}$ (e.g., noisy labels) can misalign $\mathbf{r}^t$ with the true descent direction. Fig.~\ref{sensitivity}(b) evaluates this sitation by injecting noise-injection-attacked data into $\mathcal{D}_{\mathrm{root}}$  
  at increasing ratios. Corrupted data degrades convergence, as the model is forced to learn spurious features from 
  the contaminated reference direction. Nevertheless, BR-DRAG consistently outperforms FLTrust. For instance, BR-DRAG   
  with 5\% corrupted data retains comparable accuracy with FLTrust operating under a clean $\mathcal{D}_{\mathrm{root}}$. This resilience is attributable to the proposed DoD-based correction mechanism, which suppresses the misaligned component of each received update.
    
 The root dataset $\mathcal{D}_{\text{root}}$ may have incomplete class coverage or a skewed    
  distribution, compared to the clients' local data. Fig.~\ref{sensitivity}(c) simulates the scenario by removing all samples of classes 1, 2, and 3 
  from $\mathcal{D}_{\mathrm{root}}$ while keeping the client data intact. The accuracy of BR-DRAG decreases under this
   mismatch, as the reference direction $\mathbf{r}^t$ constructed based on the incomplete $\mathcal{D}_{\mathrm{root}}$ via (12) deviates from the 
  true global gradient for the omitted classes and the correction in (15) inherits this bias. Nevertheless, BR-DRAG with this skewed $\mathcal{D}_{\mathrm{root}}$ outperforms FLTrust operating even under an unbiased $\mathcal{D}_{\mathrm{root}}$. This advantage stems from the DoD-based interpolation described in (15): Rather than discarding or scaling the entire update based on a trust score, BR-DRAG retains the aligned component of each local update even when $\mathbf{r}^t$ is biased, thereby partially compensating for the distribution mismatch.

}
\begin{figure}[htbp]
        \centering
        % \renewcommand{\thefigure}{R\arabic{figure}}
        % 第一个子图
        \subfloat[Sensitivity of $\mathcal{D}_{\mathrm{root}}$ to different data sizes.]{
            \includegraphics[width=1\linewidth]{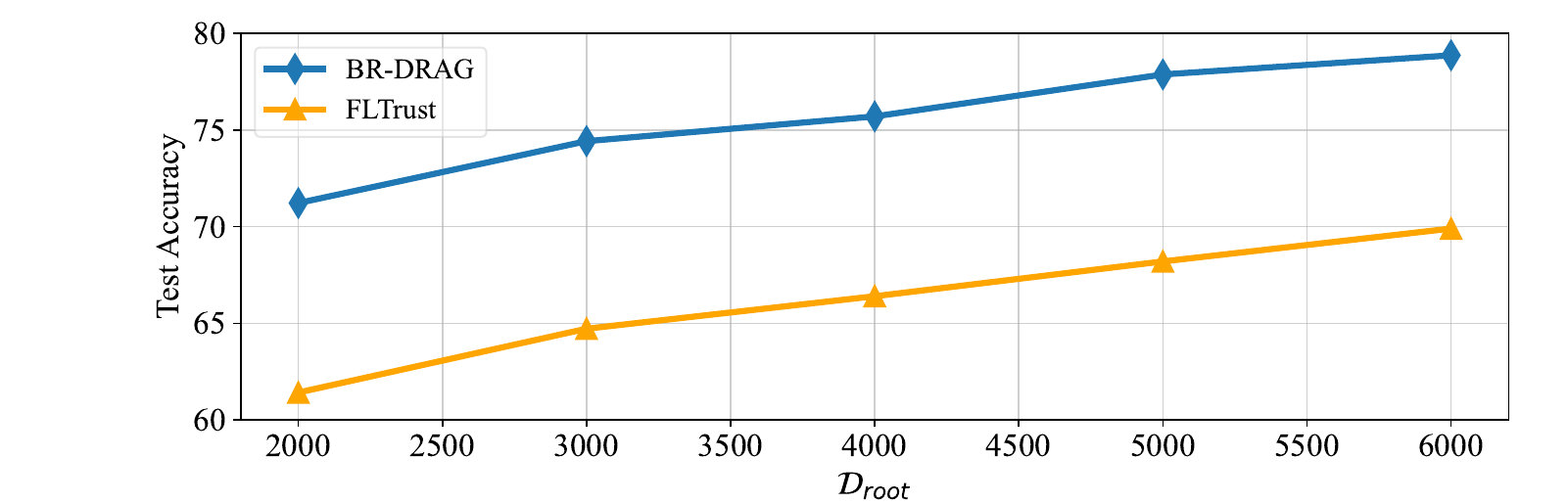}
            \label{fig:sub1}
        }
        \\ % 强制换行，实现垂直堆叠
        % 第二个子图
        \subfloat[Sensitivity of $\mathcal{D}_{\mathrm{root}}$ to corrupted data.]{
            \includegraphics[width=1\linewidth]{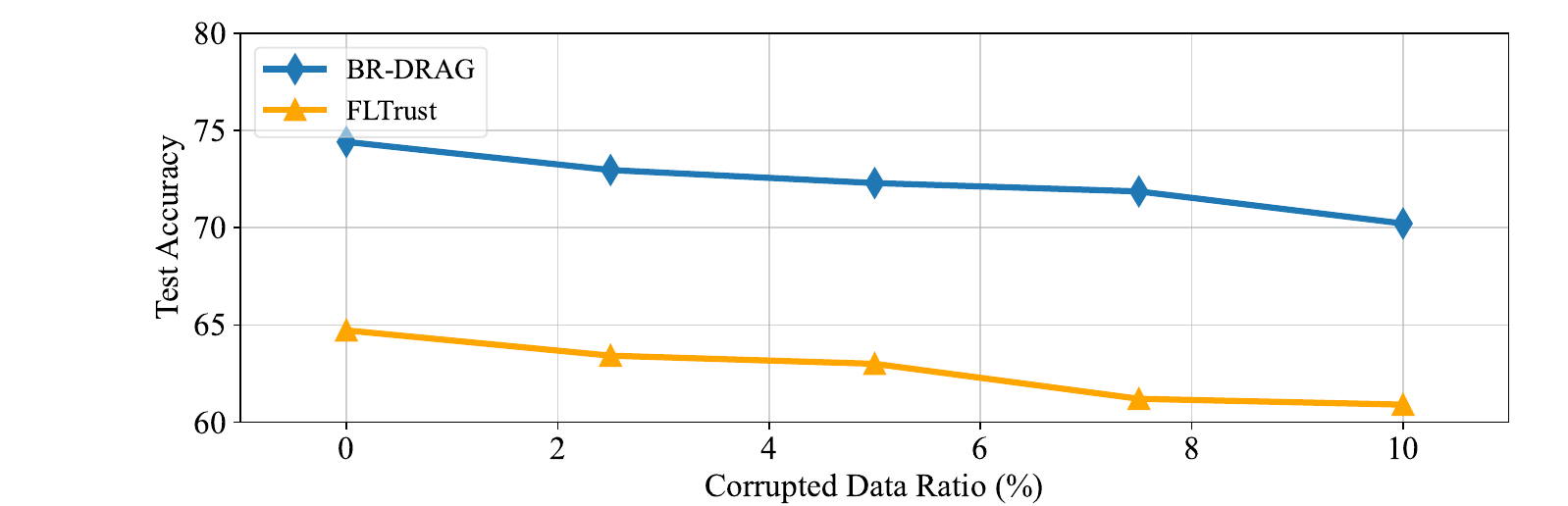}
            \label{fig:sub2}
        }
        \\ % 强制换行，实现垂直堆叠
        % 第三个子图
        \subfloat[Sensitivity of $\mathcal{D}_{\mathrm{root}}$ to non-IID distributions.]{
            \includegraphics[width=1\linewidth]{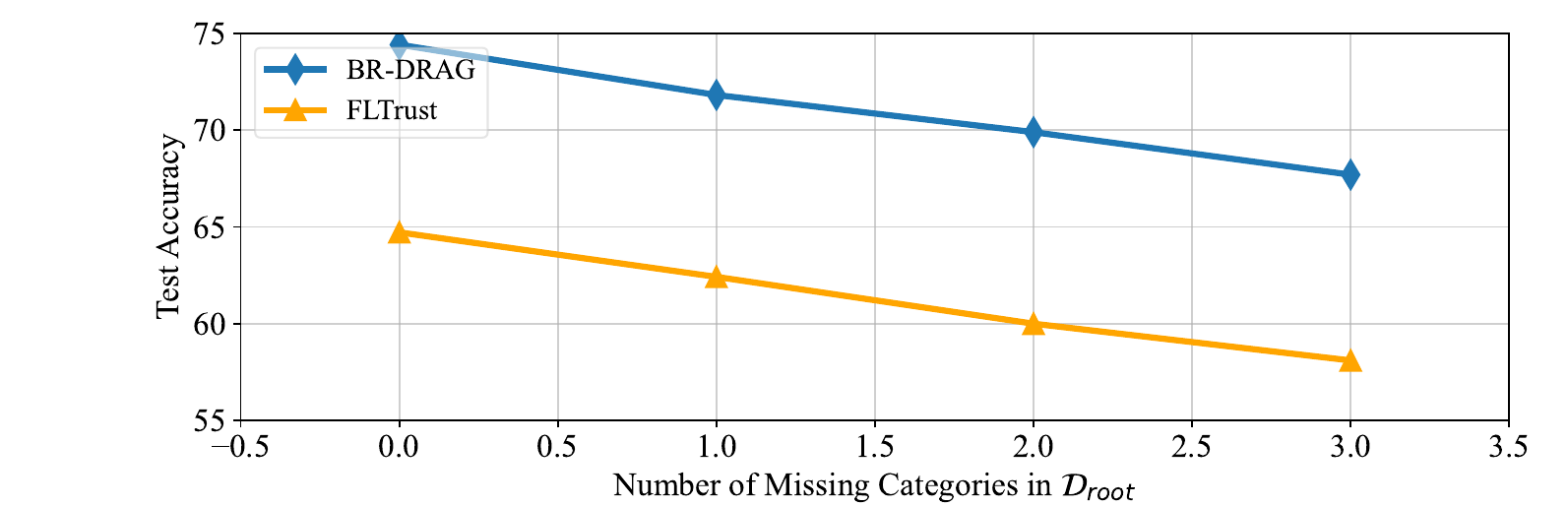}
            \label{fig:sub3}
        }
        \caption{Sensitivity of BR-DRAG to $\mathcal{D}_{\mathrm{root}}$ on CIFAR-10 with $\beta=0.1$ and noise injection attack.}
        \label{sensitivity}
    \end{figure}
 
    {\color{black}
    
To examine the benefit of the DoD-based correction in BR-DRAG, we replace it with FLTrust's trust-score mechanism; i.e., (15) is replaced by $ \mathbf{v}_m^t = \max\!\big(0, x_m^t\big) \cdot \frac{\|\mathbf{r}^t\|}{\|\mathbf{g}_m^t\|}\mathbf{g}_m^t$.
All other components of BR-DRAG, including the trusted root dataset $\mathcal{D}_{\mathrm{root}}$ construction, the trusted reference direction $\mathbf{r}^t$ computation via (13), the norm normalization $\|\mathbf{r}^t\|/\|\mathbf{g}_m^t\|$, and the global model update rule (14), remain unchanged.
Fig.~\ref{ablation} shows the convergence performance of BR-DRAG and its counterpart with DoD-based correction replaced by FLTrust's trust-score mechanism, under sign flipping attacks on the CIFAR-10 and CIFAR-100 datasets with $\beta=0.1$. 
BR-DRAG with the new DoD-based correction consistently outperforms BR-DRAG with the clipped cosine similarity-based trust score mechanism (adopted from FLTrust \cite{cao2020fltrust}) across different datasets. This is because the trust-score mechanism adopted in FLTrust scales the entire normalized update by $\max(0, x_m^t)$ without correcting its directional deviation from $\mathbf{r}^t$. When $x_m^t$ is positive but small (i.e., moderate misalignment), the trust-score mechanism retains the update in its original direction with a reduced weight, while the proposed DoD correction actively steers the update toward $\mathbf{r}^t$ through the interpolation term $\lambda_m^t\mathbf{r}^t$, thereby achieving tighter alignment with the global objective. When $x_m^t < 0$, the trust-score mechanism zeros out the update entirely, discarding potentially useful gradient information. By contrast, the DoD correction reverses the misaligned component and leverages it effectively.
    }
\begin{figure}[!t]
    \centering
    \subfloat[Sign flipping, CIFAR-10.]{\includegraphics[width=1.6in]{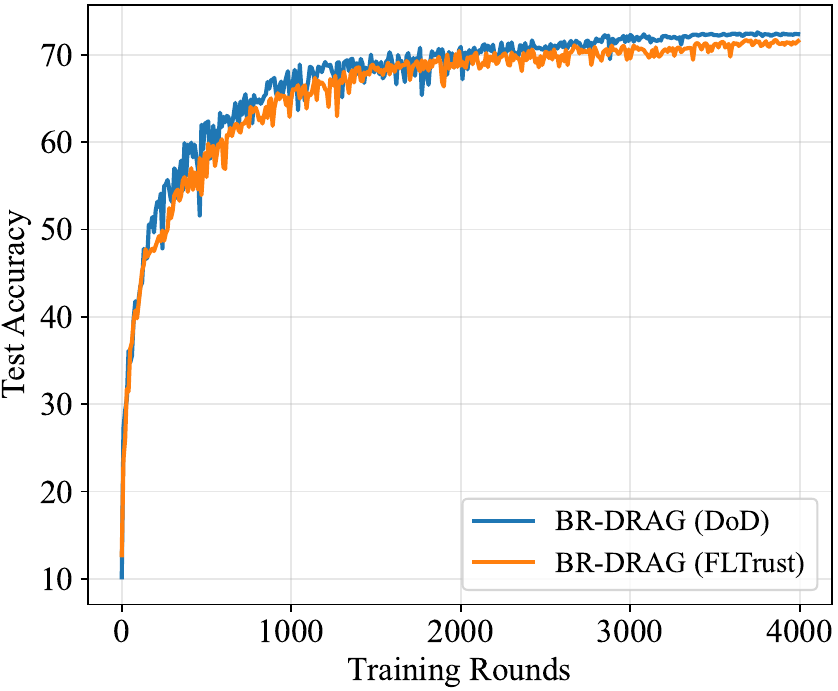}
        \label{fig_abl_1}}
    \subfloat[Sign flipping, CIFAR-100.]{\includegraphics[width=1.6in]{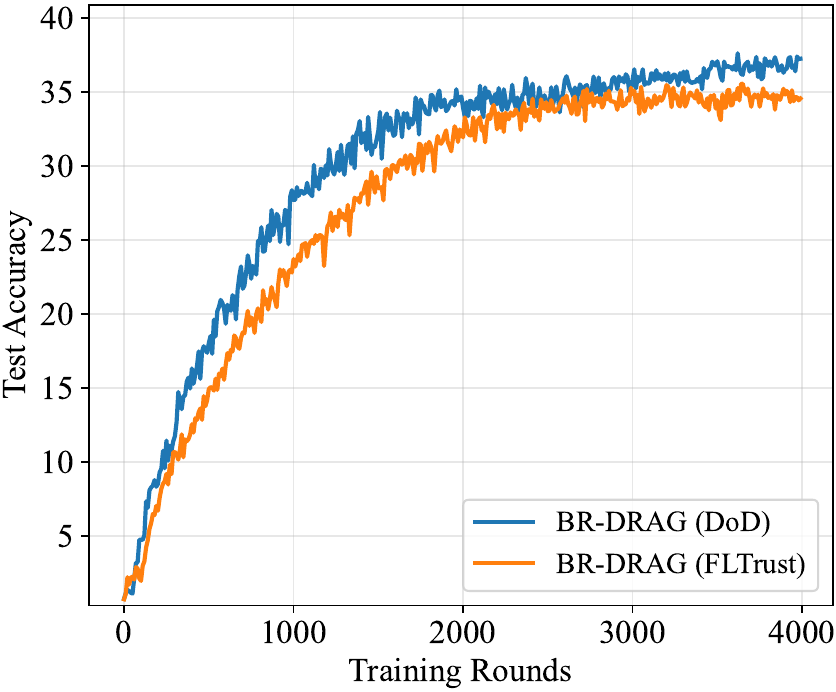}
        \label{fig_abl_2}}
    \caption{
    Comparison between BR-DRAG with DoD-based correction and BR-DRAG with FLTrust's trust-score mechanism under different datasets with $\beta=0.1$.}
    \label{ablation}
\end{figure}

\section{Conclusion and Future Work}
% In this work, we proposed a novel scheme justified by the name divergence-based adaptive aggregation (DRAG) to deal with the client drift issue in heterogeneous data distribution with local SGD. Unlike most methods that use control variates or regularization terms, the proposed DRAG algorithm heuristically drags each local gradient towards the reference direction according to its degree of divergence. The convergence analysis of DRAG is rigorously established to corroborate its feasibility theoretically. The performance of DRAG is also tested against state-of-the-art algorithms on EMNIST and CIFAR-10 datasets, the results of which substantiate the superiority of DRAG. Additionally, DRAG is proved to be robust to byzantine attacks that scale and reverse the direction of the local gradient.

% DRAG is strongly adaptive and has the potential in other scenarios such as its interaction with the wireless medium, its robustness against more complicated attacks, etc. These interesting directions will be explored in future works.

In this paper, we designed DRAG to address client drifts caused by data heterogeneity in FL systems. With the proposed reference direction and DoD, the local updates can be adaptively dragged towards the global update direction.
We further developed BR-DRAG, a Byzantine-resilient variant of DRAG, to resist Byzantine attacks.
Comprehensive convergence analyses of DRAG and BR-DRAG with non-convex loss functions were conducted under data heterogeneity, partial worker participation, and Byzantine attacks.
It was revealed that both DRAG and BR-DRAG achieve fast convergence compared to the existing algorithms.
Numerical results validated the superiority of DRAG in mitigating data heterogeneity and accelerating model convergence, and the robustness of BR-DRAG against malicious attacks.
In the future, we will extend DRAG and BR-DRAG to decentralized FL settings.

\section*{Appendix A \\ Proof of Theorem \ref{drag theorem}}
% \subsection{Lemma 1}
% \textit{The global update $\Delta^t$ at round $t$ can be written in the form as}
% \begin{align}
%     \Delta^t=(1-c)\mathbf{g}^t+\frac{c}{M}\sum_{m\in\mathcal{M}}\frac{\left\langle\mathbf{g}_m^t, \mathbf{r}^t\right\rangle}{\|\mathbf{g}_m^t\|\|\mathbf{r}^t\|}\mathbf{g}_m^t+\frac{c}{M}\sum_{m\in\mathcal{M}}\frac{\|\mathbf{g}_m^t\|\|\mathbf{r}^t\|-\left\langle\mathbf{g}_m^t, \mathbf{r}^t\right\rangle}{\|\mathbf{r}^t\|^2}\mathbf{r}^t.\label{lemma1}
% \end{align}
% \begin{proof}
%     With the definition of $\Delta^t$, we can write:
%     \begin{align}
%         \Delta^t&=\frac{1}{M}\sum_{m=1}^M\left((1-\lambda_m^t)\mathbf{g}_m^t+\frac{\lambda_m^t\|\mathbf{g}_m^t\|}{\|\mathbf{r}^t\|}\mathbf{r}^t\right)\notag\\
%         \overset{()}{=}\frac{1}{M}\sum_{m=1}^M\left((1-\lambda_m^t)\mathbf{g}_m^t+\frac{\lambda_m^t\|\mathbf{g}_m^t\|}{\|\mathbf{r}^t\|}\mathbf{r}^t\right)\notag\\
%     \end{align}
% \end{proof}

Based on the $L$-smoothness of $f$, we first have:
\begin{align}
    &\mathbb{E}[f(\boldsymbol{\theta}^{t+1})]  \!\leq \!\!f(\boldsymbol{\theta}^{t}) \!+\! \left\langle\nabla\! f(\boldsymbol{\theta}^{t}), \mathbb{E}\!\left[\boldsymbol{\theta}^{t+1}\!\!\!-\!\boldsymbol{\theta}^t\right]\right\rangle\!\! +\!\! \frac{L}{2} \!\mathbb{E}\!\left[\|\boldsymbol{\theta}^{t+1}\!\!\!\!-\!\!\boldsymbol{\theta}^t\|^2\right]\notag\\
    &=f(\boldsymbol{\theta}^{t}) + \left\langle\nabla f(\boldsymbol{\theta}^{t}), \mathbb{E}\left[\Delta^t+a_1-a_1\right]\right\rangle + \frac{L}{2} \mathbb{E}\left[\|\Delta^t\|^2\right]\notag\\
    & \overset{(a)}{=} f(\boldsymbol{\theta}^{t}) -(1-c)\eta U\mathbb{E}\left[\|\nabla f(\boldsymbol{\theta}^t)\|^2\right]+ \frac{L}{2}T_1+T_2,\label{one-step} 
    % &=f(\boldsymbol{\theta}^{t}) -(1-c)\eta U\|\nabla f(\boldsymbol{\theta}^t)\|^2+ \underbrace{\left\langle\nabla f(\boldsymbol{\theta}^{t}), \mathbb{E}\left[\Delta^t+a_1\right]\right\rangle}_{T_2} + \frac{L}{2} \underbrace{\mathbb{E}\left[\|\Delta^t\|^2\right]}_{T_1}.\label{one-step}
\end{align}
where $\Delta^t = \boldsymbol{\theta}^{t+1}-\boldsymbol{\theta}^t$, and $a_1=(1-c)\eta U \nabla f(\boldsymbol{\theta}^t)$; $(a)$ is due to $T_1=\mathbb{E}\left[\|\Delta^t\|^2\right]$ and $T_2=\left\langle\nabla f(\boldsymbol{\theta}^{t}), \mathbb{E}\left[\Delta^t+a_1\right]\right\rangle$. 

Next, we bound $T_1$ and $T_2$. For $T_1$, we have
\begin{align}  \label{T1}
    & T_1 = \mathbb{E}\Bigg[\Big\|\frac{1}{S}\sum_{m\in\mathcal{S}^t}\big((1-\lambda_m^t)\mathbf{g}_m^t+\frac{\lambda_m^t\|\mathbf{g}_m^t\|}{\|\mathbf{r}^t\|}\mathbf{r}^t\big)\Big\|^2\Bigg]\notag\\
    &\overset{(a)}{\leq}\mathbb{E}\Big[\Big(\frac{1}{S}\sum_{m\in\mathcal{S}^t}\big((1-\lambda_m^t)\|\mathbf{g}_m^t\|+\frac{\lambda_m^t\|\mathbf{g}_m^t\|}{\|\mathbf{r}^t\|}\|\mathbf{r}^t\|\big)\Big)^2\Big]\notag\\
    % &\overset{(c)}{=}\frac{1}{S^2}\mathbb{E}\left[\left(\sum_{m\in\mathcal{S}^t}\|\mathbf{g}_m^t\|\right)^2\right] \\ \notag
    % & \overset{(c)}{=}\frac{1}{S^2 }\left(\sum_{m\in\mathcal{S}^t}\mathbb{E}\left[\left\|\sum_{u=0}^{U-1} \eta \nabla F(\boldsymbol{\theta}_m^{t,u};z_{m}^{t,u})\right\|\right]\right)^2\notag\\
    &\overset{(b)}{\leq}\frac{\eta^2}{S}\sum_{m\in\mathcal{S}^t}\mathbb{E}\Big[\big\|\sum_{u=0}^{U-1}\nabla F(\boldsymbol{\theta}_m^{t,u};z_{m}^{t,u})\big\|^2\Big]\notag\\
    &\overset{(c)}{=}\frac{\eta^2}{S}\sum_{m\in\mathcal{S}^t} \Big( \mathbb{E}\Big[\big\|\sum_{u=0}^{U-1} a_2\big\|^2\Big] + \mathbb{E}\Big[\big\|\sum_{u=0}^{U-1}\nabla F_m(\boldsymbol{\theta}_m^{t,u})\big\|^2\Big] \Big) \notag\\
    &\overset{(d)}{=} {\eta^2U\sigma_L^2} +\frac{\eta^2}{M}\sum_{m\in\mathcal{M}}\mathbb{E}\Big[\big\|\sum_{u=0}^{U-1}\nabla F_m(\boldsymbol{\theta}_m^{t,u})\big\|^2\Big] ,
    % &\overset{(a8)}{=}\frac{\eta^2U\sigma_L^2}{B}+\frac{\eta^2}{M}\sum_{m\in\mathcal{M}}\mathbb{E}\left[\left\|\sum_{u=0}^{U-1}\nabla F_m(\boldsymbol{\theta}_m^{t,u})\right\|^2\right]\notag\\ 
\end{align}
where $a_2=\nabla F_m(\boldsymbol{\theta}_m^{t,u};z_{m,b}^{t,u})-\nabla F_m(\boldsymbol{\theta}_m^{t,u})$; $(a)$ is due to the triangle inequality; $(b)$ is based on the definition of $\mathbf{g}_m^t$ and Cauchy-Schwartz inequality; $(c)$ follows from $\mathbb{E}[\|\boldsymbol{x}\|^2]=\mathbb{E}[\|\boldsymbol{x}-\mathbb{E}[\boldsymbol{x}]\|^2+\|\mathbb{E}[\boldsymbol{x}]\|^2]$ and \textbf{Assumption 2};
% $\mathbb{E}[\nabla F_m(\boldsymbol{\theta}_{m}^{j,u};z_{m,b}^{j,u})]=\nabla F_m(\boldsymbol{\theta}_{m}^{j,u})$ 
$(d)$ comes from $\mathbb{E}[\|\boldsymbol{x}_1\!\!+\!\ldots\!+\!\boldsymbol{x}_n\|^2]\!\!=\!\mathbb{E}[\|\boldsymbol{x}_1\|^2\!\!+\!\!\ldots\!+\!\!\|\boldsymbol{x}_n\|^2]$ if $\boldsymbol{x}_i,\,\forall i$, is zero-mean and independent, \textbf{Assumption 2}, and UAR worker sampling.

We can rewrite $T_2$ as
\begin{align} \label{bound of T2}
    & T_2\overset{(a)}{=}\Bigg\langle\nabla f(\boldsymbol{\theta}^{t}), \mathbb{E}\Bigg[(1-c)\mathbf{g}^t+\frac{c}{S}\sum_{m\in\mathcal{S}^t}\frac{\left\langle\mathbf{g}_m^t, \mathbf{r}^t\right\rangle}{\|\mathbf{g}_m^t\|\|\mathbf{r}^t\|}\mathbf{g}_m^t\nonumber\\
    &+\frac{c}{S}\sum_{m\in\mathcal{S}^t}\frac{\|\mathbf{g}_m^t\|\|\mathbf{r}^t\|-\left\langle\mathbf{g}_m^t, \mathbf{r}^t\right\rangle}{\|\mathbf{r}^t\|^2}\mathbf{r}^t+(1-c)\eta U \nabla f(\boldsymbol{\theta}^t)\Bigg]\Bigg\rangle\notag\\
    &\overset{(b)}{=} \underbrace{\left\langle\nabla f(\boldsymbol{\theta}^{t}), \mathbb{E}\left[(1-c)\mathbf{g}^t+a_1\right]\right\rangle}_{T_{2,1}} + T_{2,2} + T_{2,3},
    % \underbrace{\left\langle\nabla f(\boldsymbol{\theta}^{t}), \mathbb{E}\left[(1-c)\mathbf{g}^t+a_1\right]\right\rangle}_{T_{2,1}}\nonumber\\
    % &+\underbrace{\left\langle\nabla f(\boldsymbol{\theta}^{t}), \mathbb{E}\left[\frac{c}{M}\sum_{m\in\mathcal{S}^t}\frac{\left\langle\mathbf{g}_m^t, \mathbf{r}^t\right\rangle}{\|\mathbf{g}_m^t\|\|\mathbf{r}^t\|}\mathbf{g}_m^t\right]\right\rangle}_{T_{2,2}}\notag\\
    % &+\underbrace{\left\langle\nabla f(\boldsymbol{\theta}^t), \mathbb{E}\left[\frac{c}{M}\sum_{m\in\mathcal{S}^t}\frac{\|\mathbf{g}_m^t\|\|\mathbf{r}^t\|-\left\langle\mathbf{g}_m^t, \mathbf{r}^t\right\rangle}{\|\mathbf{r}^t\|^2}\mathbf{r}^t\right]\right\rangle}_{T_{2,3}},
\end{align}
where $T_{2,2} = \langle\nabla f(\boldsymbol{\theta}^{t}), \mathbb{E}[\frac{c}{M}\sum_{m\in\mathcal{S}^t}\frac{\left\langle\mathbf{g}_m^t, \mathbf{r}^t\right\rangle}{\|\mathbf{g}_m^t\|\|\mathbf{r}^t\|}\mathbf{g}_m^t]\rangle$, and $T_{2,3}= \langle\nabla f(\boldsymbol{\theta}^t), \mathbb{E} [\frac{c}{M}\sum_{m\in\mathcal{S}^t}\frac{\|\mathbf{g}_m^t\|\|\mathbf{r}^t\|-\left\langle\mathbf{g}_m^t, \mathbf{r}^t\right\rangle}{\|\mathbf{r}^t\|^2}\mathbf{r}^t ] \rangle$;
$(a)$ is based on the definition of $\Delta^t$, and $(b)$ comes from decomposition of $\mathbf{g}^t=\frac{1}{S}\sum_{m\in\mathcal{S}^t}\mathbf{g}_m^t$. 

We proceed to bound $T_{2,1}$, $T_{2,2}$ and $T_{2,3}$ in \eqref{bound of T2}.
The upper bound of $T_{2,1}$ is given by
\begin{align}
    &T_{2,1} 
    % = \left\langle\nabla f(\boldsymbol{\theta}^{t}), \mathbb{E}\left[(1-c)\mathbf{g}^t+a_1\right]\right\rangle\notag\\
    \overset{(a)}{=} \left\langle\nabla f(\boldsymbol{\theta}^{t}), \mathbb{E}\left[(1-c)\bar{\mathbf{g}}^t+a_1\right]\right\rangle\notag\\
    &\overset{(b)}{=}\Big\langle\nabla f(\boldsymbol{\theta}^{t}), \mathbb{E}\big[-(1-c)\frac{1}{M}\sum_{m \in \mathcal{M}}\sum_{u=0}^{U-1}\eta\nabla F_m(\boldsymbol{\theta}_m^{t,u})\nonumber\\
    &+(1-c)\eta U\frac{1}{M}\sum_{m \in \mathcal{M}} \nabla F_m(\boldsymbol{\theta}^t)\big]\Big\rangle\notag\\
    &{=}\Big\langle\sqrt{\eta U(1-c)}\nabla f(\boldsymbol{\theta}^{t}),-\frac{\sqrt{\eta(1-c)}}{M\sqrt{U}}\mathbb{E}\big[\sum_{m \in \mathcal{M}}\sum_{u=0}^{U-1}a_3\big]\Big\rangle\notag\\
    &\overset{(c)}{\leq}a_4+\frac{(1-c)\eta}{2UM^2}\mathbb{E}\Bigg[\Big\|\sum_{m \in \mathcal{M}}\sum_{u=0}^{U-1}a_3\Big\|^2\Bigg]\notag\\
    % &\overset{(d)}{\leq}a_4+\frac{(1-c)\eta}{2M}\sum_{m=1}^M\sum_{u=0}^{U-1}\mathbb{E}\left[\left\|a_3\right\|^2\right]\notag\\
    &\overset{(d)}{\leq}a_4+\frac{(1-c)\eta L^2}{2M}\sum_{m \in \mathcal{M}}\sum_{u=0}^{U-1}\mathbb{E}\left[\left\|\boldsymbol{\theta}_m^{t,u}-\boldsymbol{\theta}^t\right\|^2\right]
    % -\frac{(1-c)\eta}{2UM^2}\mathbb{E}\left[\left\|\sum_{m=1}^M\sum_{u=0}^{U-1}\nabla F_m(\boldsymbol{\theta}_m^{t,u})\right\|^2\right]
    ,\label{T21}
\end{align}
where $\Bar{\mathbf{g}}^t \!\!=\!\! \frac{1}{M} \! \sum_{m\in\mathcal{M}}\mathbf{g}_m^t$, $a_3 \!\!=\!\! \nabla F_m(\boldsymbol{\theta}_m^{t,u}) \!-\! \nabla F_m(\boldsymbol{\theta}^t)$, 
and $a_4 \!=\! \frac{(1-c)\eta U}{2} \mathbb{E}\left[ \|\nabla f(\boldsymbol{\theta}^{t})\|^2 \right] $; $(a)$ is due to identical sampling distribution in each round; $(b)$ comes from the definition of $\Bar{\mathbf{g}}^t$; $(c)$ follows from $\langle\mathbf{x},\mathbf{y}\rangle \! \leq \! \frac{1}{2}[\|\mathbf{x}\|^2+\|\mathbf{y}\|^2]$; and $(d)$ is due to the Cauchy-Schwartz inequality and \textbf{Assumption 1}.

% Similarly, by following the bounding operation of $T_{2,1}$, we can have the following two inequalities
% \begin{align}
%     T_{2,2}\leq a_4+\frac{c\eta}{SU}\sum_{m\in\mathcal{S}^t}\mathbb{E}\left[\left\|\sum_{u=0}^{U-1}\nabla F_m(\boldsymbol{\theta}_m^{t,u})\right\|^2\right],\label{T22}
% \end{align}

% \begin{align}
%     T_{2,3}\leq a_4+\frac{4c\eta\sigma_L^2}{B}+\frac{4c\eta}{MU}\sum_{m\in\mathcal{M}}\mathbb{E}\left[\left\|\sum_{u=0}^{U-1}\nabla F_m(\boldsymbol{\theta}_m^{t,u})\right\|^2\right],\label{T23}
% \end{align}

The upper bound of $T_{2,2}$ is derived as  
\begin{align}
    & 
    T_{2,2}
    % =\Bigg\langle\nabla f(\boldsymbol{\theta}^{t}), \mathbb{E}\Big[\frac{c}{S}\sum_{m\in\mathcal{S}^t}\frac{\left\langle\mathbf{g}_m^t, \mathbf{r}^t\right\rangle}{\|\mathbf{g}_m^t\|\|\mathbf{r}^t\|}\mathbf{g}_m^t\Big]\Bigg\rangle\notag\\
    % \overset{(a)}{=}\Bigg\langle\nabla f(\boldsymbol{\theta}^{t}), \mathbb{E}\Big[\frac{c\eta}{S}\sum_{m\in\mathcal{S}^t}\frac{\left\langle\mathbf{g}_m^t, \mathbf{r}^t\right\rangle}{\|\mathbf{g}_m^t\|\|\mathbf{r}^t\|}\sum_{u=0}^{U-1}\nabla F_m(\boldsymbol{\theta}_m^{t,u})\Big]\Bigg\rangle\notag\\
    \overset{(a)}{=}\Bigg\langle\sqrt{c\eta U}\nabla f(\boldsymbol{\theta}^{t}), \mathbb{E}\Bigg[\frac{\sqrt{c\eta}}{S\sqrt{U}}\sum_{m\in\mathcal{S}^t}\frac{\left\langle\mathbf{g}_m^t, \mathbf{r}^t\right\rangle}{\|\mathbf{g}_m^t\|\|\mathbf{r}^t\|}a_5\Bigg]\Bigg\rangle\notag\\
    &\overset{(b)}{\leq}\frac{c\eta U}{2}\mathbb{E}\left[ \|\nabla f(\boldsymbol{\theta}^{t})\|^2 \right]+\frac{c\eta}{S^2U}\mathbb{E}\Bigg[\Bigg\|\sum_{m\in\mathcal{S}^t}\frac{\left\langle\mathbf{g}_m^t, \mathbf{r}^t\right\rangle}{\|\mathbf{g}_m^t\|\|\mathbf{r}^t\|}a_5\Bigg\|^2\Bigg]\notag\\
    % &\overset{(d)}{\leq}\frac{c\eta U}{2}\mathbb{E}\left[ \|\nabla f(\boldsymbol{\theta}^{t})\|^2 \right]+\frac{c\eta}{S^2U}\left(\sum_{m\in\mathcal{S}^t}\mathbb{E}\left[\left\|a_5\right\|\right]\right)^2\notag\\
    &\overset{(c)}{\leq}\frac{c\eta U}{2}\mathbb{E}\left[ \|\nabla f(\boldsymbol{\theta}^{t})\|^2 \right]+\frac{c\eta}{SU}\sum_{m\in\mathcal{S}^t}\mathbb{E}\left[\left\|a_5\right\|^2\right],\label{T22}
\end{align}
where $a_5 \!\!\!=\!\!\! \sum_{u=0}^{U-1} \nabla F_m \! \left(\boldsymbol{\theta}_m^{t, u}\right)$; $(a)$ is due to the definition of $\mathbf{g}_m^t$; $(b)$ follows from $\langle\mathbf{x},\mathbf{y}\rangle \!\! \leq \!\! \frac{1}{2}[\|\mathbf{x}\|^2  \!+\! \|\mathbf{y}\|^2]$; and $(c)$ is based on the Cauchy-Schwartz inequality and triangle inequality.

The upper bound of $T_{2,3}$ is given by 
\begin{align}
    & 
    T_{2,3} 
    % \notag \\
    % =\left\langle\nabla f(\boldsymbol{\theta}^t), \mathbb{E}\left[\frac{c}{S}\sum_{m\in\mathcal{S}^t}\frac{\|\mathbf{g}_m^t\|\|\mathbf{r}^t\|-\left\langle\mathbf{g}_m^t, \mathbf{r}^t\right\rangle}{\|\mathbf{r}^t\|^2}\mathbf{r}^t\right]\right\rangle\notag\\
     {=} \Bigg\langle \!\!\! \sqrt{c\eta U}\nabla \!f(\boldsymbol{\theta}^t), \mathbb{E} \! \Bigg[\!\frac{\sqrt{c}}{S\sqrt{\eta U}} \!\! \! \sum_{m\in\mathcal{S}^t} \!\!\!
 \! \frac{\|\mathbf{g}_m^t\|\|\mathbf{r}^t\| \!\!-\!\! \left\langle\mathbf{g}_m^t, \!\mathbf{r}^t\right\rangle}{\|\mathbf{r}^t\|^2}\mathbf{r}^t\! \Bigg] \! \Bigg\rangle\notag\\
    &\overset{(a)}{\leq} \frac{c \eta U}{2} \mathbb{E}\left[ \|\nabla f(\boldsymbol{\theta}^{t})\|^2 \right] \nonumber \\
    &+\frac{c}{2\eta US^2}\mathbb{E}\Bigg[\Bigg\|\sum_{m\in\mathcal{S}^t}\frac{\|\mathbf{g}_m^t\|\|\mathbf{r}^t\|-\left\langle\mathbf{g}_m^t, \mathbf{r}^t\right\rangle}{\|\mathbf{r}^t\|^2}\Bigg\|^2\|\mathbf{r}^t\|^2\Bigg]\notag\\
    &\overset{(b)}{\leq}\frac{c\eta U}{2}\mathbb{E}\left[ \|\nabla f(\boldsymbol{\theta}^{t})\|^2\right] +\frac{c}{2\eta US^2}\mathbb{E}\Bigg[\Big(2\sum_{m\in\mathcal{S}^t}\|\mathbf{g}_m^t\|\Big)^2\Bigg]\notag\\
    &\overset{(c)}{\leq}\frac{c\eta U}{2}\mathbb{E} \! \left[ \|\nabla f(\boldsymbol{\theta}^{t})\|^2 \right] \!+\! {2c\eta\sigma_L^2} \!+\! \frac{2c\eta}{MU} \!\!\! \sum_{m\in\mathcal{M}} 
    \!\! \mathbb{E} \! \left[\left\|a_5\right\|^2\right],\label{T23}
\end{align}
where $(a)$ comes from $\langle\mathbf{x},\mathbf{y}\rangle\leq\frac{1}{2}[\|\mathbf{x}\|^2+\|\mathbf{y}\|^2]$; $(b)$ follows from the triangle inequality, and $(c)$ is based on \textbf{Assumption~2} and uniform worker sampling without replacement.

We further bound $\mathbb{E}[\| a_5\|^2]$ in \eqref{T22} and \eqref{T23}; i.e.,
% To bound $T_{2,2}$ and $T_{2,3}$ in \eqref{T22} and \eqref{T23}, we proceed to derive the upper bound of $\mathbb{E}[\| \sum_{u=0}^{U-1}\nabla F_m(\boldsymbol{\theta}_{m}^{t,u})\|^2]$, as given by
% {\color{red}We continue to bound the following term in \eqref{T22} and \eqref{T23}:} 
\begin{align}
    &\mathbb{E} \! \left[\left\| a_5 \right\|^2\right] \!=\! \mathbb{E} \! \Bigg[\Bigg\| \! \sum_{u=0}^{U-1}(a_3 \!+\! \nabla F_m(\boldsymbol{\theta}^t) \!-\! \nabla f(\boldsymbol{\theta}^t) \!+\! \nabla f(\boldsymbol{\theta}^t))\Bigg\|^2\Bigg]\notag\\    
    % &=\mathbb{E}\Bigg[\Bigg\| \sum_{u=0}^{U-1}(\nabla F_m(\boldsymbol{\theta}_{m}^{t,u})-\nabla F_m(\boldsymbol{\theta}^t) +\nabla F_m(\boldsymbol{\theta}^t)\notag\\
    % &-\nabla f(\boldsymbol{\theta}^t)+\nabla f(\boldsymbol{\theta}^t))\Bigg\|^2\Bigg]\notag\\
    &\overset{(a)}{\leq} 3UL^2\sum_{u=0}^{U-1}\mathbb{E}[\|\boldsymbol{\theta}_{m}^{t,u}-\boldsymbol{\theta}^t\|^2]+3U^2\sigma_G^2+3U^2\|\nabla f(\boldsymbol{\theta}^t)\|^2\notag\\
    % &\overset{(f2)}{\leq} 15U^3L^2\eta^2(\sigma_L^2+6U\sigma_G^2)+(90U^4L^2\eta^2+3U^2)\|\nabla f(\boldsymbol{\theta}^t)\|^2+3U^2\sigma_G^2\notag\\
    &\overset{(b)}{=}C_1 \mathbb{E}\left[\|\nabla f(\boldsymbol{\theta}^t)\|^2\right]+C_2,\label{ti}
\end{align}
where $C_1=90U^4L^2\eta^2+3U^2$; $C_2=15U^3L^2\eta^2(\sigma_L^2+6U\sigma_G^2)+3U^2\sigma_G^2$;
$(a)$ is due to the Cauchy-Schwartz inequality; $(b)$ is based on \cite[Lemma 3]{reddi2020adaptive}, which yields: $\forall 1 \leq u \leq U $,
% \begin{align}
% \mathbb{E}\left[\left\|\boldsymbol{\theta}_{m}^{t,u} \!-\! \boldsymbol{\theta}^t\right\|^2\right]\!\leq\! 5U\eta^2(\sigma_L^2 \!+ \! 6U\sigma_G^2) \!+ \! 30U^2\eta^2 \mathbb{E}\left[\|\nabla f(\boldsymbol{\theta}^t)\|^2\right]. \notag 
% \end{align}
\begin{align}
\mathbb{E} \! \left[ \! \left\|\boldsymbol{\theta}_{m}^{t,u} \!-\! \boldsymbol{\theta}^t\right\|^2 \! \right] \!\! \leq \! 5U\eta^2(\sigma_L^2 \!+\!  6U\sigma_G^2) \!+\!  30U^2\eta^2 \mathbb{E} \! \left[ \! \|\nabla \! f(\boldsymbol{\theta}^t)\|^2\right]. \notag 
\end{align}

Together with the upper bounds of $T_1$, $T_{2,1}$, $T_{2,2}$ and $T_{2,3}$ in (\ref{T1}), (\ref{T21}), (\ref{T22}) and (\ref{T23}), respectively, we rewrite \eqref{one-step} as
\begin{align}
    \mathbb{E}[f(&\boldsymbol{\theta}^{t+1})] 
    % \notag\\
    % &
    \!\leq \!f(\boldsymbol{\theta}^{t})\!-\!\eta U\Big(\frac{1\!\!-\!3c}{2}\!-\!\big(\frac{3c}{U}\!\!+\!\frac{\eta L}{2}\big)(90U^3L^2\eta^2\!\!+\!3U)\notag\\
    &-\frac{30(1-c)L^2U^2\eta^2}{2}\Big)\mathbb{E}\left[\|\nabla f(\boldsymbol{\theta}^t)\|^2\right]+C_3,\label{onestepnew}
\end{align}
where $C_3={2c\eta\sigma_L^2}{}+\frac{\eta^2U\sigma_L^2L}{2}+(\frac{3c\eta}{U}+\frac{\eta^2 L}{2})C_2+\frac{5(1-c)\eta^3 L^2U^2}{2}(\sigma_L^2+6U\sigma_G^2)$.
With $\eta \leq \frac{1}{8LU}$, we can prove that there exists a positive constant $\gamma$ satisfying
\begin{align}
    \gamma < \frac{1-3c}{2}-\left(\frac{3c}{U^2}+\frac{\eta L}{2U}\right)C_1-\frac{30(1-c)L^2U^2\eta^2}{2}. \label{gamma}
\end{align}
Rearranging \eqref{onestepnew} and summing from $t=0,\ldots,T-1$ yields
\begin{align} \label{drag final conv}
    \frac{1}{T}\sum_{t=0}^{T-1}\mathbb{E}\left[\|\nabla f(\boldsymbol{\theta}^t)\|^2\right] \leq \frac{f(\boldsymbol{\theta}^0)-f^*}{\gamma \eta U T} + V,
\end{align}
where $V=\frac{C_3}{\gamma U\eta}$. 
% $V=\frac{1}{\gamma}\left[\frac{4c\sigma_L^2}{BU}+\frac{\eta\sigma_L^2L}{2B}+\left(\frac{5c}{U^2}+\frac{\eta L}{2U}\right)C_2+\frac{5(1-c)L^2U\eta^2}{2}(\sigma_L^2+6U\sigma_G^2)\right]$. 
The proof is complete.

\section*{Appendix B \\ Proof of Theorem \ref{brdrag theorem}}
% \begin{align}
%     & \|\mathbf{v}_m^t\| \overset{(a)}{\leq} \| c \mathbf{r}^t \| + \|  (1-c) \frac{\left\|\mathbf{r}^t\right\|}{\left\|\mathbf{g}_m^t\right\|} \mathbf{g}_m^t \| \nonumber \\
%     & + \| \frac{\left\langle\mathbf{g}_m^t, \mathbf{r}^t\right\rangle}{\left\|\mathbf{g}_m^t\right\|^2} \mathbf{g}_m^t-c \frac{\left\langle\mathbf{g}_m^t, \mathbf{r}^t\right\rangle}{\left\|\mathbf{g}_m^t\right\|\left\|\mathbf{r}^t\right\|} \mathbf{r}^t \| \nonumber \\
%     & \overset{(b)}{\leq} \left\|\mathbf{r}^t\right\| + c\| \frac{\left\langle\mathbf{g}_m^t, \mathbf{r}^t\right\rangle}{\left\|\mathbf{g}_m^t\right\|^2} \left\|\mathbf{r}^t\right\|(\left\|\mathbf{r}^t\right\|\mathbf{g}_m^t-\left\|\mathbf{g}_m^t\right\| \mathbf{r}^t) \| \nonumber \\
%     & \overset{(c)}{\leq} \left\|\mathbf{r}^t\right\|+\frac{2c}{\left\|\mathbf{g}_m^t\right\|} \| \left\langle\mathbf{g}_m^t, \mathbf{r}^t\right\rangle \| \overset{(d)}{\leq} (1+2c)\left\|\mathbf{r}^t\right\|. 
% \end{align}

As with Appendix A, with the $L$-smoothness of $f$, we have
\begin{align}\label{one-step new}
    \mathbb{E}[f(\boldsymbol{\theta}^{t+1})] 
    % \notag\\
    % &
    & \! \leq \! f(\boldsymbol{\theta}^{t})\! +\! \left\langle\nabla\! f(\boldsymbol{\theta}^{t}), \mathbb{E}[\boldsymbol{\theta}^{t+1}\!\!\!-\!\boldsymbol{\theta}^t]\right\rangle \!+ \!\frac{L}{2} \mathbb{E}[\| \Delta^t \|^2]\notag\\
    & = f(\boldsymbol{\theta}^{t}) + T_1 +T_2 + \frac{L}{2} T_3 ,
    %    &=f(\boldsymbol{\theta}^{t}) -(1-c)\eta U\|\nabla f(\boldsymbol{\theta}^t)\|^2+ T_1+T_2. 
\end{align}
where $T_1 = 
    \mathbb{E}\left[ \left\langle \nabla f\left(\boldsymbol{\theta}^t\right), \frac{1}{S} \sum_{m \in \mathcal{A}^t} \mathbf{v}_m^t  \right\rangle \right] $, $T_2 = \mathbb{E}\left[ \left\langle \nabla f\left(\boldsymbol{\theta}^t\right), \frac{1}{S} \sum_{m \in \mathcal{B}^t } \mathbf{v}_m^t  \right\rangle \right] $, and $T_3 =  \mathbb{E}\left[\|\Delta^t\|^2\right] $.
% For the upper bound of $T_1$, we first have
% \begin{align}
%     T_1 = 
%     \underbrace{\mathbb{E}\left[ \left\langle \nabla f\left(\boldsymbol{\theta}^t\right), \frac{1}{S} \sum_{m \in \mathcal{A}^t} \mathbf{v}_m^t  \right\rangle \right] }_{T_{1,1}} + \underbrace{\mathbb{E}\left[ \left\langle \nabla f\left(\boldsymbol{\theta}^t\right), \frac{1}{S} \sum_{m \in \mathcal{S}^t \backslash \mathcal{A}^t} \mathbf{v}_m^t  \right\rangle \right] }_{T_{1,2}}.
% \end{align}

With the definition of $\mathbf{v}_m^t$ in (\ref{vm1}) and \textbf{Assumption 4},  the modified gradient $\mathbf{v}_m^t$ can be rewritten as 
\begin{align} \label{expansion of v}
    \mathbf{v}_m^t = h_m^t {{\bf{r}}^t} + (1 - h_m^t)\left\| {{{\bf{r}}^t}} \right\|\frac{{{\bf{g}}_m^t}}{{\left\| {{\bf{g}}_m^t} \right\|}}, \ \forall m \in \mathcal{A}^t,
\end{align}
where $h_m^t = c^t(1 - x_m^t)$. 
% Based on \eqref{expansion of v}, the upper bound of
Then, 
$T_{1}$ can be upper bounded by 
% \begin{align}
%     & T_{1,1} \leq \mathbb{E}\left[ \frac{1}{S} \left\|\nabla f\left(\boldsymbol{\theta}^t\right)\right\| \cdot \left\| \sum_{m \in \mathcal{A}^t} \mathbf{v}_m^t\right\| \right] \nonumber \\
%     & \overset{(a)}{\leq} w \mathbb{E}\left[ \left\|\nabla f\left(\boldsymbol{\theta}^t\right)\right\| \right] \cdot \mathbb{E}\left[ \left\| \mathbf{r}^t \right\| \right] ,
% \end{align}
\begin{align} \label{T11 new}
    & T_1 \!=\! \mathbb{E} \! \Bigg[ \! \Bigg\langle\nabla f\left(\boldsymbol{\theta}^t\right), \frac{1}{S}\sum_{m \in \mathcal{A}^t} h_m^t {{\bf{r}}^t} + (1 - h_m^t)\left\| {{{\bf{r}}^t}} \right\|\frac{{{\bf{g}}_m^t}}{{\left\| {{\bf{g}}_m^t} \right\|}} \Bigg\rangle \! \Bigg ] \nonumber \\
    & \overset{(a)}{=} -c^t (w^t \!-\! x^t) \eta U\left\|\nabla \! f\left(\boldsymbol{\theta}^t\right)\right\|^2 \nonumber \\
    & + \mathbb{E}\left[\left\langle\nabla \! f\left(\boldsymbol{\theta}^t\right), c^t (w^t \!-\! x^t)\eta U \nabla f\left(\boldsymbol{\theta}^t\right) \!+\! c^t (w^t \!-\! x^t) \mathbf{r}^t\right\rangle\right] \nonumber \\
    &+ \mathbb{E}\Bigg[\Bigg\langle\nabla f\left(\boldsymbol{\theta}^t\right), \frac{1}{S}\sum_{m \in \mathcal{A}^t}  (1 - h_m^t)\left\| {{{\bf{r}}^t}} \right\|\frac{{{\bf{g}}_m^t}}{{\left\| {{\bf{g}}_m^t} \right\|}}\Bigg\rangle\Bigg] , 
\end{align}
where $(a)$ is due to $A^t = w^t S$.

For the malicious workers $m \in \mathcal{A}^t$, define $x_m^t = \frac{\langle\mathbf{g}_m^t, \mathbf{r}^t\rangle}{\|\mathbf{g}_m^t\|\|\mathbf{r}^t\|}$ and $x^t = \frac{1}{S}\sum_{m \in \mathcal{A}^t} x_m^t$.
The second term on the RHS of \eqref{T11 new} is upper bounded, as given by
\begin{align} \label{second of T11 new}
    & \mathbb{E}\left[\left\langle\nabla f\left(\boldsymbol{\theta}^t\right), c^t (w^t \!-\! x^t)\eta U \nabla f\left(\boldsymbol{\theta}^t\right) + c^t (w^t \!-\! x^t) \mathbf{r}^t\right\rangle\right] \nonumber \\
    & \overset{(a)}{=} \left\langle \sqrt{\delta^t} \nabla f\left(\boldsymbol{\theta}^t\right),\frac{\sqrt{\delta^t}}{ {U} } \mathbb{E}\left[ \sum_{u=0}^{U-1} \nabla f\left(\boldsymbol{\theta}^t\right)-\nabla f\left(\boldsymbol{\theta}^{t, u}\right) \right]\right\rangle \nonumber \\
    & \overset{(b)}{\leq} \frac{\delta^t}{2} \mathbb{E}\left[ \left\|\nabla f\left(\boldsymbol{\theta}^t\right)\right\|^2 \right] + \frac{\delta^t L^2}{2U} \sum_{u=0}^{U-1} \mathbb{E}\left[\left\|\boldsymbol{\theta}^{t, u}-\boldsymbol{\theta}^t\right\|^2\right].
\end{align}
Here, $ \delta^t\! = \! c^t (w^t\!-\! x^t) \eta U$. $(a)$ follows from~\eqref{trust}; $(b)$ is due to $\langle\mathbf{x}, \mathbf{y}\rangle \leq \frac{1}{2}\left[\|\mathbf{x}\|^2+\|\mathbf{y}\|^2\right]$ and \textbf{Assumption~1}. 

The last term on the RHS of \eqref{T11 new} is bounded as 
\begin{align} \label{third of T11 new}
    & \mathbb{E}\left[\left\langle\nabla f\left(\boldsymbol{\theta}^t\right), \frac{1}{S}\sum_{m \in \mathcal{A}^t}  (1 - h_m^t)\left\| {{{\bf{r}}^t}} \right\|\frac{{{\bf{g}}_m^t}}{{\left\| {{\bf{g}}_m^t} \right\|}}\right\rangle\right]  \nonumber \\
    & = \mathbb{E}\left[\left\langle \sqrt{\delta^t} \nabla f\left(\boldsymbol{\theta}^t\right), \frac{1}{S\sqrt{\delta^t}} \sum_{m \in \mathcal{A}^t}(1 - h_m^t)\left\| {{{\bf{r}}^t}} \right\|\frac{{{\bf{g}}_m^t}}{{\left\| {{\bf{g}}_m^t} \right\|}}\right\rangle \right] \nonumber \\
    & \overset{(a)}{\leq} \! \frac{\delta^t}{2}\mathbb{E} \! \left[ \! \left\|\nabla \! f \! \left( \boldsymbol{\theta}^t\right)\right\|^2 \right] \!+\! \frac{\left\| \sum_{m \in \mathcal{A}^t}(1 \!-\! h_m^t)\right\|^2}{2S^2 \delta^t}
    \mathbb{E} \! \left[ \! \left\| \mathbf{r}^t \right\|^2 \right],
    % \frac{\frac{w^t}{S} \! \sum_{m \in \mathcal{A}^t} \! \left(1 \!-\! h_m^t\right)^2}{2 \delta^t}  \mathbb{E} \! \left[ \! \left\| \mathbf{r}^t \right\|^2 \right],
\end{align}
where $(a)$ comes from $\langle\mathbf{x}, \mathbf{y}\rangle \leq \frac{1}{2}\left[\|\mathbf{x}\|^2+\|\mathbf{y}\|^2\right]$.

Similarly, for the upper bound of $T_{2}$, we have
\begin{align} \label{T12 new}
    & T_{2} 
    % = \mathbb{E}\left[\left\langle\nabla f\left(\boldsymbol{\theta}^t\right), \frac{1}{S}\sum_{m \in \mathcal{B}^t} l_m^t {{\bf{r}}^t} + (1 - l_m^t)\rho_m^t {\bf{g}}_m^t\right\rangle\right] \nonumber \\
    = -c^t (1-w^t-y^t) \eta U \mathbb{E}\left[ \left\|\nabla f\left(\boldsymbol{\theta}^t\right)\right\|^2 \right] \nonumber \\ 
    &+ \mathbb{E}\left[\left\langle\nabla f\left(\boldsymbol{\theta}^t\right), c^t (1 \!-\! w^t \!-\! y^t) \left( \mathbf{r}^t + \eta U \nabla f\left(\boldsymbol{\theta}^t\right) \right) \right\rangle\right] \nonumber \\
    &+ \mathbb{E}\left[\left\langle\nabla f\left(\boldsymbol{\theta}^t\right), \frac{1}{S}\sum_{m \in \mathcal{B}^t} \left(1- l_m^t\right) \rho_m^t \mathbf{g}_m^t\right\rangle\right] ,
    % &\overset{(a)}{=} -c(1-w) \eta U \left\|\nabla f\left(\boldsymbol{\theta}^t\right)\right\|^2 \nonumber \\
    % & + \mathbb{E}\left[\left\langle\nabla f\left(\boldsymbol{\theta}^t\right), a_2 + c(1-w) \mathbf{r}^t \right\rangle\right] \nonumber \\
    % & +\mathbb{E}\left[\left\langle\nabla f\left(\boldsymbol{\theta}^t\right), \frac{1}{S} \sum_{m \in \mathcal{S}^t \backslash \mathcal{A}^t} \left(\mathbf{v}_m^t-c \mathbf{r}^t - c \frac{\mathbf{r}^t\left\langle\mathbf{g}_m^t, \mathbf{r}^t\right\rangle}{\left\|\mathbf{g}_m^t\right\|\left\|\mathbf{r}^t\right\|} \right) \right\rangle\right] \nonumber \\ 
    % & - \mathbb{E}\left[\left\langle\nabla f\left(\boldsymbol{\theta}^t\right), \frac{1}{S} \sum_{m \in \mathcal{S}^t \backslash \mathcal{A}^t} c \frac{\mathbf{r}^t \left\langle \mathbf{g}_m^t, \mathbf{r}^t \right\rangle}{\left\|\mathbf{g}_m^t\right\| \left\|\mathbf{r}^t\right\|}\right\rangle\right] 
\end{align}
where $l_m^t = c^t(1 - y_m^t)$ and $y^t=\frac{1}{S} \sum_{m \in \mathcal{B}^t } y_m^t$.
% $a_2 = c(1-w)(1-w-\beta^t) \eta U \nabla f\left(\boldsymbol{\theta}^t\right)$ and $\beta^t=\frac{1}{S} \sum_{m \in \mathcal{S}^t \backslash \mathcal{A}^t} \beta_m^t$.

As with \eqref{second of T11 new}, with $\kappa^t = c^t (1-w^t-y^t)\eta U$, the second term on the RHS of \eqref{T12 new} can be bounded as 
\begin{align} \label{second of T12 new}
    &\mathbb{E}\left[\left\langle\nabla f\left(\boldsymbol{\theta}^t\right), c^t (1-w^t-y^t) \left( \mathbf{r}^t + \eta U \nabla f\left(\boldsymbol{\theta}^t\right) \right) \right\rangle\right] \nonumber \\
    & \leq \frac{\kappa^t}{2}\mathbb{E}\!\left[\left\|\nabla f\left(\boldsymbol{\theta}^t\right)\right\|^2\right] \!\!+\!\frac{\kappa^t L^2}{2U} \!\!\sum_{u=0}^{U-1} \mathbb{E}\!\left[\left\|\boldsymbol{\theta}^{t, u}\!\!-\!\boldsymbol{\theta}^t\right\|^2\right],
\end{align}
based on the Cauchy-Schwarz inequality and \textbf{Assumption 1}.  

The third term on the RHS of \eqref{T12 new} is upper bounded as
\begin{align} \label{third of T12 new}
    &  \mathbb{E}\left[\left\langle\nabla f\left(\boldsymbol{\theta}^t\right), \frac{1}{S}\sum_{m \in \mathcal{B}^t} \left(1- l_m^t\right) \rho_m^t \mathbf{g}_m^t\right\rangle\right]  \nonumber \\
    % & \overset{(d1)}{\leq} \mathbb{E}\left[\left\langle\nabla f\left(\boldsymbol{\theta}^t\right), \frac{-1}{S}\sum_{m \in \mathcal{S}^t \backslash \mathcal{A}^t} b_m^t \sum_{u=0}^{U-1} \eta \nabla F_m\left(\boldsymbol{\theta}_m^{t, u}; z_{m,b}^{t,u}\right)\right\rangle\right] \nonumber \\
    % & \overset{(d1)}{=} \frac{-1}{b^t} \mathbb{E}\left[\left\langle  b^t \sqrt{\eta U} \nabla f\left(\boldsymbol{\theta}^t\right), \frac{1}{\sqrt{\eta U}S}\sum_{m \in \mathcal{B}^t} b_m^t \sum_{u=0}^{U-1} \eta \nabla F_m\left(\boldsymbol{\theta}_m^{t, u}\right)\right\rangle\right] \nonumber \\
    & \overset{(a)}{=} \frac{-1}{b^t} \mathbb{E}\left[\left\langle  b^t \nabla f\left(\boldsymbol{\theta}^t\right), \frac{1}{ S}\sum_{m \in \mathcal{B}^t} b_m^t \sum_{u=0}^{U-1} \eta \nabla F_m\left(\boldsymbol{\theta}_m^{t, u}\right)\right\rangle\right] \nonumber \\
    & \overset{(b)}{\leq} \frac{-b^t \eta U}{2} \mathbb{E} \left[ \left\|\nabla f\left(\boldsymbol{\theta}^t\right)\right\|^2 \right] \nonumber\\
    &  + \frac{1}{2 b^t} \mathbb{E}\Bigg[ \Bigg\| \frac{1}{S} \sum_{m \in \mathcal{B}^t} b_m^t \frac{\sqrt{\eta}}{\sqrt{U}} \sum_{u=0}^{U-1} \nabla F_m(\boldsymbol{\theta}^t) - \nabla F_m\left(\boldsymbol{\theta}_m^{t, u}\right) \Bigg\|^2  \Bigg]  \nonumber \\
    & \overset{(c)}{\leq} \!\! \frac{\eta L^2}{2b^t S} \!\!\! \sum_{m \in \mathcal{B}^t} \!\! ( b_m^t )^2 \!\! \sum_{u=0}^{U-1} \! \mathbb{E} \!\! \left[ \! \left\| \boldsymbol{\theta}^{t} \!\!-\! \boldsymbol{\theta}_m^{t, u} \right\|^2 \! \right] \!\!-\!\! \frac{b^t \eta U}{2}  \!\! \left\|  \nabla \! f \! \left(  \boldsymbol{\theta}^t  \right) \! \right\|^2 ,
    % & \leq \frac{-(1-w)(1-c)p\eta U}{2}\left\|\nabla f\left(\boldsymbol{\theta}^t\right)\right\|^2 + \frac{ (2-c)^2 q^2\eta L^2}{2(1-w)(1-c)pS} \sum_{m \in \mathcal{S}^t \backslash \mathcal{A}^t} (b_m^t)^2 \sum_{u=0}^{U-1} \left\| \boldsymbol{\theta}^{t} - \boldsymbol{\theta}_m^{t, u} \right\|^2 ,
\end{align}
where $b_m^t = \left(1 - l_m^t\right) \rho_m^t $, and $b^t = \frac{1}{S}\sum_{m \in \mathcal{B}^t} b_m^t$;
$(a)$ is due to $\mathbf{g}_m^t = \boldsymbol{\theta}_m^{t,U}-\boldsymbol{\theta}^t$ and \textbf{Assumption 2}; $(b)$ comes from $-\!2\left\langle {\bf{x}},{\bf{y}} \right\rangle \!\leq\! -\| {\bf{x}} \|^2\! + \!\| {\bf{x}}\!-\!{\bf{y}} \|^2$; and $(c)$ is due to \textbf{Assumption~1} and the Cauchy-Schwarz inequality.

% Under Assumption 4,
% the last term of the RHS of \eqref{second of T11 new} can be bounded as 
% \begin{align}
%     & -\mathbb{E}\left[\left\langle\nabla f\left(\boldsymbol{\theta}^t\right), \frac{1}{S} \sum_{m \in \mathcal{S}^t \backslash \mathcal{A}^t} c \frac{\mathbf{r}^t\left\langle\mathbf{g}_m^t, \mathbf{r}^t\right\rangle}{\left\|\mathbf{g}_m^t\right\|\left\|\mathbf{r}^t\right\|}\right\rangle\right] \nonumber \\
%     & \leq  - \left\langle\sqrt{c \kappa (1-w) \eta U} \nabla f\left(\boldsymbol{\theta}^t\right), \frac{\sqrt{c\kappa (1-w)}}{\sqrt{\eta U}} \mathbb{E}\left[ \mathbf{r}^t \right] \right\rangle \nonumber \\
%     & \leq -\frac{c \kappa(1-w) \eta U}{2} \left\|\nabla f\left(\boldsymbol{\theta}^t\right)\right\|^2 \nonumber \\
%     & +\frac{c \kappa(1-w) \eta L^2}{2} \sum_{u=0}^{U-1} \mathbb{E}\left[\left\|\boldsymbol{\theta}^{t, u}-\boldsymbol{\theta}^t\right\|^2\right] ,
% \end{align}

By the triangle inequality, $\| {\bf{v}}_m^t \| \leq \| \left(1-\lambda_m^t\right) \frac{\left\|\mathbf{r}^t\right\|}{\left\|\mathbf{g}_m^t\right\|} \mathbf{g}_m^t \| + \| \lambda_m^t \mathbf{r}^t \| \leq \left\|\mathbf{r}^t\right\|, \forall i \in \mathcal{S}^t$. Then, $T_3$ is upper bounded by
\begin{align}
    T_3 
    % \!=\! \mathbb{E}\Bigg[ \Bigg\| \frac{1}{S} \! \Big(\sum_{m \in \mathcal{B}^t } \! \mathbf{v}_m^t \!+\! \sum_{m \in \mathcal{A}^t} \! {\mathbf{v}}_m^t \!\Big) \! \Bigg\|^2 \Bigg]  
    \overset{(a)}{\leq} \frac{1}{S} \sum_{m \in \mathcal{S}^t} \mathbb{E}\left[ \left\|\mathbf{r}^t\right\|^2 \right] = \mathbb{E}\left[\left\|\mathbf{r}^t\right\|^2\right], 
\end{align}
where $(a)$ is based on the Cauchy-Schwarz inequality.

% From Appendix A, we note that
% % \begin{align}
% %     &\mathbb{E}\left[\left\| \sum_{u=0}^{U-1}\nabla f(\boldsymbol{\theta}^{t,u})\right\|^2\right] \notag\\
% %     &=\mathbb{E}\Bigg[\Bigg\| \sum_{u=0}^{U-1}( \nabla f(\boldsymbol{\theta}^{t,u})-\nabla f(\boldsymbol{\theta}^t)+\nabla f(\boldsymbol{\theta}^t))\Bigg\|^2\Bigg]\notag\\    
% %     &\overset{(f1)}{\leq} 2UL^2\sum_{u=0}^{U-1}\mathbb{E}[\|\boldsymbol{\theta}^{t,u}-\boldsymbol{\theta}^t\|^2]+ 2U^2\|\nabla f(\boldsymbol{\theta}^t)\|^2\notag\\
% %     &\overset{(f2)}{=}(2U^2+60 U^4 \eta^2 L^2)\|\nabla f(\boldsymbol{\theta}^t)\|^2+ 10U^3 L^2 \eta^2 \sigma_L^2,
% % \end{align}
% % where $(f1)$ is due to Cauchy-Schwartz inequality; $(f2)$ is from \cite[Lemma 3]{reddi2020adaptive}, which proves the inequality 
% \begin{align}
% \mathbb{E}\left[\left\|\boldsymbol{\theta}^{t,u}-\boldsymbol{\theta}^t\right\|^2\right]\!\leq\! 5U\eta^2 \sigma_L^2   \!+ \! 30U^2\eta^2 \left\|\nabla f(\boldsymbol{\theta}^t)\right\|^2 \nonumber .
% \end{align}

% {\color{red}with bounded gradient, hard to prove!!}
% {\color{red}Assume $\left\| \nabla f\left(\boldsymbol{\theta}^{t, u}\right)\right\| \leq G, \forall t,u$, making $\left\|\sum_{u=0}^{U-1} \nabla f\left(\boldsymbol{\theta}^{t, u};z^{t,u}\right)\right\|$ bounded}
By the definition of $\mathbf{r}^t$ in \eqref{trust}, it follows that 
% A tighter version:
\begin{align} \label{bound of r}
    & \mathbb{E}\left[\left\|\mathbf{r}^t\right\|^2\right] 
    % = \mathbb{E}\left[\left\|\boldsymbol{\theta}^{t, U}-\boldsymbol{\theta}^t\right\|^2\right] \nonumber \\
    \overset{(a)}{\leq} \mathbb{E}\left[\left\|\boldsymbol{\theta}^{t, U-1} \!-\! \boldsymbol{\theta}^t \!-\! \eta \nabla f\left(\boldsymbol{\theta}^{t, U-1}\right) \right\|^2\right] \!+\! \eta^2\sigma_L^2 \nonumber \\
    & \overset{(b)}{\leq} (1+\frac{1}{2U-1})\mathbb{E}\left[\left\|\boldsymbol{\theta}^{t, U-1}-\boldsymbol{\theta}^t \right\|^2\right] + \eta^2\sigma_L^2 \nonumber \\
    & + 2\eta^2 U \mathbb{E}\left[\left\|\nabla f\left(\boldsymbol{\theta}^{t,U-1}\right)-\nabla f\left(\boldsymbol{\theta}^{t}\right) + \nabla f\left(\boldsymbol{\theta}^{t}\right)\right\|^2\right]  \nonumber \\
     & \overset{(c)}{\leq} (1 + \frac{1}{2U-1}+4\eta^2 U L^2)\mathbb{E}\left[\left\|\boldsymbol{\theta}^{t, U-1}-\boldsymbol{\theta}^t \right\|^2\right] \nonumber \\
     & + 4\eta^2 U \mathbb{E}\left[\left\|\nabla f\left(\boldsymbol{\theta}^t\right)\right\|^2\right] + \eta^2 \sigma_L^2 \nonumber \\
     & \overset{(d)}{\leq}  \frac{U}{U \!-\! 1}\mathbb{E} \! \left[\left\|\boldsymbol{\theta}^{t, U\!-\!1} \!-\! \boldsymbol{\theta}^t \right\|^2\right] \!+\! 4\eta^2 U \mathbb{E}\!\left[\left\|\nabla f\left(\boldsymbol{\theta}^t\right)\right\|^2\right] \!+\! \eta^2 \sigma_L^2 \nonumber \\
     & \overset{(e)}{\leq} 8 \eta^2 U^2 \mathbb{E}\left[ \left\|\nabla f\left(\boldsymbol{\theta}^t\right)\right\|^2 \right] + 2\eta^2 U \sigma_L^2 ,
\end{align}
where $(a)$ is due to $\mathbb{E}\left[\|\boldsymbol{x}\|^2\right]=\mathbb{E}\left[\|\boldsymbol{x}-\mathbb{E}[\boldsymbol{x}]\|^2+\|\mathbb{E}[\boldsymbol{x}]\|^2\right]$ and \textbf{Assumption 2}; $(b)$ comes from Young's inequality; $(c)$ is based on the Cauchy-Schwarz inequality and \textbf{Assumption 1}; $(d)$ holds since $\frac{1}{2 U-1}+4\eta^2 U L^2 \leq \frac{1}{U-1} $ with $\eta \leq \frac{1}{3UL}$; and $(e)$ is due to $(1+ \frac{1}{U-1})^U \leq 2U$. 
It can also be verified that \eqref{bound of r} holds when $U=1$. For $1 \leq u \leq U$, we now have
\begin{align} \label{drift bound new}
    \mathbb{E} \! \left[ \left\| \boldsymbol{\theta}^{t, u} \!-\! \boldsymbol{\theta}^t \right\|^2 \right] \!\leq\! 8 \eta^2 U^2 \mathbb{E} 
 \! \left[ \left\|\nabla f\left(\boldsymbol{\theta}^t\right)\right\|^2 \right] \!+\! 2 \eta^2 U \sigma_L^2.
\end{align} 

Together with the upper bounds of $T_1$, $T_2$, and $T_3$ based on \eqref{second of T11 new}, \eqref{third of T11 new}, \eqref{second of T12 new}, \eqref{third of T12 new} and \eqref{drift bound new}, we can rewrite \eqref{one-step new} as
% \begin{align}
%     &\mathbb{E}\left[f\left(\boldsymbol{\theta}^{t+1}\right)\right] \leq f\left(\boldsymbol{\theta}^t\right) \nonumber \\
%     &- \eta U \left( \frac{c(1-w)(1-w-\beta^t)}{2} + \frac{b^t}{2} - 8\eta U -\frac{4 \eta^2 U^2 L^2}{ b^t S} \sum_{m \in \mathcal{S}^t \backslash \mathcal{A}^t}\left(b_m^t\right)^2 - \frac{\frac{4}{S} \sum_{m \in \mathcal{A}^t}\left(1-c+c \alpha_m^t\right)^2}{ c\left(w-\alpha^t\right)} \right. \nonumber \\
%     & \left. + 4cw(w-\alpha^t)L^2 \eta^2 U^2 + 4c(1-w)(1-w-\beta^t)L^2 \eta^2 U^2 \right) \left\|\nabla f\left(\boldsymbol{\theta}^t\right)\right\|^2 \nonumber \\
%     & + 2\eta^2 U \sigma_L^2 \left( \frac{c w\left(w-\alpha^t\right) \eta U L^2}{2} + \frac{c(1-w)\left(1-w-\beta^t\right) \eta U L^2}{2} + \frac{\frac{1}{S} \sum_{m \in \mathcal{A}^t}\left(1-c+c \alpha_m^t\right)^2}{2 c\left(w-\alpha^t\right) \eta U} + \frac{\eta U L^2}{2 b^t S} \sum_{m \in \mathcal{S}^t \backslash \mathcal{A}^t}\left(b_m^t\right)^2 + 1 \right) ,
% \end{align}
\begin{align} \label{one round convergence new}
    &\mathbb{E} \! \left[f\left(\boldsymbol{\theta}^{t+1}\right)\right] \leq f\left(\boldsymbol{\theta}^t\right) \!-\! \eta U D_1^t \mathbb{E} \! \left[ \left\|\nabla f\left(\boldsymbol{\theta}^t\right)\right\|^2 \right] \!+\! D_2^t ,
\end{align}
where $D_1^t = \frac{\kappa^t}{2\eta U} + \frac{b^t}{2} - 8\eta U - 4 (\delta^t+\kappa^t) \eta U L^2 -\frac{4 \eta^2 U^2 L^2}{ b^t S} \sum_{m \in \mathcal{B}^t }(b_m^t)^2 - 4\eta U \frac{ \| \sum_{m \in \mathcal{A}^t}(1- h_m^t) \|^2 }{ S^2 \delta^t}  $, and $D_2^t =2\eta^2 U \sigma_L^2 \Big( \frac{ (\delta^t + \kappa^t) L^2}{2} + \frac{  \left\| \sum_{m \in \mathcal{A}^t}\left(1- h_m^t\right) \right\|^2 }{2 S^2 \delta^t} + \frac{\eta U L^2}{2 b^t S} \sum_{m \in \mathcal{B}^t }\left(b_m^t\right)^2 + 1 \Big)$.
% \begin{align}
%    D_1^t & = \frac{\kappa^t}{2\eta U} + \frac{b^t}{2} - 8\eta U -\frac{4 \eta^2 U^2 L^2}{ b^t S} \sum_{m \in \mathcal{B}^t }\left(b_m^t\right)^2 \nonumber \\
%   & \frac{- 4\eta U \left\| \sum_{m \in \mathcal{A}^t}\left(1- h_m^t\right) \right\|^2 }{ S^2 \delta^t} - 4 (\delta^t+\kappa^t) \eta U L^2 , \nonumber   
% \end{align}
% and
% \begin{align}
%      D_2^t =&2\eta^2 U \sigma_L^2 \Big( \frac{ (\delta^t + \kappa^t) L^2}{2} + \frac{  \left\| \sum_{m \in \mathcal{A}^t}\left(1- h_m^t\right) \right\|^2 }{2 S^2 \delta^t}  \nonumber \\
%     &  + \frac{\eta U L^2}{2 b^t S} \sum_{m \in \mathcal{B}^t }\left(b_m^t\right)^2 + 1 \Big). \nonumber    
% \end{align}

Since the malicious workers can potentially upload any local updates, we have $x^t  \in [-w^t,w^t)$ and $\delta^t   \in (0, 2c^t w^t \eta U]$. Hence,
$y^t  \in (0, 1-w^t ]$,
$(b_m^t)^2  \in [ 0, q^2 ]$,
$b^t  \in [ (1-w^t)(1-c^t)p,(1-w^t)q ]$,
and
$\kappa^t  \in [0,c^t(1-w^t) \eta U)$ under \textbf{Assumptions 3} and \textbf{4}.
Note that $ \left\| \sum_{m \in \mathcal{A}^t}\left(1- h_m^t\right) \right\|^2=0, \forall t$, when $c^t = \frac{w^t}{w^t - x^t} \in [\frac{1}{2},1]$. With the stpdfize $\eta \leq \frac{(1-w^t)(1-c^t)p}{32U+2U L^2(p+2q^2)}, \forall t$, and the above ranges of $\delta^t$, $\kappa^t$, $b^t$ and $b_m^t$, we can prove that there is a positive constant $\chi^t$ satisfying
\begin{align} \label{chi}
    \chi^t
    & <  \frac{(1-w^t)(1-c^t)p}{2} - 8\eta U - \frac{4\eta^2 U^2 L^2}{(1-c^t)p} q^2 \nonumber \\
    & 
    % - \frac{4(1-c+c\alpha)^2}{c} 
    - 4c^t (1+ w^t) L^2 \eta^2 U^2 < D_1^t , \ \forall t,
\end{align}
% where $\eta \leq \frac{8(1-w)(1-c)p}{96U+U L^2(16q^2+p)}$. 
% \begin{align} \label{chi}
%     \chi
%     & <  \frac{(1-w)(1-c)p}{2} - 8\eta U - \frac{4\eta^2 U^2 L^2}{(1-c)p}(2-c)^2 q^2 \nonumber \\
%     & 
%     % - \frac{4(1-c+c\alpha)^2}{c} 
%     - 4c w^2 L^2 \eta^2 U^2 - 4c (1-w)^2 L^2 \eta^2 U^2 < D_1^t , \ \forall t,
% \end{align}
% \begin{align}
%     \gamma 
%     & <  \frac{(1-w)(1-c)p}{2} - 8\eta U - \frac{4\eta^2 U^2 L^2}{(1-c)p}(2-c)^2 q^2 \nonumber \\
%     & - {\color{red}\frac{4(1-c+c\alpha)^2}{c}} - 4c w^2 L^2 \eta^2 U^2 - 4c (1-w)^2 L^2 \eta^2 U^2 < C_1^t
% \end{align}
% since $\exists c \in [0,1]$, for $ \forall \alpha \in [-1,0] $, $\frac{4(1-c+c\alpha)^2}{c} =0$.
Similarly, $D_2^t$ can be upper bounded as
% \begin{align}
%     D_2 
%     &:= 2\eta^2 U \sigma_L^2 \left( \frac{c (w^2+(1-w)^2)  \eta U L^2}{2}  +\frac{ 1 }{2 c  \eta U} \right. \nonumber \\
%     & \left. + \frac{ \eta U L^2}{2(1-c) p}(2-c)^2 q^2 + 1 \right) \geq D_2^t  , \ \forall t.
% \end{align}
\begin{align} \label{D_2^t}
    D_2^t \!\leq\! 2\eta^2 U \sigma_L^2 \! \left( \! \frac{(1 \!+\! w^t) c^t  \eta U L^2}{2} \!+\! \frac{ \eta U L^2 q^2}{2(1 \!-\! c^t) p}  \!+\! 1 \! \right) .
\end{align}

Define $\chi = \max_{t} \chi^t$, $D_2 = \max_{t} D_2^t$, $\tilde{c}=\max _t c^t$ and $\tilde{w}=\max _t w^t$. By reorganizing \eqref{one round convergence new} and summing it over $t=0,\cdots,T-1$, we finally obtain
% Let $\tilde{c} = \max_t c^t $ and $\tilde{w} = \max_t w^t $, and substitute $c^t$ and $w^t$ in \eqref{chi} and \eqref{D_2^t} with $\tilde{c}$ and $\tilde{w}$ to acquire $\chi$ and $D_2$,
% % $\chi = \min_{t} \chi^t$, $D_2 = \max_t D_2^t$, $\tilde{c} = \max_t c^t $, $\tilde{w} = \max_t w^t $ 
% rearranging \eqref{one round convergence new} and summing from $t=0, \ldots, T-1$, we finally obtain
\begin{align} \label{brdrag final conv}
    \frac{1}{T} \sum_{t=0}^{T-1} \mathbb{E} \left[ \left\|\nabla f\left(\boldsymbol{\theta}^t\right)\right\|^2 \right] \leq \frac{f\left(\boldsymbol{\theta}^0\right)-f^*}{\chi \eta U T} + W .
\end{align}
Here, $\eta \!\leq \!\frac{(1-\tilde{w})(1-\tilde{c})p}{32U\!+\!2U L^2(p\!+\!2q^2)}$; $W\!=\!\frac{D_2}{\chi  \eta U}$. 

% lr: $  \frac{\kappa^t}{2\eta U}  - 8\eta U - 4 (\delta^t+\kappa^t) \eta U L^2 -\frac{4 \eta^2 U^2 L^2}{ b^t S} \sum_{m \in \mathcal{B}^t }(b_m^t)^2  \geq 0  $
Since the PS may be unaware of the malicious workers, $c^t = \frac{w^t}{w^t-x^t}$ may not hold. Nevertheless, with the stpdfize $\eta \leq \frac{2c^t(1-c^t)(1-w^t)(1-w^t-y^t)p}{32 p U+ c^t U L^2(p^2+q^2)} , \forall t$, if $c^t$ satisfies
\begin{align}
    \left( (\rho^t)^2 \!-\! 7w^t \!+\! 8x^t \!-\! 1 \right)(c^t)^2 \!+\! (15w^t \!+\! 1)c^t  \geq \frac{8(w^t)^2}{w^t-x^t} ,
\end{align}
we can prove that there also exists a positive constant $\chi^t \leq D_1^t$, and $D_2^t$ can be upper bounded as 
\begin{align} \label{D_2^t new}
    D_2^t \!\leq\! 2 \eta^2 U \sigma_L^2 \! \left( \! \frac{(1 \!+\! w^t) c^t  \eta U L^2}{2} \!\!+\!\! \frac{ \eta U L^2 q^2}{2(1 \!-\! c^t) p}  \!\!+\!\! H^t \!\!+\!\! 1 \! \right) ,
\end{align}
where $H^t = \frac{  \left\| \sum_{m \in \mathcal{A}^t}\left(1- h_m^t\right) \right\|^2 }{2 S^2 \delta^t}$ is bounded. Then, \eqref{brdrag final conv} can also be attained. 
This proof concludes.

\bibliographystyle{IEEEtran}
\bibliography{bib}

@ARTICLE{9069945,
  author={Wei, Kang and Li, Jun and Ding, Ming and Ma, Chuan and Yang, Howard H. and Farokhi, Farhad and Jin, Shi and Quek, Tony Q. S. and Vincent Poor, H.},
  journal={IEEE Trans. Inf. Forensics Secur.}, 
  title={Federated Learning With Differential Privacy: Algorithms and Performance Analysis}, 
  year={2020},
  volume={15},
  number={},
  pages={3454-3469},
  keywords={Convergence;Privacy;Servers;Training;Analytical models;Distributed databases;Federated learning;differential privacy;convergence performance;information leakage;client selection},
  doi={10.1109/TIFS.2020.2988575}}

@article{zhao2018federated,
  title={Federated learning with non-iid data},
  author={Zhao, Yue and Li, Meng and Lai, Liangzhen and Suda, Naveen and Civin, Damon and Chandra, Vikas},
  journal={\rm arXiv preprint arXiv:1806.00582},
  year={2018}
}

@article{so2020byzantine,
  title={Byzantine-resilient secure federated learning},
  author={So, Jinhyun and G{\"u}ler, Ba{\c{s}}ak and Avestimehr, A Salman},
  journal={IEEE J. Sel. Areas Commun.},
  volume={39},
  number={7},
  pages={2168--2181},
  year={2020}
}

@inproceedings{cao2020fltrust,
  title={F{LT}rust: Byzantine-robust federated learning via trust bootstrapping},
  author={Cao, Xiaoyu and Fang, Minghong and Liu, Jia and Gong, Neil Zhenqiang},
  booktitle={Proc. Netw. Distrib. Syst. Secur. Symp. (NDSS)},
  pages={2938--2948},
  year={2020}
}

@inproceedings{fang2020local,
  title={Local model poisoning attacks to {Byzantine}-Robust federated learning},
  author={Fang, Minghong and Cao, Xiaoyu and Jia, Jinyuan and Gong, Neil},
  booktitle={Proc. USENIX Secur. Symp. (USENIX Secur.)},
  pages={1605--1622},
  year={2020}
}

@inproceedings{blanchard2017machine,
  title={Machine learning with adversaries: Byzantine tolerant gradient descent},
  author={Blanchard, Peva and El Mhamdi, El Mahdi and Guerraoui, Rachid and Stainer, Julien},
  booktitle={Proc. Neural Inf. Process. Syst. (NeurIPS)},
  volume={30},
  year={2017}
}

@inproceedings{yin2018byzantine,
  title={Byzantine-robust distributed learning: Towards optimal statistical rates},
  author={Yin, Dong and Chen, Yudong and Kannan, Ramchandran and Bartlett, Peter},
  booktitle={Proc. Int. Conf. Mach. Learn. (ICML)},
  pages={5650--5659},
  year={2018}
}

@inproceedings{pathak2020fedsplit,
  title={Fed{S}plit: An algorithmic framework for fast federated optimization},
  author={Pathak, Reese and Wainwright, Martin J},
  booktitle={Proc. Neural Inf. Process. Syst. (NeurIPS)},
  volume={33},
  pages={7057--7066},
  year={2020},
}

@inproceedings{li2019feddane,
  title={Fed{DANE}: A federated {Newton}-type method},
  author={Li, Tian and Sahu, Anit Kumar and Zaheer, Manzil and Sanjabi, Maziar and Talwalkar, Ameet and Smithy, Virginia},
  booktitle={Proc. Asilomar Conf. Signals, Syst., and Comput.},
  pages={1227--1231},
  year={2019}
}

@article{karimireddy2020mime,
  title={{MIME}: Mimicking centralized stochastic algorithms in federated learning},
  author={Karimireddy, Sai Praneeth and Jaggi, Martin and Kale, Satyen and Mohri, Mehryar and Reddi, Sashank J and Stich, Sebastian U and Suresh, Ananda Theertha},
  journal={\rm arXiv preprint arXiv:2008.03606},
  year={2020}
}

@inproceedings{
acar2021federated,
title={Federated Learning Based on Dynamic Regularization},
author={Durmus Alp Emre Acar and Yue Zhao and Ramon Matas and Matthew Mattina and Paul Whatmough and Venkatesh Saligrama},
booktitle={Proc Int. Conf. Learn. Representations (ICLR)},
year={2021}
}

@inproceedings{li2020federated,
  title={Federated optimization in heterogeneous networks},
  author={Li, Tian and Sahu, Anit Kumar and Zaheer, Manzil and Sanjabi, Maziar and Talwalkar, Ameet and Smith, Virginia},
  booktitle={Proc. Mach. Learn. Syst. (MLSys)},
  volume={2},
  pages={429--450},
  year={2020}
}

@inproceedings{gao2022feddc,
  title={{FedDC}: Federated learning with non-iid data via local drift decoupling and correction},
  author={Gao, Liang and Fu, Huazhu and Li, Li and Chen, Yingwen and Xu, Ming and Xu, Cheng-Zhong},
  booktitle={Proc. IEEE/CVF Conf. Comput. Vis. Pattern Recognit. (CVPR)},
  pages={10112--10121},
  year={2022}
}

@inproceedings{defazio2014saga,
  title={{SAGA}: A fast incremental gradient method with support for non-strongly convex composite objectives},
  author={Defazio, Aaron and Bach, Francis and Lacoste-Julien, Simon},
  booktitle={Proc. Neural Inf. Process. Syst. (NeurIPS)},
  volume={27},
  year={2014}
}

@inproceedings{
li2019convergence,
title={On the Convergence of {FedAvg} on Non-IID Data},
author={Xiang Li and Kaixuan Huang and Wenhao Yang and Shusen Wang and Zhihua Zhang},
booktitle={Proc Int. Conf. Learn. Representations (ICLR)},
year={2020}
}

@inproceedings{malinovskiy2020local,
  title={From local {SGD} to local fixed-point methods for federated learning},
  author={Malinovskiy, Grigory and Kovalev, Dmitry and Gasanov, Elnur and Condat, Laurent and Richtarik, Peter},
  booktitle={Proc. Int. Conf. Mach. Learn. (ICML)},
  pages={6692--6701},
  year={2020}
}

@inproceedings{mitra2021linear,
  title={Linear convergence in federated learning: Tackling client heterogeneity and sparse gradients},
  author={Mitra, Aritra and Jaafar, Rayana and Pappas, George J and Hassani, Hamed},
  booktitle={Proc. Neural Inf. Process. Syst. (NeurIPS)},
  volume={34},
  pages={14606--14619},
  year={2021}
}

@article{liang2019variance,
  title={Variance reduced local {SGD} with lower communication complexity},
  author={Liang, Xianfeng and Shen, Shuheng and Liu, Jingchang and Pan, Zhen and Chen, Enhong and Cheng, Yifei},
  journal={\rm arXiv preprint arXiv:1912.12844},
  year={2019}
}

@inproceedings{karimireddy2020scaffold,
  title={{SCAFFOLD}: Stochastic controlled averaging for federated learning},
  author={Karimireddy, Sai Praneeth and Kale, Satyen and Mohri, Mehryar and Reddi, Sashank and Stich, Sebastian and Suresh, Ananda Theertha},
  booktitle={Proc. Int. Conf. Mach. Learn. (ICML)},
  pages={5132--5143},
  year={2020}
}

@inproceedings{cohen2017emnist,
  title={{EMNIST}: Extending {MNIST} to handwritten letters},
  author={Cohen, Gregory and Afshar, Saeed and Tapson, Jonathan and Van Schaik, Andre},
  booktitle={Proc. Int. Joint Conf. Neural Netw. (IJCNN)},
  pages={2921--2926},
  year={2017}
}

@article{krizhevsky2009learning,
  title={Learning multiple layers of features from tiny images},
  author={Krizhevsky, Alex and Hinton, Geoffrey},
  year={2009}
}

@inproceedings{
reddi2020adaptive,
title={Adaptive Federated Optimization},
author={Sashank J. Reddi and Zachary Charles and Manzil Zaheer and Zachary Garrett and Keith Rush and Jakub Kone{\v{c}}n{\'y} and Sanjiv Kumar and Hugh Brendan McMahan},
booktitle={Proc. Int. Conf. Learn. Representations (ICLR)},
year={2021}
}

@InProceedings{pmlr-v162-qu22a,
  title = 	 {Generalized Federated Learning via Sharpness Aware Minimization},
  author =       {Qu, Zhe and Li, Xingyu and Duan, Rui and Liu, Yao and Tang, Bo and Lu, Zhuo},
  booktitle = 	 {Proc. Int. Conf. Mach. Learn. (ICML)},
  pages = 	 {18250--18280},
  year = 	 {2022},
  volume = 	 {162},
  month = 	 {17--23 Jul},
  publisher =    {PMLR}
}

@inproceedings{
jhunjhunwala2023fedexp,
title={Fed{E}x{P}: Speeding Up Federated Averaging via Extrapolation},
author={Divyansh Jhunjhunwala and Shiqiang Wang and Gauri Joshi},
booktitle={Proc. Int. Conf. Learn. Representations (ICLR)},
year={2023},
}

@InProceedings{Kim_2024_CVPR,
    author    = {Kim, Geeho and Kim, Jinkyu and Han, Bohyung},
    title     = {Communication-Efficient Federated Learning with Accelerated Client Gradient},
    booktitle = {Proc. IEEE/CVF Conf. Comput. Vis. Pattern Recognit. (CVPR)},
    month     = {June},
    year      = {2024},
    pages     = {12385-12394}
}

@article{zuo2024byzantine,
  title={Byzantine-resilient federated learning with adaptivity to data heterogeneity},
  author={Zuo, Shiyuan and Yan, Xingrun and Fan, Rongfei and Hu, Han and Shan, Hangguan and Quek, Tony QS},
  journal={\rm \rm arXiv preprint arXiv:2403.13374},
  year={2024}
}

@ARTICLE{9721118,
  author={Pillutla, Krishna and Kakade, Sham M. and Harchaoui, Zaid},
  journal={IEEE Trans. Signal Process.}, 
  title={Robust Aggregation for Federated Learning}, 
  year={2022},
  volume={70},
  number={},
  pages={1142-1154},
}

@InProceedings{pmlr-v54-mcmahan17a,
  title = 	 {{Communication-efficient learning of deep networks from decentralized data}},
  author = 	 {McMahan, Brendan and Moore, Eider and Ramage, Daniel and Hampson, Seth and Arcas, Blaise Aguera y},
  booktitle = 	 {Proc. Int. Conf. Artif. Intell. Stat. (AISTATS)},
  pages = 	 {1273--1282},
  year = 	 {2017},
  volume = 	 {54},
  publisher =    {PMLR},
}

@inproceedings{NEURIPS2023_e095c0a3,
 author = {Li, Bingcong and Giannakis, Georgios},
 booktitle = {Proc. Neural Inf. Process. Syst. (NeurIPS)},
 pages = {70861--70879},
 title = {Enhancing Sharpness-Aware Optimization Through Variance Suppression},
 volume = {36},
 year = {2023}
}

@ARTICLE{9947081,
  author={Zhang, Zhuangzhuang and Wu, Libing and Ma, Chuanguo and Li, Jianxin and Wang, Jing and Wang, Qian and Yu, Shui},
  journal={IEEE Trans. Inf. Forensics Secur.}, 
  title={{LSFL}: A Lightweight and Secure Federated Learning Scheme for Edge Computing}, 
  year={2023},
  volume={18},
  number={},
  pages={365-379},
  keywords={Training;Computational modeling;Privacy;Servers;Data models;Robustness;Edge computing;Federated learning;edge computing;privacy-preserving;Byzantine-robustness;data privacy},
  doi={10.1109/TIFS.2022.3221899}}

@ARTICLE{9614992,
  author={Li, Shenghui and Ngai, Edith and Voigt, Thiemo},
  journal={IEEE Trans. Ind. Informat.}, 
  title={Byzantine-Robust Aggregation in Federated Learning Empowered Industrial {IoT}}, 
  year={2023},
  volume={19},
  number={2},
  pages={1165-1175},
  keywords={Data models;Industrial Internet of Things;Training;Servers;Robustness;Collaborative work;Data privacy;Byzantine robust;federated learning (FL);geometric median (GM);Industrial Internet of Things (IIoTs);security and privacy},
  doi={10.1109/TII.2021.3128164}}

@inproceedings{li2019rsa,
  title={{RSA}: Byzantine-robust stochastic aggregation methods for distributed learning from heterogeneous datasets},
  author={Li, Liping and Xu, Wei and Chen, Tianyi and Giannakis, Georgios B and Ling, Qing},
  booktitle={Proc. Conf. Artif. Intell. (AAAI)},
  volume={33},
  number={01},
  pages={1544--1551},
  year={2019}
}

@ARTICLE{10054157,
  author={Jiang, Yifeng and Zhang, Weiwen and Chen, Yanxi},
  journal={IEEE Trans. Inf. Forensics Secur.}, 
  title={Data Quality Detection Mechanism Against Label Flipping Attacks in Federated Learning}, 
  year={2023},
  volume={18},
  number={},
  pages={1625-1637},
  keywords={Data models;Training;Computational modeling;Servers;Data privacy;Training data;Data integrity;Federated learning;deep learning;label flipping;data poisoning},
  doi={10.1109/TIFS.2023.3249568}}

@ARTICLE{9153949,
  author={Wu, Zhaoxian and Ling, Qing and Chen, Tianyi and Giannakis, Georgios B.},
  journal={IEEE Trans. Signal Process.}, 
  title={Federated Variance-Reduced Stochastic Gradient Descent With Robustness to {Byzantine} Attacks}, 
  year={2020},
  volume={68},
  number={},
  pages={4583-4596},
  keywords={Robustness;Stochastic processes;Distributed databases;Optimization;Convergence;Noise measurement;Task analysis;Distributed finite-sum optimization;Byzantine attacks;gradient noise;variance reduction},
  doi={10.1109/TSP.2020.3012952}}

@ARTICLE{10100920,
  author={Zhu, Heng and Ling, Qing},
  journal={IEEE Trans. Signal Inf. Process. Networks}, 
  title={Byzantine-Robust Distributed Learning With Compression}, 
  year={2023},
  volume={9},
  number={},
  pages={280-294},
  keywords={Stochastic processes;Distance learning;Computer aided instruction;Compressors;Convergence;Robustness;Signal processing algorithms;Distributed learning;communication efficiency;Byzantine-robustness;gradient compression},
  doi={10.1109/TSIPN.2023.3265892}}

@article{minsker2015geometric,
 author = {STANISLAV MINSKER},
 journal = {Bernoulli},
 number = {4},
 pages = {2308--2335},
 publisher = {[Bernoulli Society for Mathematical Statistics and Probability, International Statistical Institute (ISI)]},
 title = {Geometric median and robust estimation in Banach spaces},
 urldate = {2025-09-05},
 volume = {21},
 year = {2015}
}

@article{beck2015weiszfeld,
  title={Weiszfeld’s method: Old and new results},
  author={Beck, Amir and Sabach, Shoham},
  journal={J. Optim. Theory Appl.},
  volume={164},
  number={1},
  pages={1--40},
  year={2015},
  publisher={Springer}
}

@article{10.14778/3603581.3603583,
author = {Kato, Fumiyuki and Cao, Yang and Yoshikawa, Masatoshi},
title = {Olive: Oblivious Federated Learning on Trusted Execution Environment against the Risk of Sparsification},
year = {2023},
issue_date = {June 2023},
publisher = {VLDB Endowment},
volume = {16},
number = {10},
issn = {2150-8097},
journal = {Proc. VLDB Endow.},
month = jun,
pages = {2404–2417},
numpages = {14}
}

@ARTICLE{10574838,
  author={Yazdinejad, Abbas and Dehghantanha, Ali and Karimipour, Hadis and Srivastava, Gautam and Parizi, Reza M.},
  journal={IEEE Trans. Inf. Forensics Secur.}, 
  title={A Robust Privacy-Preserving Federated Learning Model Against Model Poisoning Attacks}, 
  year={2024},
  volume={19},
  number={},
  pages={6693-6708},
  keywords={Computational modeling;Cryptography;Data models;Privacy;Data privacy;Accuracy;Analytical models;Privacy-preserving;FL;model poisoning;homomorphic encryption},
  doi={10.1109/TIFS.2024.3420126}}

@article{zagoruyko2016wide,
  title={Wide residual networks},
  author={Zagoruyko, Sergey and Komodakis, Nikos},
  journal={\rm arXiv preprint arXiv:1605.07146},
  year={2016}
}

@article{geng2023better,
  author  = {Guangchao Geng and Tingting Cai and Zhiyong Yang},
  title   = {Better Safe Than Sorry: Constructing Byzantine-Robust Federated Learning with Synthesized Trust},
  journal = {Electronics},
  volume  = {12},
  number  = {13},
  pages   = {2926},
  year    = {2023},
  doi     = {10.3390/electronics12132926}
}

@inproceedings{mai2023server,
  author    = {Van Sy Mai and Richard J. La and Tao Zhang and Yuxuan Huang and Abdella Battou},
  title     = {Federated Learning With Server Learning for Non-IID Data},
  booktitle = {Proc. Conf. Inf. Sci. Syst. (CISS)},
  pages     = {1--6},
  year      = {2023},
  doi       = {10.1109/CISS56502.2023.10089643}
}

@article{northcutt2021confidentlearning,
  author  = {Curtis G. Northcutt and Lu Jiang and Isaac L. Chuang},
  title   = {Confident Learning: Estimating Uncertainty in Dataset Labels},
  journal = {J. Artif. Intell. Res.},
  volume  = {70},
  pages   = {1373--1411},
  year    = {2021},
  doi     = {10.1613/jair.1.12125}
}

@inproceedings{shejwalkar2021manipulating,
  author    = {Virat Shejwalkar and Amir Houmansadr},
  title     = {Manipulating the Byzantine: Optimizing Model Poisoning Attacks and Defenses for Federated Learning},
  booktitle = {Proc. Network Distrib. Syst. Secur. Symp. (NDSS)},
  year      = {2021},
  doi       = {10.14722/ndss.2021.24498}
}

@inproceedings{wen2025fltg,
  title={{FLTG}: Byzantine-Robust Federated Learning via Angle-Based Defense and Non-IID-Aware Weighting},
  author={Wen, Yanhua and Ai, Lu and Liu, Gang and Li, Chuang and Wei, Jianhao},
  booktitle={Proc. Int. Conf. Blockchain Trustworthy Syst. (BlockSys)},
  pages={430--443},
  year={2025},
  organization={Springer}
}

\end{document}